\def\cqedsymbol{\ifmmode$\lrcorner$\else{\unskip\nobreak\hfil
\penalty50\hskip1em\null\nobreak\hfil$\lrcorner$
\parfillskip=0pt\finalhyphendemerits=0\endgraf}\fi} 
\newcommand{\cqed}{\renewcommand{\qed}{\cqedsymbol}}
\newcommand{\Pot}{\Phi}
\newcommand{\adh}{\sigma}
\newcommand{\Oh}{\mathcal{O}}
\newcommand{\wh}[1]{\widehat{#1}}
\renewcommand{\leq}{\leqslant}
\renewcommand{\le}{\leqslant}
\renewcommand{\geq}{\geqslant}
\renewcommand{\ge}{\geqslant}
\newcommand{\jmaj}{\mathfrak{j}}
\newcommand{\commonthanks}{%
This research is a part of projects that have received funding from the European Research Council (ERC)
under the European Union's Horizon 2020 research and innovation programme
Grant Agreements no.~677651 (M. Cygan and Micha\l{} Pilipczuk) and 714704 (Marcin Pilipczuk).
Daniel Lokshtanov is supported by the European Research Council (ERC) under the European Union’s Horizon 2020 research and innovation programme (grant no. 715744), and the United States–Israel Binational Science Foundation grant no. 2018302.
Saket Saurabh has received funding from the European Research Council
(ERC) under the European Union’s Horizon 2020 research
and innovation programme (grant agreement No 819416), and
Swarnajayanti Fellowship (No DST/SJF/MSA01/2017-18).}
\newcommand{\pawelthanks}{Polish National Science Centre grant UMO-2015/19/N/ST6/03015}
\newcommand{\marseille}{The foundations of this paper have been developed at International Workshop on Graph Decompositions, held at CIRM Marseille in January 2015.}
\newtheorem{lemma}{Lemma}[section]
\newtheorem{theorem}[lemma]{Theorem}
\newtheorem{claim}[lemma]{Claim}
\theoremstyle{definition}
\newtheorem{definition}[lemma]{Definition}
\title{Randomized contractions meet lean decompositions\thanks{%
\commonthanks{}
The research of P. Komosa is supported by \pawelthanks{}.
\marseille{}}}
\author{ 
  Marek Cygan\thanks{
    Institute of Informatics, University of Warsaw, Poland, \texttt{cygan@mimuw.edu.pl}.
  }
  \and
  Pawe\l{} Komosa\thanks{
    Institute of Informatics, University of Warsaw, Poland, \texttt{p.komosa@mimuw.edu.pl}.
  }
  \and
  Daniel Lokshtanov\thanks{
    University of California, Santa Barbara, USA, \texttt{daniello@ucsb.edu}.
  }
  \and
  Marcin Pilipczuk\thanks{
    Institute of Informatics, University of Warsaw, Poland, \texttt{marcin.pilipczuk@mimuw.edu.pl}.
  }
  \and 
  Micha\l{} Pilipczuk\thanks{
    Institute of Informatics, University of Warsaw, Poland, \texttt{michal.pilipczuk@mimuw.edu.pl}.
  }
  \and
  Saket Saurabh\thanks{
    Department of Informatics, University of Bergen, Norway and Institute of Mathematical Sciences, India, \texttt{saket@imsc.res.in}.
  }
  \and
  Magnus Wahlstr\"{o}m\thanks{
    Royal Holloway, University of London, United Kingdom, \texttt{Magnus.Wahlstrom@rhul.ac.uk}.
  }
}
\date{}
\begin{document}

\maketitle

\begin{textblock}{20}(0, 12.0)
\includegraphics[width=40px]{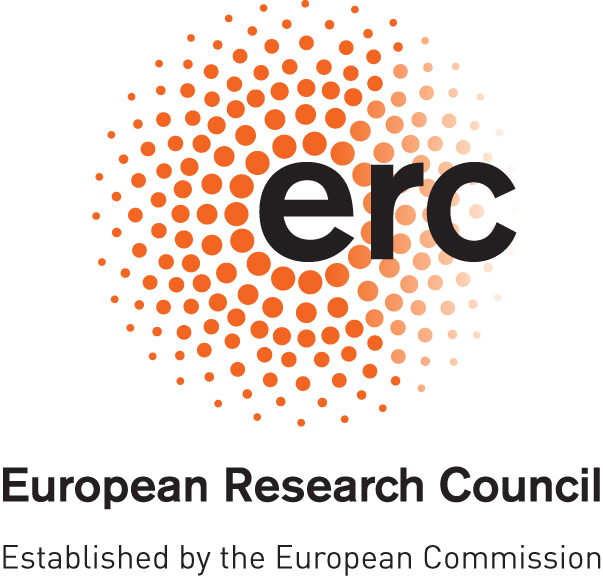}%
\end{textblock}
\begin{textblock}{20}(-0.25, 12.4)
\includegraphics[width=60px]{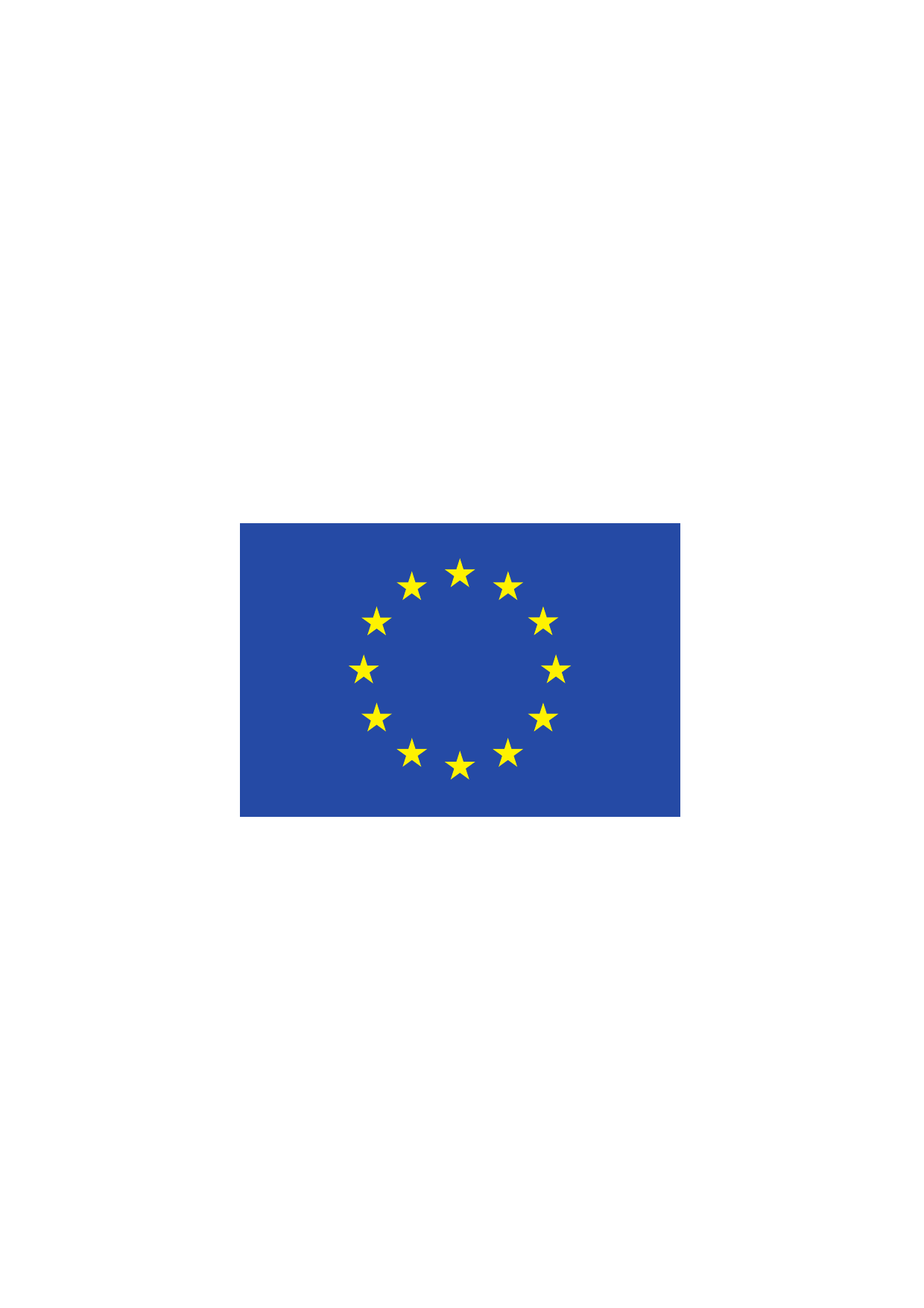}%
\end{textblock}

\begin{abstract}
We show an algorithm that, given an $n$-vertex graph $G$ and a parameter $k$, 
in time $2^{\Oh(k \log k)} n^{\Oh(1)}$ finds a tree decomposition of $G$ with the
following properties:
\begin{itemize}
\item every adhesion of the tree decomposition is of size at most $k$, and
\item every bag of the tree decomposition is $(i,i)$-unbreakable in $G$ for every $1 \leq i \leq k$.
\end{itemize}
Here, a set $X \subseteq V(G)$ is $(a,b)$-unbreakable in $G$ if for every separation $(A,B)$
of order at most $b$ in $G$, we have $|A \cap X| \leq a$ or $|B \cap X| \leq a$. 

The resulting tree decomposition has arguably best possible adhesion size bounds
and unbreakability guarantees. Furthermore, the parametric factor in the running time bound
is significantly smaller than in previous similar constructions. 
These improvements allow us to present parameterized algorithms for
\textsc{Minimum Bisection}, \textsc{Steiner Cut}, and \textsc{Steiner Multicut}
with improved parameteric factor in the running time bound. 

The main technical insight is to adapt the notion of \emph{lean decompositions} of
Thomas and the subsequent 
construction algorithm of Bellenbaum and Diestel to the parameterized setting.




\end{abstract}

\section{Introduction}

Since the pioneering work of Marx~\cite{Marx06}, 
the study of graph separation problems has been a large and lively subarea of parameterized complexity. 
It led to the development of many interesting algorithmic techniques, including
important separators and shadow removal~\cite{ChenLLOR08,dir-mwc,LokshtanovM13,MarxR14,RazgonO09},
branching algorithms based on half-integral relaxations~\cite{mwc-a-lp,sylvain,Iwata17,magnus,IwataYY17},
matroid-based algorithms for preprocessing~\cite{ms2,ms1}, the treewidth reduction technique~\cite{MarxOR13},
and, what is the most relevant for this work, the framework of randomized contractions~\cite{randcontr,KT11}.

The work of Marx~\cite{Marx06} left a number of fundamental questions open, including the parameterized complexity
of the \textsc{$p$-Way Cut} problem: given a graph $G$ and integers $p$ and $k$, can one delete at most $k$
edges from $G$ to obtain a graph with at least $p$ connected components? 
We remark that it is easy to reduce the problem to the case when $G$ is connected and $p \leq k+1$.
Marx proved $\mathsf{W}[1]$-hardness of the vertex-deletion variant of the problem, but the question about fixed-parameter tractability of the edge-deletion
variant remained open until Kawarabayashi and Thorup settled it in affirmative in 2011~\cite{KT11}.

In their algorithm, Kawarabayashi and Thorup introduced a useful recursive scheme, based on so-called \emph{$(q,k)$-good edge separations}.
For a graph $G$, an \emph{edge cut} is a pair $A, B \subseteq V(G)$ such that $A \cup B = V(G)$ and $A \cap B = \emptyset$. 
The {\em{order}} of an edge cut $(A,B)$ is $|E(A,B)|$.
An edge cut $(A,B)$ is a \emph{$(q,k)$-good edge separation} if the order of $(A,B)$ is at most $k$,
both $A$ and $B$ are connected, and $|A|,|B| > q$. 
In short, Kawarabayashi and Thorup showed that an intricate recursive step is applicable in the \textsc{$p$-Way Cut} problem if one finds a $(q,k)$-good edge separation 
for some $q = 2^{\Oh(k \log k)}$. 
%
Hence, it remains to show how to find such an edge cut $(A,B)$ and how the algorithm should work in the absence of such a cut.
The algorithm of Kawarayashi and Thorup employs involved arguments originating in the graph minor theory both to deal with this case
and to find the desired edge cut $(A,B)$ for recursion. These arguments, unfortunately, imply a large overhead in the 
running time bound and are problem-specific.

A year later, Chitnis et al.~\cite{randcontr-FOCS,randcontr} replaced the arguments based on the graph minor theory with steps
based on \emph{color coding}: a simple yet powerful algorithmic technique introduced by Alon, Yuster, and Zwick in 1995~\cite{AlonYZ95}.
This approach is both arguably simpler and leads to better running time bounds. 
Furthermore, the general methodology of~\cite{randcontr} --- dubbed \emph{randomized contractions} --- turns out to be more generic
and can be applied to solve such problems as \textsc{Unique Label Cover}, \textsc{Multiway Cut-Uncut}~\cite{randcontr},
or \textsc{Steiner Multicut}~\cite{BringmannHML16}.
All the abovementioned algorithms have running time bounds of the order of $2^{\mathrm{poly}(k)} \mathrm{poly}(n)$
with both notions of $\mathrm{poly}$ hiding quadratic or cubic polynomials.
Later, Lokshtanov et al.~\cite{LokshtanovR0Z18} showed how the idea of randomized contractions can be applied to give a reduction 
for the {\sc{CMSO}} model-checking problem from general graphs to highly connected graphs.

While powerful, the randomized contractions technique seemed to be one step short of providing 
a parameterized algorithm for the \textsc{Minimum Bisection} problem, which was an open problem at that time. 
In this problem, given a graph $G$ and an integer $k$, one asks for an edge cut $(C,D)$ of order at most $k$
such that $|C|=|D|$. 
The only step that fails is the recursive step of Kawarabayashi and Thorup itself.
A subset of the current authors in 2014~\cite{minbisection-STOC} circumvented this issue 
by replacing the recursive strategy with a dynamic programming algorithm on an appropriately constructed tree decomposition.

To properly describe this contribution, we need some more definitions on (vertex) separations.
A pair $(A,B)$ of vertex subsets in a graph $G$ is a {\em{separation}} 
if $A\cup B=V(G)$ and there is no edge with one endpoint in $A\setminus B$ and the other in $B\setminus A$; the {\em{order}} of the separation $(A,B)$ is $|A\cap B|$.

\begin{definition}[unbreakability]
Let $G$ be a graph.
A vertex subset $X \subseteq V(G)$ is \emph{$(q,k)$-unbreakable} if every separation $(A,B)$ of order at most $k$
satisfies $|A \cap X| \leq q$ or $|B \cap X| \leq q$.
A vertex subset $Y \subseteq V(G)$ is \emph{$(q,k)$-edge-unbreakable} if every edge cut $(A,B)$ of order at most $k$
satisfies $|A \cap Y| \leq q$ or $|B \cap Y| \leq q$.
\end{definition}
\noindent Observe that every set that is $(q,k)$-unbreakable is also $(q,k)$-edge-unbreakable.

The definition of a $(q,k)$-unbreakable set is stronger than the one used in~\cite{minbisection-STOC} (and in a previous
  version of the present paper). In that definition, which we may call \emph{weakly $(q,k)$-unbreakable},
  it is only required that $|(A \setminus B) \cap X| \leq q$ or $|(B \setminus A) \cap X| \leq q$.
  \footnote{To illustrate the difference, consider the case where $X=V(G)$. Then testing whether $X$ is 
  $(k,k)$-unbreakable reduces to testing whether $G$ is $(k+1)$-connected, whereas testing 
  whether $X$ is weakly $(k,k)$-unbreakable is as hard as the \textsc{Hall Set} problem, and 
  thus W[1]-hard~\cite{GaspersKOSS12}. The present version is also more suitable for Theorem~\ref{thm:decomp}.}

Let $(T,\beta)$ be a tree decomposition\footnote{In our notation, $(T,\beta)$ is a tree decomposition with $T$ being a rooted tree and $\beta(t)$ the bag at node $t \in V(T)$; full definition can be found in Section~\ref{s:prelim}.} of $G$. We define an \emph{adhesion} of $e=tt' \in E(T)$ to be the set $\adh(e) = \beta(t) \cap \beta(t')$, and an \emph{adhesion} of $t \in V(T)$ to be $\adh(t) = \adh(\{t,\text{parent}(t)\})$ or $\adh(t) = \emptyset$ if the parent of $t$ does not exist.

The main technical contribution of~\cite{minbisection-STOC} is an algorithm that, given a graph $G$
and an integer $k$, computes a tree decomposition $(T,\beta)$ of $G$ with the following properties: (i) the size of every adhesion $\adh(e)$ is bounded by a function of $k$, for $e \in E(T)$, (ii) every bag $\beta(t)$ of the decomposition is weakly $(q,k)$-unbreakable in $G$, for $t \in V(T)$.

In~\cite{minbisection-STOC}, the construction relied on advanced usage of algorithmic toolbox for graph separation problems, including the framework of important separators
and, in essence, also shadow removal.\footnote{We refer to~\cite[Chapter 8]{PAbook} for introduction to this toolbox.} This led to bounds of the form $2^{\Oh(k)}$ on the obtained value of $q$ and on the sizes of adhesions.
The construction algorithm had a running time bound of $2^{\Oh(k^2)} n^2 m$.

\paragraph*{Our results}
The main contribution of this paper is an improved construction algorithm of a decomposition with the aforementioned properties.
\begin{theorem}\label{thm:decomp}\label{thm:decomposition}
Given an $n$-vertex graph $G$ and an integer $k$, one can in time $2^{\Oh(k \log k)} n^{\Oh(1)}$ compute a rooted compact\footnote{Let $\alpha(t) = \bigcup_{s:\textrm{\ descendant\ of\ }t} \beta(s) \setminus \adh(t)$. A tree decomposition $T$ is \emph{compact} if for every non-root node $t \in T$: $G[\alpha(t)]$ is connected and $N(\alpha(t)) = \adh(t)$.} tree decomposition $(T,\beta)$ of $G$
such that 
\begin{enumerate}
\item every adhesion of $(T,\beta)$ is of size at most $k$;
\item every bag of $(T,\beta)$ is $(i,i)$-unbreakable in $G$ for every $1 \leq i \leq k$.
\end{enumerate}
\end{theorem}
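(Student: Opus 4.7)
The plan is to adapt Thomas's notion of \emph{lean} tree decompositions to our setting, where it is the adhesions rather than the bag sizes that are bounded. I would call a tree decomposition $(T,\beta)$ with all adhesions of size at most $k$ \emph{$k$-lean} if, for every $t_1,t_2 \in V(T)$, every $i \le k+1$, and every $Z_1 \subseteq \beta(t_1)$, $Z_2 \subseteq \beta(t_2)$ with $|Z_1|=|Z_2|=i$, either $G$ admits $i$ pairwise vertex-disjoint paths from $Z_1$ to $Z_2$, or some adhesion on the path from $t_1$ to $t_2$ in $T$ has size strictly less than $i$. From such a decomposition the unbreakability conclusion is immediate: if $\beta(t)$ failed to be $(i,i)$-unbreakable for some $i \le k$, a witnessing separation $(A,B)$ of $G$ of order at most $i$ would split $\beta(t)$ into two parts of size at least $i+1$, and Menger's theorem (handling the possible overlap $Z_1 \cap Z_2 \subseteq A \cap B$ via trivial paths) would yield at most $i$ pairwise disjoint paths between any $(i+1)$-subsets $Z_1 \subseteq A\cap\beta(t)$ and $Z_2 \subseteq B\cap\beta(t)$, contradicting leanness at $(t,t)$ with parameter $i+1$, since the empty path in $T$ has no adhesion to invoke.

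The construction proceeds iteratively, starting from the trivial one-bag decomposition $\beta(r)=V(G)$, whose unique bag has empty adhesion. A refinement step locates a triple $(t_1,t_2,i)$ violating leanness, uses max-flow to extract a minimum $Z_1$-$Z_2$ separator $S$ of size at most $k$, and splits a bag on the path from $t_1$ to $t_2$ into two new nodes with separating adhesion equal to $S$ (possibly augmented through a submodular uncrossing with the current adhesions so as to keep every adhesion of size at most $k$). Submodularity is essential both to cap the new adhesion size and to guarantee that no pre-existing adhesion grows. The multiset of bag sizes sorted in decreasing lexicographic order is a well-founded potential that strictly decreases with each refinement, so the process terminates. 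Compactness is then enforced by the routine postprocessing of splitting each node along the connected components of $G[\alpha(t)]$ and pruning adhesion vertices with no neighbor in $\alpha(t)$, which preserves both leanness and the adhesion bound.

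The principal obstacle is attaining the $2^{\Oh(k \log k)} n^{\Oh(1)}$ running time as opposed to something like $n^{\Oh(k)}$: a naive search for lean-violations would test all pairs of bags and all subsets of size at most $k$ within them. To match the target bound I would process the decomposition in a rooted top-down pass and, at each node $t$, enumerate in $k^{\Oh(k)} = 2^{\Oh(k \log k)}$ time the relevant ordered traces on $\adh(t)$, performing only a bounded number of max-flow computations per trace to either certify local leanness or extract the next refinement; the global bound accumulates because each refinement permanently decreases the potential. The delicate point will be to argue that local refinements do not cascade upwards through the tree, which is precisely what the submodular uncrossing with existing adhesions, combined with a closest-to-one-side choice of minimum separators, is designed to achieve.
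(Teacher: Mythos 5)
Your overall blueprint (iterated Bellenbaum--Diestel-style refinement starting from the one-bag decomposition, leanness implying unbreakability, compactification at the end) is the same as the paper's, and the derivation of unbreakability from the absence of lean witnesses is fine. However, there are two genuine gaps, and they are exactly the two places where the real work lies. First, your termination argument does not give the claimed running time: taking the lexicographically ordered multiset of bag sizes as the potential only shows that the refinement process is well-founded, and such chains can have length exponential in $n$ (this is precisely the weakness of the original Bellenbaum--Diestel potential). To get $2^{\Oh(k\log k)}n^{\Oh(1)}$ you need a potential whose \emph{total value} is bounded by $2^{\Oh(k)}n^{\Oh(1)}$ and which provably drops by at least one in every refinement; the paper designs such a potential (a weighted combination of $\sum_t\max(0,|\beta(t)|-2k-1)$ and $\sum_t(2^{\min(|\beta(t)|,2k+1)}-1)$) and the proof that it strictly decreases is delicate --- in the case where no bag shrinks on both sides one must show that one of the two copies of every large bag is swallowed during cleanup, using the linked $Z_1$--$Z_2$ paths. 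Nothing in your proposal substitutes for this, and saying ``the global bound accumulates because each refinement permanently decreases the potential'' is exactly the step that fails quantitatively.

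Second, your mechanism for \emph{finding} a violation in FPT time does not address the actual difficulty. The witness sets $Z_1,Z_2$ live inside a bag $\beta(t)$, which may contain $\Omega(n)$ vertices, so enumerating ``traces on $\adh(t)$'' (a set of size at most $k$) together with a few max-flow calls cannot certify that $\beta(t)$ is $(i,i)$-unbreakable, nor extract a separation $(A,B)$ of order $\ell\le k$ with $|A\cap\beta(t)|,|B\cap\beta(t)|>\ell$; a naive search over $Z_1,Z_2$ is $n^{\Oh(k)}$. The paper resolves this with a dedicated color-coding subroutine on the torso of the bag (splitting into the case where both sides contain large connected pieces of the torso and the case where one side is shattered into components of size at most $k$, where a minimal solution is shown to be connected in the torso), and this subroutine is where the $2^{\Oh(k\log k)}$ factor comes from. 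Relatedly, you insist on full two-bag $k$-leanness, which only makes this detection problem harder; single-bag witnesses suffice for the unbreakability conclusion and are what the paper confines itself to. Finally, your appeal to ``submodular uncrossing'' to keep old adhesions from growing is unsubstantiated --- in the standard refinement this is guaranteed not by uncrossing with existing adhesions but by the Menger paths, which inject each new adhesion vertex added to a bag against a vertex that the bag loses.
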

\noindent Note that since every bag of the output decomposition $(T,\beta)$ is $(k,k)$-unbreakable, it is also $(k,k)$-edge-unbreakable.

The size of a bag in our decomposition is not bounded (as in most standard cases), but the high connectivity properties of it are enough for our goals. Also, the theorem implies that for every integer $k$ the decomposition always exists.



The main highlights of Theorem~\ref{thm:decomp} is the improved dependency on $k$ in the running time bound
and arguably the best possible bounds both for the unbreakability and for the adhesion sizes.
Indeed, since in the definition of $(q,k)$-unbreakability
we count the number of vertices in the separator
(we consider $|A \cap X|$ and $|B \cap X|$, as opposed to $|(A \setminus B) \cap X|$
 or $|(B \setminus A) \cap X|$), the definition degenerates for $q < k$. 
Also, if one considers problems that ask for a cut or separation of size at most $k$
(e.g., $p$-\textsc{Way Cut} or \textsc{Minimum Bisection}, discussed later), 
a tree decomposition that is disallowed to take into account separations of order $k$
(by requiring the adhesions to be strictly smaller than $k$) seems of limited use.

We also note that in fact the decomposition of~\cite{minbisection-STOC} ensures slightly more than it is in its base statement, i.e. that every bag $\beta(t)$ is suitably unbreakable in the graph $G_t$ (i.e. induced by the union of bags of descendants of $t$,  but with edges within $\adh(t)$ erased), while the decomposition of Theorem~\ref{thm:decomp} only ensures unbreakability in the whole graph $G$. This feature requires a somewhat careful treatment in the proofs of correctness of the dynamic programming algorithms on a tree decomposition. 

The polynomial factor in the running time bound of Theorem~\ref{thm:decomp} is far from a linear
one and we thus refrain from analyzing it in detail. Due to its iterative nature and 
nontrivial flow/cut arguments, already the randomized contractions technique introduced a far from
linear polynomial factor in the running time bound; in this aspect, our approach is not better. 
We see the strength of our contribution in getting the exponential factor down to $2^{\Oh(k \log k)}$ 
and obtaining best possible bounds both for the unbreakability and for the adhesion sizes.
Obtaining similar properties with an algorithm with a near-linear dependency on $n$ in the running
time bound remains an interesting and challenging open problem.

We also remark that no lower bounds for the running time of the algorithm as in Theorem~\ref{thm:decomp} are known.

\medskip

The improvements of Theorem~\ref{thm:decomp} have direct consequences for algorithmic applications:
they allow us to develop $2^{\Oh(k \log k)} n^{\Oh(1)}$-time parameterized algorithms for a number
of problems that ask for an edge cut of order at most $k$, with the most prominent one being \textsc{Minimum Bisection}.
That is, all applications mentioned below consider edge deletion problems and in their proofs we rely only on $(k,k)$-edge-unbreakability of the bags.

\begin{theorem}\label{thm:bisection}
\textsc{Minimum Bisection} can be solved in time 
$2^{\Oh(k \log k)} n^{\Oh(1)}$.
\end{theorem}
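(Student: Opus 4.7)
We first invoke Theorem~\ref{thm:decomposition} on $G$ with parameter $k$ to obtain, in time $2^{\Oh(k \log k)} n^{\Oh(1)}$, a compact rooted tree decomposition $(T,\beta)$ of $G$ whose adhesions have size at most $k$ and whose bags are $(k,k)$-edge-unbreakable (hence also $(k,k)$-edge-unbreakable). On top of $(T,\beta)$ we run a bottom-up dynamic program. For a node $t$ the table is indexed by a pair $(\varsigma,a)$, where $\varsigma\colon \adh(t)\to\{A,B\}$ prescribes the side of each adhesion vertex (so at most $2^k$ choices) and $a\in\{0,1,\ldots,n\}$ counts how many vertices of $\gamma(t)$ lie on side $A$; the stored value is the minimum number of $(A,B)$-cut edges of $G_t$ compatible with $(\varsigma,a)$. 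The answer to \textsc{Minimum Bisection} is then read at the root with $\varsigma=\emptyset$ and $a=n/2$.

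The combine step at an internal node $t$ with children $t_1,\dots,t_d$ glues the children's tables: for each fixed $\varsigma$, we enumerate partitions of $\beta(t)$ extending $\varsigma$ and matching, on each $\adh(t_i)$, some signature for which the child's table has an entry; we then add the cut edges inside $G[\beta(t)]$ to the inherited cost and update $a$ additively. The crucial structural input is $(k,k)$-edge-unbreakability of $\beta(t)$: any globally valid solution of cut order at most $k$ partitions $\beta(t)$ so that one side---call it the minority $M_t$---has size at most $k$. Hence it suffices to enumerate a majority label $L_t\in\{A,B\}$ and a set $M_t\subseteq\beta(t)$ with $|M_t|\leq k$, reducing the combine to an FPT subproblem on $G[\beta(t)]$.

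The main obstacle is performing this enumeration in time $2^{\Oh(k\log k)}$ per node, rather than the naive $\binom{|\beta(t)|}{\leq k}=n^{\Oh(k)}$, since bag sizes are not bounded by any function of $k$. The plan is to invoke color coding: precompute a $2^{\Oh(k\log k)}\log n$-sized family of colorings $\chi\colon V(G)\to[\Oh(k)]$ (an $(n,k)$-perfect hash family) such that every $k$-subset of $V(G)$ is colorful under some $\chi$; run the DP once per $\chi$; and in each run enumerate, at every node, only the set of colors used by $M_t$ together with the trace of $M_t$ on the small set $\adh(t)\cup\bigcup_i\adh(t_i)$, whose size is $\Oh(k\cdot d)$ but after restriction to the at most $k$ minority vertices yields only $2^{\Oh(k)}$ possibilities. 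Any remaining freedom in $M_t$ is then resolved by a polynomial-time balanced-min-cut subroutine in $G[\beta(t)]$ that picks the concrete vertices realizing the guessed color/adhesion pattern while respecting the balance count. Summed over all nodes, signatures, values of $a$, and colorings, this yields $2^{\Oh(k\log k)} n^{\Oh(1)}$ total time.

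The delicate point, and the part where I expect the most work, is engineering the per-bag min-cut subroutine so that it simultaneously (i) commits to an $M_t$ whose trace on each $\adh(t_i)$ matches the signature coming from the child's table, (ii) honors the balance parameter $a$, which must be convolved across the children in knapsack fashion contributing the $n^{\Oh(1)}$ factor, and (iii) uses the colorfulness assumption to avoid the $n^{\Oh(k)}$ blow-up. Assuming these pieces can be carried out, correctness follows by a standard argument: for the coloring $\chi$ that makes the entire minority of a fixed optimal solution colorful in every bag, the corresponding run of the DP finds the optimum, and over all colorings we take the minimum.
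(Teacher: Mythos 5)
Your skeleton matches the paper's (same decomposition from Theorem~\ref{thm:decomp}, same DP state indexed by the trace on $\adh(t)$ and a balance counter), but the combine step at a bag --- exactly the hard part --- is the part you explicitly leave as an assumption, and the plan you sketch for it does not work as stated. First, a node $t$ can have $\Omega(n)$ children, so the set $\adh(t)\cup\bigcup_i\adh(t_i)$ has size up to $\Omega(n)$; enumerating the trace of the minority set $M_t$ on it costs $n^{\Oh(k)}$, not $2^{\Oh(k)}$, and guessing only the $\Oh(k)$-color pattern of $M_t$ under a perfect hash family does not determine \emph{which} of the many identically colored adhesion vertices lie in $M_t$. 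Second, the remaining freedom cannot be resolved by a ``polynomial-time balanced-min-cut subroutine in $G[\beta(t)]$'': the quantity to be minimized is the number of bag-internal cut edges \emph{plus}, for each child $s$, a table value that is an arbitrary function (with values in $\{0,\dots,k,+\infty\}$) of the chosen trace on $\adh(s)$ and of a chosen child balance, all subject to a global balance constraint. That objective is not expressible as an edge cut in $G[\beta(t)]$, and the minority side of an optimal solution restricted to $\beta(t)$ need not even induce a connected subgraph, so no single min-cut computation recovers it. Since this subroutine is precisely where the $n^{\Oh(k)}$ blow-up must be avoided, the proof has a genuine gap at its core.

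For comparison, the paper resolves this step differently: it applies the set family of Lemma~\ref{lem:random} on the universe $\beta(t)$ to separate the minority set $A_t^\ast$ (size at most $k$) not from everything, but from the $\Oh(k^2)$ vertices of \emph{broken} constraints outside $A_t^\ast$ and from $\adh(t)\setminus A_t^\ast$; it then builds the auxiliary graph $H$ on $\beta(t)$ whose edges are the bag edges together with each child adhesion turned into a clique, and proves that for a lucky highlighted set $S$ every connected component of $H[S]$ is wholly inside or wholly outside $A_t^\ast$, while every constraint meets at most one such component. This independence is what permits a knapsack-type DP over components that tracks balance and cost, reading children's tables as black boxes --- replacing your unspecified min-cut subroutine. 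A further point you gloss over: Theorem~\ref{thm:decomp} gives unbreakability of $\beta(t)$ in $G$, not in $G_t$, so the ``minority side has size at most $k$'' guarantee holds only for cuts of the whole graph; the paper accommodates this with an asymmetric completeness/soundness specification of the table entries and the symmetrized values $\symM$, a detail your formulation of the DP invariant would also need.
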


This improves the parametric factor of the running time from $2^{\Oh(k^3)}$, provided in~\cite{minbisection-STOC}, to $2^{\Oh(k\log k)}$.

In our second application, the \textsc{Steiner Cut} problem, we are given an undirected graph $G$, a set $T \subseteq V(G)$ of terminals, and integers $k,p$. 
The goal is to delete at most $k$ edges from $G$ so that $G$ has at least $p$ connected components containing
at least one terminal. This problem generalizes \textsc{$p$-Way Cut} that corresponds to the case $T=V(G)$.
\begin{theorem}\label{thm:steinercut}
\textsc{Steiner Cut} can be solved in time $2^{\Oh(k \log k)} n^{\Oh(1)}$.
\end{theorem}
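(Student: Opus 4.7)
The plan is to combine Theorem~\ref{thm:decomp} with a bottom-up dynamic program on the resulting tree decomposition, following the high-level strategy of the \textsc{Minimum Bisection} algorithm of~\cite{minbisection-STOC}, but exploiting the stronger unbreakability and smaller adhesions guaranteed by Theorem~\ref{thm:decomp}. First, since deleting at most $k$ edges can increase the number of connected components of $G$ by at most $k$, one may assume $p$ is at most $k$ plus the number of terminal-containing components already present in $G$; otherwise reject immediately. Hence we may assume $p = \Oh(k)$.

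We invoke Theorem~\ref{thm:decomp} on $G$ to obtain, in time $2^{\Oh(k \log k)} n^{\Oh(1)}$, a rooted compact tree decomposition $(T,\beta)$ with adhesions of size at most $k$ and bags that are $(k,k)$-edge-unbreakable. On $(T,\beta)$ we run a bottom-up DP whose state at a node $t$ is a tuple $(\pi,\tau,j,c)$: here $\pi$ is a partition of $\adh(t)$ recording which adhesion vertices lie in the same connected component of $G_t - F$ for the partial solution $F \subseteq E(G_t)$, $\tau : \pi \to \{0,1\}$ flags for each block whether its component in $G_t - F$ already meets a terminal of $\alpha(t)$, $j \in \{0,\ldots,k\}$ is the number of edges deleted inside $E(G_t)$, and $c \in \{0,\ldots,k+1\}$ counts the terminal-containing components of $G_t - F$ entirely enclosed in $\alpha(t)$. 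The DP records whether such an $F$ exists. Since $|\adh(t)| \leq k$, the number of partitions $\pi$ is at most $2^{\Oh(k\log k)}$, and the remaining coordinates contribute only $\mathrm{poly}(k)$, so each bag has $2^{\Oh(k\log k)}$ states. The final answer is read at the root by checking for a state with $c \geq p$.

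The main technical obstacle is the join at a node $t$ with (after binarising $T$) children $t_1,t_2$: one must combine the children's tables through the edges of $G[\beta(t)]$, possibly deleting some of them, while correctly updating all four coordinates. Since $|\beta(t)|$ can be arbitrarily large, a vertex-by-vertex sub-DP inside the bag is unaffordable. Here the $(k,k)$-edge-unbreakability is exactly the right leverage: for any surviving solution $F$ with $|F|\leq k$, the partition of $\beta(t)$ induced by the components of $G - F$ consists of at most one ``giant'' block of size $\geq |\beta(t)| - k$ together with further ``small'' blocks covering in total at most $k$ bag-vertices (in particular, at most $k$ such blocks). Therefore the behaviour of $F$ inside the bag, as seen by the DP, is captured by (i) the induced partition and terminal-labelling of the $\Oh(k)$ boundary vertices $\adh(t) \cup \adh(t_1) \cup \adh(t_2)$, enumerable in $2^{\Oh(k\log k)}$ ways, plus (ii) at most $k$ further ``escape'' bag vertices together with their distribution into small blocks. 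For each guess of (i), the task of realising it by a cut of size $\leq k$ in $G[\beta(t)]$ with the giant class left intact is a bounded-terminal multiway-cut style subproblem on the bag, which can be solved in time $2^{\Oh(k\log k)} n^{\Oh(1)}$ by standard randomized-contractions machinery applied after contracting the giant class to a single vertex. Summing $2^{\Oh(k\log k)} n^{\Oh(1)}$ work over the $\Oh(n)$ nodes of $T$ yields the claimed total running time. The delicate point is packaging this bag-local subroutine so that edge-unbreakability is used tightly; without it, the naive join would cost $n^{\Omega(k)}$.
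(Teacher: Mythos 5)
Your overall plan --- build the decomposition of Theorem~\ref{thm:decomp}, run a bottom-up DP whose per-node state records the behaviour on the size-$\leq k$ adhesion, and use $(k,k)$-edge-unbreakability plus color coding to make the join affordable --- is exactly the paper's strategy. (The paper packages it slightly differently: it reduces \textsc{Steiner Cut} to an \textsc{Auxiliary Multicut} problem about colourings $f\colon V(G)\to[p]$ of cost at most $k$ that hit prescribed (terminal-set, colour) pairs, and the DP state is a colouring of $\adh(t)$ plus the subset of pairs already realised; this is equivalent to your partition-plus-flags-plus-counter state.)

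However, the join step, which is the only genuinely nontrivial part of the argument, is not actually carried out, and the sketch you give of it does not work as stated. First, you cannot ``contract the giant class to a single vertex'' before running the machinery: the giant class is part of the unknown solution, and identifying it (equivalently, its complement of at most $k$ ``escape'' vertices, for which there are $n^{\Omega(k)}$ candidates) is precisely what the colour coding has to accomplish. Second, the residual subproblem is not a cut problem ``in $G[\beta(t)]$'': connectivity between bag vertices, budget expenditure, and the creation of new terminal-containing components can all happen inside the children's subtrees $\alpha(s)$, so the children's tables must participate in the bag-local search. The paper handles this by treating each child adhesion $\adh(s)$ and each bag edge as a ``constraint'', forming the auxiliary graph $H$ on $\beta(t)$ in which every $\adh(s)$ is a clique, arguing that for a lucky colouring the small side is a union of connected components of $H$ restricted to the highlighted set, and then running a knapsack-type DP over these components that aggregates the children's table entries. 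Third, Theorem~\ref{thm:decomp} guarantees unbreakability of $\beta(t)$ only in $G$, not in $G_t$, so your ``giant block'' claim holds for the restriction of a \emph{global} solution to the bag, not for an arbitrary partial solution of $G_t$; this forces the asymmetric completeness/soundness invariants the paper states explicitly (and also requires separate treatment of bags with $|\beta(t)|\leq 3k$, where no giant block need exist). These are the substantive parts of the proof, and they are missing from your write-up.
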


This improves the parametric factor of the running time from $2^{\Oh(k^2\log k)}$, provided in~\cite{randcontr}, to $2^{\Oh(k\log k)}$.

In the \textsc{Steiner Multicut}
problem we are given an undirected graph $G$, $t$ sets of terminals $T_1,T_2,\ldots,T_t$, each of size at most $p$,
   and an integer $k$. The goal is to delete at most $k$ edges from $G$ so that every terminal set $T_i$ is separated:
   for every $1 \leq i \leq t$, there does not exist a single connected component of the resulting graph that contains the
   entire set $T_i$. 
Note that for $p=2$, the problem becomes the classic \textsc{Edge Multicut} problem.
Bringmann et al.~\cite{BringmannHML16} showed an FPT algorithm for \textsc{Steiner Multicut} when parameterized by $k+t$.
We use our decomposition theorem to improve the exponential part of the running time of this algorithm.
\begin{theorem}\label{thm:multicut}
\textsc{Steiner Multicut} can be solved in time $2^{\Oh((t + k) \log (k + t))} n^{\Oh(1)}$.
\end{theorem}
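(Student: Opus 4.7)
The plan is to combine Theorem~\ref{thm:decomp} with a dynamic programming scheme on the resulting decomposition, in the same spirit as the proofs of Theorems~\ref{thm:bisection} and~\ref{thm:steinercut}. First, I would apply Theorem~\ref{thm:decomp} to compute in time $2^{\Oh(k\log k)} n^{\Oh(1)}$ a compact rooted tree decomposition $(T,\beta)$ of $G$ with adhesions of size at most $k$ and bags that are $(k,k)$-edge-unbreakable. Then, for each node $t\in V(T)$ processed bottom-up, I would tabulate the minimum number of solution edges picked inside $E(G[\gamma(t)])$ over every feasible ``interface state'' of the subtree rooted at $t$.

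The interface state at $t$ would consist of: (i) a partition $\pi$ of $\adh(t)$ encoding the ``same-component'' equivalence relation on adhesion vertices in $G\setminus F$; and (ii) for each terminal set $T_i$, a compact marker indicating whether $T_i$ is already separated in $\gamma(t)$, or otherwise the block of $\pi$ (or a dedicated symbol, if the cluster does not reach $\adh(t)$) to which the current cluster of $T_i$-terminals is attached. Because $|\adh(t)|\le k$, the number of partitions is $k^{\Oh(k)}$ and each terminal status takes one of at most $k+2$ values, yielding a total of $2^{\Oh((k+t)\log(k+t))}$ table entries per node.

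To compute each entry I would exploit the $(k,k)$-edge-unbreakability of $\beta(t)$: for any hypothetical solution $F$ of size at most $k$, the components of $G[\beta(t)]\setminus F$ split the bag into one ``giant'' part and a ``small side'' of at most $\Oh(k)$ vertices (via a standard iteration of the unbreakability property, stronger at every order $i\le k$ thanks to Theorem~\ref{thm:decomp}). This reduces finding the best bag-local behavior to enumerating the $\Oh(k)$-sized small-side set together with $\Oh(t)$ ``witness'' terminals per group. The naive enumeration costs $n^{\Oh(k+t)}$, which I would trim down to $2^{\Oh((k+t)\log(k+t))}$ via a standard $(n,\Oh(k+t))$-perfect hash family (splitters of Naor, Schulman, and Srinivasan), keeping only subsets of $\beta(t)$ whose vertices receive pairwise distinct colors under the currently considered coloring; the bag-local subproblem for each such subset is then solved in polynomial time by standard flow or matroid-based routines.

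The main obstacle is getting the DP transitions right, in particular how to propagate and update the terminal statuses when merging information from several children through $\beta(t)$: merging cluster labels across adhesions, detecting new separations caused by edges inside $G[\beta(t)]\setminus F$, and correctly treating ``lost'' clusters whose eventual contribution to separating a group depends on terminals seen above $t$. This follows the template of the Minimum Bisection DP from~\cite{minbisection-STOC} enriched with per-group bookkeeping, but a careful correctness argument is needed to show that the small-side enumeration together with color coding captures all ways in which the $t$ terminal sets can interact with an optimal cut, delivering the claimed $2^{\Oh((k+t)\log(k+t))}n^{\Oh(1)}$ bound.
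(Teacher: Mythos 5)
Your overall architecture (decomposition from Theorem~\ref{thm:decomp}, bottom-up DP, unbreakability forcing a small side of $\Oh(k)$ non-majority vertices in each bag, color coding to locate it) matches the paper, but the DP state you propose for tracking the terminal sets has a genuine gap. You record, per terminal set $T_i$, ``the block of $\pi$ to which the current cluster of $T_i$-terminals is attached,'' i.e., a single block (or a failure/success symbol). This is not enough information: the terminals of $T_i$ seen in $\gamma(t)$ may lie in several distinct components of $G[\gamma(t)]\setminus F$, each touching $\adh(t)$ through a \emph{different} block of $\pi$. Whether $T_i$ ends up separated then depends on which of those blocks get merged by the part of the graph above $t$, so the state must remember the whole \emph{set} of blocks met by $T_i$. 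Doing that honestly gives up to $2^{k}$ statuses per terminal set, hence $2^{\Omega(kt)}$ table entries per node, which breaks the claimed $2^{\Oh((k+t)\log(k+t))}$ bound (e.g., for $t=k$ you get $2^{\Omega(k^2)}$). So the state space as written is either incorrect or, once corrected, too large; this is not merely a matter of ``getting the transitions right.''

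The paper circumvents exactly this issue by never tracking connectivity of terminals at all. Since $G$ may be assumed connected and the cut has order at most $k$, a solution can be encoded as a coloring $f\colon V(G)\to[p]$ with $p\le k+1$ whose cost (number of bichromatic edges) is at most $k$, and $T_i$ is separated iff $f$ is non-constant on $T_i$. One guesses up front, for each $i$, a pair $M_i$ of two distinct colors that $f$ must attain on $T_i$ ($2^{\Oh(t\log k)}$ guesses), which converts each separation constraint into two purely existential ``coverage'' constraints $(i,j)$: some terminal of $T_i$ receives color $j$. The resulting \textsc{Auxiliary Multicut} DP only needs, per node, a subset of the at most $2t$ coverage constraints still to be satisfied below $t$ together with a coloring of $\adh(t)$, i.e., $2^{\Oh(t+k\log k)}$ states, and monotone coverage constraints compose trivially across children. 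If you want to salvage your connectivity-partition approach, you would need an analogous idea to collapse the per-group block-set information; as stated, the proposal does not deliver the claimed running time.
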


This improves the parametric factor of the running time from $2^{\Oh(k^2t\log k)}$, provided in~\cite{BringmannHML16}, to $2^{\Oh((t+k)\log (t+k))}$.

In the case of \textsc{Unique Label Cover}, \textsc{Multiway Cut-Uncut}, our decomposition can be also applied, but we were not able to achieve improvements over the dependency in the parameter(s)
  of the time complexity
of the known algorithms, i.e. $|\Sigma|^{2k}$ provided by Iwata et al. in~\cite{iwata2016half} and $2^{\Oh(k^2\log k)}$ from the pioneering randomized contraction paper~\cite{randcontr}, respectively.

\paragraph*{Our techniques}
Our starting point is the definition of a lean tree decomposition of Thomas~\cite{Thomas90}; we follow the formulation of~\cite{BellenbaumD02}.
\begin{definition}
A tree decomposition $(T,\beta)$ of a graph $G$ is called \emph{lean} if for every $t_1,t_2 \in V(T)$
and all sets $Z_1 \subseteq \beta(t_1)$ and $Z_2 \subseteq \beta(t_2)$ with $|Z_1| = |Z_2|$, either
$G$ contains $|Z_1|$ vertex-disjoint $Z_1-Z_2$ paths, or there exists an edge $e \in E(T)$ on the path from $t_1$ to $t_2$
such that $|\adh(e)| < |Z_1|$.
\end{definition}

For a graph $G$ and a tree decomposition $(T,\beta)$ that is not lean, a quadruple
$(t_1,t_2,Z_1,Z_2)$ for which the above assertion is not true is called a \emph{lean witness}.
Note that it may happen that $t_1 = t_2$ or $Z_1 \cap Z_2 \neq \emptyset$.
In particular $(s,s,Z_1,Z_2)$ is called a \emph{single bag lean witness}.
The \emph{order} of a lean witness is the minimum order of a separation $(A_1,A_2)$ such that $Z_i \subseteq A_i$ for $i=1,2$.

Bellenbaum and Diestel~\cite{BellenbaumD02} defined an improvement step that, given a tree
decomposition and a lean witness, refines the decomposition so that it is in some sense
closer to being lean. 
Given a lean witness $(t_1,t_2,Z_1,Z_2)$, the refinement step takes a minimum order separation $(A_1,A_2)$ with $Z_i \subseteq A_i$ for $i=1,2$ and rearranges the tree decomposition so that $A_1 \cap A_2$ appears
as a new adhesion on some edge of the decomposition.
Bellenbaum and Diestel introduced a potential function, bounded exponentially in $n$, that is non negative and decreases at every refinement step.
Thus, one can exhaustively apply the refinement step while a lean witness exists, obtaining (after possibly an exponential number of steps) a lean decomposition.

A simple but crucial observation connecting lean decompositions with the decomposition promised by Theorem~\ref{thm:decomposition} is that if a tree decomposition admits no single bag lean witness of order at most $k$,
then every bag is $(i,i)$-unbreakable for every $1 \leq i \leq k$.
By combining this with the fact that the refinement step applied to a lean witness of order $k'$ introduces one new adhesion of size $k'$ (and does not increase the size of other adhesions), we obtain the following.

\begin{theorem}\label{thm:exist}
For every graph $G$ and integer $k$, there exists a tree decomposition $(T,\beta)$ of $G$ such that every adhesion of $(T,\beta)$ is of size at most $k$ 
    and every bag is $(i,i)$-unbreakable for every $1 \leq i \leq k$.
\end{theorem}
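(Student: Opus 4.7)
The plan is to construct the decomposition by iterative refinement starting from the trivial tree decomposition $(T_0,\beta_0)$, in which $T_0$ is a single node $r$ with $\beta_0(r)=V(G)$ and no adhesions. Following Bellenbaum and Diestel~\cite{BellenbaumD02}, as long as the current decomposition admits a single bag lean witness $(s,s,Z_1,Z_2)$ of order at most $k$, I apply their refinement step to that witness. I stop as soon as no such witness exists, and then claim that the resulting decomposition $(T,\beta)$ satisfies both conclusions of the theorem.

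Two invariants drive the analysis. First, every adhesion stays of size at most $k$: the starting decomposition has no adhesions at all, and, as recalled in the paragraph preceding the statement, one refinement step applied to a lean witness of order $k'$ introduces exactly one new adhesion, of size $k'$, while leaving the existing adhesions untouched; since we restrict ourselves to witnesses of order at most $k$, the bound is preserved. Second, Bellenbaum and Diestel's potential $\Phi$, bounded exponentially in $n$, strictly decreases with each refinement step irrespective of which lean witness is chosen, so the procedure terminates after finitely (possibly exponentially) many iterations.

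It remains to argue that termination enforces the unbreakability condition on every bag of $(T,\beta)$. Suppose toward a contradiction that some bag $\beta(s)$ of the final decomposition is not $(i,i)$-unbreakable for some $1\leq i\leq k$: there exists a separation $(A_1,A_2)$ of $G$ with $|A_1\cap A_2|\leq i$ and $|\beta(s)\cap A_j|\geq i+1$ for $j\in\{1,2\}$. Pick $Z_j\subseteq\beta(s)\cap A_j$ with $|Z_j|=i+1$. Any vertex of $Z_1\cap Z_2$ lies in $A_1\cap A_2$, and removing $A_1\cap A_2$ from $G$ disconnects $A_1\setminus A_2$ from $A_2\setminus A_1$; by Menger's theorem (in the form that handles shared endpoints), $G$ contains at most $|A_1\cap A_2|\leq i<|Z_1|$ vertex-disjoint $Z_1$-$Z_2$ paths. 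The path from $s$ to $s$ in $T$ contains no edge, so $(s,s,Z_1,Z_2)$ is a bona fide single bag lean witness, of order at most $i\leq k$, contradicting the stopping criterion.

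The main delicacy is precisely this Menger step: one must be careful to take $Z_1,Z_2$ of size $i+1$ rather than $i$, so that even after absorbing $Z_1\cap Z_2\subseteq A_1\cap A_2$ into the cut, the separator $A_1\cap A_2$ is still strictly smaller than $|Z_1|$. Granting this observation, the theorem follows immediately from the two invariants above, both of which are direct consequences of known properties of the Bellenbaum--Diestel refinement step.
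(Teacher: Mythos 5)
Your proposal is correct and follows essentially the same route as the paper's own proof sketch: start from the single-bag decomposition, repeatedly apply the Bellenbaum--Diestel refinement to single bag lean witnesses of order at most $k$ (using their exponential-in-$n$ potential for termination and the one-new-adhesion observation for the size bound), and observe via Menger that a bag failing $(i,i)$-unbreakability yields such a witness. Your explicit choice of $|Z_1|=|Z_2|=i+1$ correctly fills in the one detail the paper's sketch leaves implicit.
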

\begin{proof}[Proof sketch.]
Start with a trivial tree decomposition $(T,\beta)$ that consists of a single bag $V(G)$.
As long as there exists a single bag lean witness of order at most $k$ in $(T,\beta)$, apply the refinement step of Bellenbaum and Diestel to it.
It now remains to observe that if any bag $\beta(t)$ for some $t\in T$ was not $(i,i)$-unbreakable for some $1 \leq i \leq k$,
then the separation witnessing this would give rise to a single bag lean witness for $\beta(t)$. 

\end{proof}

The formal proof of the above theorem is a subset of the proof of Theorem~\ref{thm:decomposition} without the algorithmic part that is presented in Section~\ref{s:algorithm}.

A naive implementation of the procedure of Theorem~\ref{thm:exist} runs in time exponential in $n$, while for any application in parameterized algorithms one needs an FPT algorithm with $k$ as the parameter.
To achieve this goal, one needs to overcome two obstacles. 

First, the potential provided by Bellenbaum and Diestel gave only an exponential in $n$ bound on the number of needed refinement steps.
Fortunately, one can use the fact that we only refine using single bag witnesses of bounded size to provide a different potential, this time bounded as $2^{\Oh(k)}n^{\Oh(1)}$.

The refinement step of Bellenbaum and Diestel used by us, given a single bag lean witness
for $\beta(t)$, performs a surgery on the current tree decomposition that splits some bags
into two copies such that (a) the copies have at most $k$ vertices in common, and (b)
the bag $\beta(t)$ is really split in the sense that both resulting copies have strictly
smaller size. 
The primary part of our potential sums up excesses of the sizes of bags over $(2k+1)$. 
Since one can easily trim any tree decomposition to $\Oh(n)$ bags without changing the essential
information carried by the tree decomposition, this potential is bounded by $\Oh(n^2)$. On the 
other hand, it strictly decreases if one performs a surgery to the tree decomposition that splits
a bag larger than $(2k+1)$ into two smaller bags, sharing at most $k$ vertices. 
To guarantee progress if only a bag of size at most $(2k+1)$ is split into smaller bags,
we introduce an additional secondary potential that utilizes some deeper properties of the refinement step;
thanks for the fact that it focuses on small bags, this potential can be bounded by $2^{\Oh(k)} n$.

Second, one needs to efficiently (in FPT time) verify whether a bag is $(i,i)$-unbreakable for every $1 \leq i \leq k$ and, if not, find a corresponding single bag lean witness.
We design such an algorithm using color-coding: in time $2^{\Oh(k \log k)} n^{\Oh(1)}$ we can either 
certify that all bags of a given tree decomposition are $(i,i)$-unbreakable for every $1 \leq i \leq k$
or produce a single bag lean witness of order at most $k$.

To get an intuition of how such a color coding step works, let us recall a cut-finding procedure finding $(k,k)$-good edge separation
that is already present in~\cite{randcontr}. 
Assume we want to find an edge cut $(A,B)$ of $G$ of order at most $k$ such that both 
$A$ and $B$ are connected and of size larger than $k$. Then if one randomly color every edge
white or black independently with probability $\frac{1}{2}$ each, with probability
$2^{-\Oh(k)}$ every edge of the cut $E(A,B)$ is black while there is a white connected component
of size at least $k+1$ in $G[A]$ and a white connected component of size at least $k+1$ in $G[B]$.
On the other hand, any edge cut consisting of at most $k$ black edges between two white connected
components of size larger than $k$ gives a desired edge cut (not necessarily precisely the cut $(A,B)$). 
Our algorithm for verifying $(i,i)$-unbreakability of a bag for every $1 \leq i \leq k$ leverages
the same ideas as the argument above, but requires much more delicate and precise analysis of the 
combinatorics of the situation at hand.

These ingredients lead to constructing a decomposition with guarantees as in Theorem~\ref{thm:decomposition}.

\medskip

All our applications (Theorems~\ref{thm:bisection},~\ref{thm:steinercut}, and~\ref{thm:multicut}) use the decomposition of Theorem~\ref{thm:decomposition} and follow well-paved ways
of~\cite{randcontr,minbisection-STOC} to perform bottom-up dynamic programming. 
Let us briefly sketch this for the case of \textsc{Minimum Bisection}.

Let $(G,k)$ be a \textsc{Minimum Bisection} instance and let $(T,\beta)$ be a tree decomposition of $G$ provided by Theorem~\ref{thm:decomposition}. 
The states of our dynamic programming algorithm are the straightforward ones: for every $t \in V(T)$, every $A^\adh \subseteq \adh(t)$, and every $0 \leq n^\circ \leq |\alpha(t)|$
we compute value $M[t,A^\adh,n^\circ]$ equal to the minimum order of an edge cut $(A,B)$ in $G_t$
(i.e. induced by the union of bags of descendants of $t$, 
but with edges within $\adh(t)$ erased) such that $A \cap \adh(t) = A^\adh$ and $|A \setminus \adh(t)| = n^\circ$.
Furthermore, we are not interested in cut orders larger than $k$, and we replace them with $+\infty$.
Using the unbreakability of $\beta(t)$ we can additionally require that either $A \cap \beta(t)$ or $B \cap \beta(t)$ is of size at most $k$.

In a single step of dynamic programming, one would like to populate $M[t,\cdot,\cdot]$ using the values $M[s,\cdot,\cdot]$ for all children $s$ of $t$ in $T$.
Fix a cell $M[t,A^\adh,n^\circ]$. Intuitively, one would like to iterate over all partitions $\beta(t) = A^\beta \uplus B^\beta$ with $A^\beta \cap \adh(t) = A^\adh$
and, for fixed $(A^\beta,B^\beta)$, for every child $s$ of $t$ use the cells $M[s,\cdot,\cdot]$ to read the best way to extend $(A^\beta \cap \adh(s), B^\beta \cap \adh(s))$ to $\alpha(s)$. 
However, $\beta(t)$ can be large, so we cannot iterate over all such partitions $(A^\beta,B^\beta)$.
Here, the properties of the decomposition $(T,\beta)$ come into play: since $\beta(t)$ is $(k,k)$-edge-unbreakable, and in the end we are looking for a solution to \textsc{Minimum Bisection} of order at most $k$,
we can only focus on partitions $(A^\beta,B^\beta)$ such that $|A^\beta| \leq k$ or $|B^\beta| \leq k$. While this still does not allow us to iterate over all such partitions, we can highlight important parts of them
by color coding, similarly as it is done in the leaves of the recursion in the randomized contractions framework~\cite{randcontr}.

\section{Preliminaries}\label{s:prelim}


\paragraph*{Color coding toolbox}
Let $[n]=\{1,\ldots,n\}$ for a positive integer $n$.

Many of our proofs follow the same outline as the treatment of the high-connectivity phase of the randomized contractions technique~\cite{randcontr}.
As in~\cite{randcontr}, the color coding step is encapsulated in the following lemma:

\newcommand{\randfamily}{\mathcal{F}}
\begin{lemma}[\cite{randcontr}]\label{lem:random}
Given a set $U$ of size $n$ and integers $0 \leq a,b \leq n$, one
can in time $2^{O(\min(a,b) \log (a+b))} n \log n$
construct a family $\randfamily$ of at most $2^{O(\min(a,b) \log (a+b))} \log n$
subsets of $U$ such that the following holds:
for any sets $A,B \subseteq U$ with $A \cap B = \emptyset$, $|A|\leq a$, and $|B|\leq b$,
there exists a set $S \in \randfamily$ with $A \subseteq S$ and $B \cap S = \emptyset$.
\end{lemma}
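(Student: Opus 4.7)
The plan is to establish the existence of $\mathcal{F}$ by a biased probabilistic argument and then derandomize using a splitter-style construction. Both steps are standard once the correct bias is identified, so most of the work is in bookkeeping and in matching the running time.

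First, I would reduce by symmetry. If a family $\mathcal{F}$ satisfies the property for parameters $(a,b)$, then $\{U \setminus S : S \in \mathcal{F}\}$ satisfies it with the roles of $A$ and $B$ swapped, so it suffices to handle the case $a \leq b$ and produce a family of size $2^{O(a \log(a+b))} \log n$. The crucial choice is then to include each element of $U$ into a random set $S$ independently with probability $p = a/(a+b)$. For any fixed disjoint $(A,B)$ with $|A| \leq a$ and $|B| \leq b$,
\begin{align*}
\Pr\bigl[A \subseteq S \text{ and } B \cap S = \emptyset\bigr] \;\geq\; p^{a}(1-p)^{b} \;\geq\; 2^{-O(a \log(a+b))},
\end{align*}
where the last estimate uses $(1+a/b)^{b} \leq e^{a}$ together with $p^{a} = (a/(a+b))^{a}$. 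Sampling $N = \Theta\!\bigl(2^{O(a \log(a+b))} (a+b) \log n\bigr)$ such sets and applying the union bound over the at most $n^{a+b}$ candidate pairs $(A,B)$ produces a family of the claimed size with positive probability, which yields existence.

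For the constructive statement I would then derandomize. The cleanest route is to replace full independence by $(a+b)$-wise independence at bias $p$: such a distribution is supported on a sample space of polynomial-in-$n$ size once $(a+b)$ is fixed, and an application of the method of conditional expectations produces an explicit family within the claimed running time. An alternative is to invoke a Naor--Schulman--Srinivasan perfect hash family of functions $f \colon U \to [a+b]$ and, for each $f$ in the family, take the preimage $f^{-1}(T)$ for every $a$-element subset $T \subseteq [a+b]$ as a candidate. The main obstacle is shaving off the $2^{O(a+b)}$ factor that naively appears when enumerating all subsets of $[a+b]$; this is avoided precisely because of the prior reduction to $a \leq b$, which allows enumeration over only $\binom{a+b}{a} = 2^{O(a \log(a+b))}$ buckets rather than $2^{a+b}$. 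Combining this with the $\log n$ factor inherent to the splitter construction then produces the claimed family of size $2^{O(\min(a,b) \log(a+b))} \log n$, in time $2^{O(\min(a,b) \log(a+b))} n \log n$.
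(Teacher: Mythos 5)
Your probabilistic existence argument is sound: biasing toward the smaller side with $p=a/(a+b)$ gives success probability $p^{a}(1-p)^{b}\geq (a+b)^{-a}e^{-a}=2^{-O(a\log(a+b))}$, and a union bound over the at most $n^{a+b+2}$ candidate pairs yields a family of the claimed size. (For the record, the paper does not prove this lemma at all; it imports it from the randomized-contractions paper, whose proof is essentially the splitter-based construction you gesture at in your last sentences.) The genuine gap is in the derandomization, and it affects both routes you propose. An $(a+b)$-wise independent sample space with bias $p$ has support of size $n^{\Theta(a+b)}$, and running the method of conditional expectations requires tracking an objective over the $n^{\Theta(a+b)}$ pairs $(A,B)$; neither fits in the claimed time $2^{O(\min(a,b)\log(a+b))}\,n\log n$. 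Take $a=1$ and $b=n/2$: the lemma promises a polynomial-size family in polynomial time, while your route is exponential. ``Polynomial in $n$ once $a+b$ is fixed'' is not good enough, because $a+b$ is not the parameter here --- only $\min(a,b)$ is.

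The perfect-hash-family route has the same defect in a different place. An $(n,a+b,a+b)$-perfect hash family necessarily has size $e^{a+b}(a+b)^{O(\log(a+b))}\log n$, since a random balanced function is injective on a fixed $(a+b)$-set with probability only about $e^{-(a+b)}$, and this is tight. So the offending $2^{\Theta(a+b)}$ factor sits in the \emph{number of hash functions}, not in the enumeration of buckets; your reduction to $a\leq b$ and the restriction to $\binom{a+b}{a}$ bucket-subsets do not remove it. The standard fix --- and the one used in the cited source --- is to enlarge the range: take an $(n,a+b,(a+b)^2)$-splitter of Naor--Schulman--Srinivasan, i.e.\ a family of functions $f\colon U\to[(a+b)^2]$ of size only $\mathrm{poly}(a+b)\log n$, computable in time $\mathrm{poly}(a+b)\,n\log n$, such that every $(a+b)$-set has an injective $f$. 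Then for each such $f$ add $f^{-1}(T)$ to $\mathcal{F}$ for every $T\subseteq[(a+b)^2]$ with $|T|\leq a$, which contributes $2^{O(a\log(a+b))}$ sets per function; correctness is exactly as you describe, by choosing $f$ injective on $A\cup B$ and $T=f(A)$. With that substitution your proof goes through and matches the known one.
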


We also need the following more general version that can be obtained from Lemma~\ref{lem:random}
by a straightforward induction on $r$.

\begin{lemma}\label{lem:random2}
Given a set $U$ of size $n$ and integers $r \geq 1$, $0 \leq a_1,a_2,\ldots,a_r \leq n$
with $s = \sum_{i=1}^r a_i$, and $c = \max_{i=1}^r a_i$,
one can in time
$2^{\Oh((s-c) \cdot \log s)} n \log^{r-1} n$
construct a family $\randfamily$ of at most $2^{\Oh((s-c) \cdot \log s)} \log^{r-1} n$
functions $f \colon U \to \{1,\ldots,r\}$ such that the following holds:
for any pairwise disjoint sets $A_1,A_2,\ldots,A_r \subseteq U$ such that $|A_i|\leq a_i$ for each
$i \in [r]$,
there exists a function $f \in \randfamily$
with $A_i \subseteq f^{-1}(i)$ for every $i \in [r]$.
\end{lemma}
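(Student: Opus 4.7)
}
The plan is to proceed by induction on $r$, peeling off one coordinate per level and invoking Lemma~\ref{lem:random} to separate it from the union of the remaining coordinates. The base case $r=1$ is trivial: take $\randfamily=\{f\}$ where $f\equiv 1$, which works since $s=c$ and the bound reduces to a single function. (Alternatively, use $r=2$ as the base case: Lemma~\ref{lem:random} applied to a subset $A$ and its complement is the same as producing a family of functions $U\to\{1,2\}$, and here $\min(a_1,a_2)=s-c$, so the bound matches exactly.)

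For the inductive step (say, $r\geq 2$), I would pick an index $i^\ast\in[r]$ with $a_{i^\ast}=c$ and first apply Lemma~\ref{lem:random} to separate $A_{i^\ast}$ (of size at most $c$) from $\bigcup_{i\neq i^\ast} A_i$ (of size at most $s-c$). Since $\min(c,s-c)\leq s-c$, this yields in time $2^{\Oh((s-c)\log s)}\, n\log n$ a family $\randfamily_1$ of at most $2^{\Oh((s-c)\log s)}\log n$ sets $S\subseteq U$ such that for some $S\in\randfamily_1$ we have $A_{i^\ast}\subseteq S$ and $S\cap\bigcup_{i\neq i^\ast}A_i=\emptyset$. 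Then, for each $S\in\randfamily_1$, I would recursively apply the lemma to the ground set $U\setminus S$ with the $r-1$ parameters $(a_i)_{i\neq i^\ast}$ to obtain a family $\randfamily_2(S)$ of functions $g\colon U\setminus S\to[r]\setminus\{i^\ast\}$. Finally, for each pair $(S,g)$ I would output the function $f_{S,g}\colon U\to[r]$ defined by $f_{S,g}(u)=i^\ast$ if $u\in S$, and $f_{S,g}(u)=g(u)$ otherwise. Correctness is immediate: for any valid placement of $A_1,\dots,A_r$, pick $S$ as above and then apply the inductive hypothesis on $U\setminus S$ to the (still pairwise disjoint) sets $(A_i)_{i\neq i^\ast}$ to obtain a suitable $g$.

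For the size/time bound, let $c'=\max_{i\neq i^\ast} a_i$, so that the inductive parameters for the second stage are $s'=s-c$ and $c'$, with $s'-c'=s-c-c'\leq s-c$. The recursion then gives
\[
|\randfamily|\;\leq\; |\randfamily_1|\cdot \max_{S}|\randfamily_2(S)|
\;\leq\; 2^{\Oh((s-c)\log s)}\log n \cdot 2^{\Oh((s-c)\log s)}\log^{r-2} n
\;=\; 2^{\Oh((s-c)\log s)}\log^{r-1} n,
\]
and the time bound follows analogously, since producing $\randfamily_1$ costs $2^{\Oh((s-c)\log s)}n\log n$, and for each of the $2^{\Oh((s-c)\log s)}\log n$ sets $S$ the recursive call costs $2^{\Oh((s-c)\log s)}n\log^{r-2} n$.

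The only mild subtlety, and the one I would be most careful about in a clean write-up, is verifying that the extra factor $c$ appearing in the bound for the first application of Lemma~\ref{lem:random} is absorbed into $s-c$ via $\min(c,s-c)\leq s-c$; without this observation one would wrongly conclude a bound with $c$ in the exponent, destroying the point of the lemma. Everything else amounts to carefully bookkeeping the $\log^{r-1}n$ factor across the recursion.
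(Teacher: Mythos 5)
Your construction is exactly the one the paper has in mind (the paper dismisses the proof as ``a straightforward induction on $r$''), and the family you build does satisfy the claimed bounds; the correctness argument is fine. The genuine gap is in the quantitative analysis of the recursion. The step
\[
|\randfamily|\;\leq\; 2^{\Oh((s-c)\log s)}\log n \cdot 2^{\Oh((s-c)\log s)}\log^{r-2} n \;=\; 2^{\Oh((s-c)\log s)}\log^{r-1} n
\]
is the classic pitfall of running an induction inside $\Oh$-notation: if you track the hidden constant, the inductive step forces $C_r \geq C_0 + C_{r-1}$ in the worst case (take $c \geq s-c$ and $c'$ tiny compared to $s-c$, so that both $\min(c,s-c)$ and $s'-c'$ are essentially $s-c$), hence the constant grows linearly in $r$ and what your argument actually establishes is $2^{\Oh(r(s-c)\log s)}\log^{r-1}n$. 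This is not cosmetic: the lemma is invoked with non-constant $r=p\leq k+1$ in Claim~\ref{cl:ac:random}, where $s-c\leq 3k$, and a bound of $2^{\Oh(r(s-c)\log s)}=2^{\Oh(k^2\log (k+|I|))}$ would destroy the $2^{\Oh((k+|I|)\log(k+|I|))}$ running time of Theorem~\ref{thm:auxcut}, i.e.\ the very improvement the paper is after.

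The fix is a global rather than level-by-level accounting. Write $a_{(1)}\geq\cdots\geq a_{(r)}$ for the sorted values and $s_j=\sum_{i\geq j}a_{(i)}$. Unrolling your recursion, level $j$ applies Lemma~\ref{lem:random} with parameters $(a_{(j)},s_{j+1})$ and contributes a factor $2^{C_0\min(a_{(j)},\,s_{j+1})\log s_j}\log n$ with one fixed constant $C_0$. Bound the first level's $\min$ by $s_2=s-c$ (as you do) but every later level's $\min$ by $a_{(j)}$; since $\sum_{j\geq 2}a_{(j)}=s-c$, the total exponent is at most $2C_0(s-c)\log s$, as required. Even cleaner: peel off the index with the \emph{smallest} $a_i$ at each step instead of the largest. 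Then level $j$ contributes exponent at most $C_0\,a_{(r-j+1)}\log s$, these telescope to $C_0(s-c)\log s$, and the induction closes with the \emph{same} constant $C_0$ at every level, so no unrolling is needed. Everything else in your write-up (base case, the gluing of $S$ with the recursively produced $g$, the $\log^{r-1}n$ bookkeeping) is correct.
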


\paragraph*{Tree decompositions}

A \emph{tree decomposition} of a graph $G$ is a pair $(T,\beta)$ where $T$ is a tree and $\beta$ is a mapping that assigns to every $t \in V(T)$
a set $\beta(t) \subseteq V(G)$, called a \emph{bag}, such that the following holds: (i) for every $e \in E(G)$ there exists $t \in V(T)$ with $e \subseteq \beta(t)$, and (ii)
for every $v \in V(G)$ the set $\beta^{-1}(v) := \{t \in V(T) \colon v \in \beta(t)\}$ induces a connected nonempty subgraph of $T$.

For a tree decomposition $(T,\beta)$ fix an edge $e = tt' \in E(T)$. The deletion of $e$ from $T$ splits $T$ into two trees $T_1$ and $T_2$, and naturally induces a separation $(A_1,A_2)$ in $G$
with $A_i := \bigcup_{t \in V(T_i)} \beta(t)$, which we henceforth call \emph{the separation associated with $e$}.

The set $\adh_{T,\beta}(e) := A_1 \cap A_2 = \beta(t) \cap \beta(t')$ is called the \emph{adhesion} of $e$.
We suppress the subscript if the decomposition is clear from the context.

Some of our tree decompositions are rooted, that is, the tree $T$ in a tree decomposition $(T,\beta)$
is rooted at some node $r$.
For $s,t \in V(T)$ we say that \emph{$s$ is a descendant of $t$}
or that \emph{$t$ is an ancestor of $s$} if $t$ lies on the unique path from $s$ to the root;
note that a node is both an ancestor and a descendant of itself.
For a node $t$ that is not a root of~$T$, by $\adh_{T,\beta}(t)$ we mean
the adhesion $\adh_{T,\beta}(e)$ where $e$ is the edge connecting $t$ with its parent in $T$.
We extend this notation to $\adh_{T,\beta}(r) = \emptyset$ for the root $r$.
Again, we omit the subscript if the decomposition is clear from the context.

We define the following functions for convenience:
\begin{align*}
\gamma(t) &= \bigcup_{s:\textrm{\ descendant\ of\ }t} \beta(s), &
\alpha(t) &= \gamma(t) \setminus \adh(t), &
G_t &= G[\gamma(t)] - E(G[\adh(t)]).
\end{align*}
The set $\gamma(t)$ is sometimes called the \emph{cone} at $t$ while $\alpha(t)$ is the \emph{component} at $t$. 
We say that a rooted tree decomposition $(T,\beta)$ of $G$ is \emph{compact}
if for every node $t \in V(T)$ for which $\adh(t) \neq \emptyset$ we have
that $G[\alpha(t)]$ is connected and $N_G(\alpha(t)) = \adh(t)$.

We will use the following \emph{cleanup procedure} on a tree decomposition $(T,\beta)$ of a graph $G$: as long as there exists
an edge $st \in E(T)$ with $\beta(s) \subseteq \beta(t)$, contract the edge $st$ in $T$, keeping the name $t$ and the bag $\beta(t)$
at the resulting vertex. We shall say that a node $s$ and bag $\beta(s)$ \emph{disappears} in a cleanup step.
Clearly, the final result $(T',\beta')$ of a cleanup procedure is a tree decomposition of $G$ and every adhesion of $(T',\beta')$ is also an adhesion of $(T,\beta)$. Observe that $|E(T')| \leq |V(G)|$: if one roots $T'$ at an arbitrary vertex,
 going from child to parent on every edge $T'$ we forget at least one vertex of $G$, and every vertex can be forgotten only once.

It is well-known that every rooted tree decomposition can be refined to a compact one; see e.g.~\cite[Lemma 2.8]{BojanczykP16}.
For convenience, we provide a formulation of this fact suited for our needs;
a full proof can be found in Appendix~\ref{app:compactification}.

\begin{lemma}\label{lem:compactification}
Given a graph $G$ and its tree decomposition $(T,\beta)$, one can compute in polynomial time a compact tree decomposition $(\wh{T},\wh{\beta})$ of $G$ such that every bag of $(\wh{T},\wh{\beta})$ 
is a subset of some bag of $(T,\beta)$, and every adhesion of $(\wh{T},\wh{\beta})$ is a subset of some adhesion of $(T,\beta)$.
\end{lemma}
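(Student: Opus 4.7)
The plan is to build $(\wh T,\wh\beta)$ by a top-down recursive procedure driven by connected components of the remainder of the graph minus the current top bag. Root $T$ at an arbitrary vertex $r$ and put $\wh\beta(\hat r)=\beta(r)$ at a fresh root $\hat r$. Compute the connected components $C_1,\ldots,C_\ell$ of $G-\beta(r)$ and set $N_i:=N_G(C_i)\subseteq\beta(r)$. For every $i$, construct recursively a compact tree decomposition of the graph $G[C_i\cup N_i]-E(G[N_i])$ whose root bag contains $N_i$, and attach its root as a child of $\hat r$. Because that root bag is contained in $C_i\cup N_i$ and intersects $\wh\beta(\hat r)=\beta(r)$ in exactly $N_i$, the adhesion on this new edge is precisely $N_i$.

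To drive the recursion for a fixed $C_i$, identify the topmost node $t_i$ of $T$ (in the rooting at $r$) with $\beta(t_i)\cap C_i\neq\emptyset$. Such $t_i$ is well defined because $\{s:\beta(s)\cap C_i\neq\emptyset\}$ is a connected subtree of $T$, and moreover $t_i\neq r$ since $C_i\cap\beta(r)=\emptyset$. Pass to the recursive call the decomposition $(T_{t_i},\beta_i)$, where $T_{t_i}$ is the subtree of $T$ rooted at $t_i$ and $\beta_i(s):=\beta(s)\cap(C_i\cup N_i)$. This is a valid tree decomposition of $G[C_i\cup N_i]-E(G[N_i])$: the sets $\beta_i^{-1}(v)$ are connected in $T_{t_i}$ as intersections of two subtrees, and every edge of the target graph (every edge of $G$ inside $C_i$ or between $C_i$ and $N_i$) was originally covered by some bag of $T_{C_i}\subseteq T_{t_i}$ and survives the intersection with $C_i\cup N_i$. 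Since $t_i$ is a strict descendant of $r$, the recursive input is strictly smaller than $(T,\beta)$, which guarantees termination.

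Two structural facts drive the correctness checks. First, $N_i\subseteq\adh_{T,\beta}(t_i)$: for each $v\in N_i$ pick $u\in C_i$ with $uv\in E(G)$ and take any bag $\beta(s)\supseteq\{u,v\}$; then $s$ is a descendant of $t_i$, while $r$ also contains $v$, so by connectedness of $\beta^{-1}(v)$ both $t_i$ and its parent contain $v$. This gives the adhesion-subset conclusion, and combined with $\beta_i(s)\subseteq\beta(s)$ it yields the bag-subset conclusion by induction on the recursion. Second, compactness at $\hat t_i$ is immediate: the $\gamma$-set of its subtree in $\wh T$ equals $C_i\cup N_i$, so $\alpha(\hat t_i)=C_i$ is connected with $N_G(\alpha(\hat t_i))=N_i=\adh(\hat t_i)$. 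For any deeper node $\hat s$, the subtree-connectedness of $\wh\beta^{-1}(v)$ forces every $v\in N_i\cap\gamma(\hat s)$ into $\adh(\hat s)$, so $\alpha(\hat s)\subseteq C_i$ and the deleted edges of $G[N_i]$ are not incident to $\alpha(\hat s)$; thus compactness of the recursive output with respect to $G[C_i\cup N_i]-E(G[N_i])$ is equivalent to compactness with respect to $G$.

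The main point to watch is precisely this omission of $E(G[N_i])$ at the recursive level: those edges are genuinely absent from every bag of the recursive piece, but they are covered at the parent level by $\wh\beta(\hat r)=\beta(r)\supseteq N_i$, so the merged $(\wh T,\wh\beta)$ remains a tree decomposition of all of $G$. Polynomial running time follows because the $\alpha$-sets produced at different nodes of $\wh T$ are pairwise disjoint subsets of $V(G)$, so the amortised cost of component computations, locating the topmost nodes $t_i$, and bag restrictions is bounded by a small polynomial in $|V(G)|+|V(T)|$.
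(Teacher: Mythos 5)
Your construction is correct, but it takes a genuinely different route from the paper's. The paper repairs the given decomposition in place: it iteratively deletes from $\beta(t)$ and all its descendants' bags any $v\in\adh(t)\setminus N_G(\alpha(t))$, and whenever $G[\alpha(t)]$ is disconnected it duplicates the subtree $T_t$ into two copies restricted to the two sides of a zero-order cut, terminating via the potential $\sum_t|\alpha(t)|^2$. You instead rebuild the decomposition top-down, guided by the connected components of $G$ minus the current root bag, using the old decomposition only to supply the restricted subtrees $(T_{t_i},\beta_i)$ for the recursive calls and to certify $N_i\subseteq\adh_{T,\beta}(t_i)$. Your route makes compactness hold by construction at every node and yields $|V(\wh{T})|\le|V(G)|$ directly, at the price of having to verify that the restrictions $(T_{t_i},\beta_i)$ are valid decompositions of $G[C_i\cup N_i]-E(G[N_i])$ and that compactness relative to these auxiliary graphs transfers to $G$ --- both of which you do correctly (the observation that $\alpha(\hat s)\cap N_i=\emptyset$ for the deeper nodes, so that the erased edges of $G[N_i]$ and the edges leaving $C_i\cup N_i$ are irrelevant, is exactly the right point). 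One small inaccuracy in your last step: the $\alpha$-sets of the nodes of $\wh{T}$ are \emph{not} pairwise disjoint --- along a root-to-leaf path they are nested, since $\alpha(\hat s)\subseteq C_i=\alpha(\hat t_i)$ for every $\hat s$ below $\hat t_i$. The polynomial bound survives via the genuinely disjoint sets $\wh{\beta}(\hat s)\setminus\adh(\hat s)$: each created node $\hat t_i$ satisfies $\wh{\beta}(\hat t_i)\setminus\adh(\hat t_i)\supseteq\beta(t_i)\cap C_i\neq\emptyset$ by the choice of $t_i$ as the topmost node meeting $C_i$, so $\wh{T}$ has at most $|V(G)|$ nodes and each recursive call costs polynomial work.
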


\section{Constructing a lean decomposition: Proof of Theorem~\ref{thm:decomp}}\label{s:lean}

\subsection{Refinement step}

Bellenbaum and Diestel~\cite{BellenbaumD02} defined an improvement step that, given a tree
decomposition and a lean witness, refines the decomposition so that it is in some sense
closer to being lean. We will use the same refinement step, but only for the special single bag case. Hence, in subsequent sections we
focus on finding a single bag lean witness in a current candidate tree decomposition. 
The following observation is a direct consequence of Menger's theorem.

\begin{claim}
\label{claim:equiv}
For a tree decomposition $(T,\beta)$, a node $s \in T$, and subsets $Z_1,Z_2 \subseteq \beta(s)$ with $|Z_1|=|Z_2|$,
the following two conditions are equivalent:
\begin{itemize}
  \item $(s,s,Z_1,Z_2)$ is a single bag lean witness for $(T,\beta)$,
  \item there is a separation $(A_1,A_2)$ of $G$ and a set of vertex disjoint $Z_1-Z_2$ paths $\{P_x\}_{x \in X}$, where $X = A_1 \cap A_2$,
  such that $Z_i \subseteq A_i$ for $i\in \{1,2\}$, $|X| < |Z_1|$, and $x \in V(P_x)$ for every $x \in X$.
\end{itemize}
Moreover given a single bag lean witness $(s,s,Z_1,Z_2)$ one can find the above separation $(A_1,A_2)$ and set of paths $\{P_x\}_{x \in X}$ in polynomial time.
\end{claim}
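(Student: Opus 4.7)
The plan is to derive the claim as a direct consequence of Menger's theorem, exploiting the fact that the single-bag case trivializes the second alternative in the definition of a lean witness. Since $t_1 = t_2 = s$, the unique path in $T$ from $t_1$ to $t_2$ has no edges, so the clause "there exists an edge $e$ on the path with $|\adh(e)| < |Z_1|$" is vacuously false. Therefore $(s,s,Z_1,Z_2)$ is a single bag lean witness if and only if $G$ contains strictly fewer than $|Z_1|$ vertex-disjoint $Z_1$-$Z_2$ paths, and the problem reduces to translating this Menger-type condition into the existence of the separation plus path system described in the second bullet.

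For the forward direction I would apply the vertex version of Menger's theorem: if the maximum number of vertex-disjoint $Z_1$-$Z_2$ paths is strictly less than $|Z_1|$, then there exists a $Z_1$-$Z_2$ separator $X \subseteq V(G)$ with $|X|<|Z_1|$, together with $|X|$ pairwise vertex-disjoint $Z_1$-$Z_2$ paths. By minimality of $X$, each such path meets $X$ in exactly one vertex and every vertex of $X$ lies on exactly one path; reindexing by this bijection yields the family $\{P_x\}_{x\in X}$ with $x\in V(P_x)$. To manufacture the separation $(A_1,A_2)$, I would look at the connected components of $G - X$: since $X$ separates $Z_1$ from $Z_2$, no component contains both a vertex of $Z_1\setminus X$ and a vertex of $Z_2\setminus X$. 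Put the components meeting $Z_1\setminus X$ (together with $X$) into $A_1$ and all remaining components (together with $X$) into $A_2$; this yields a separation of $G$ with $A_1\cap A_2 = X$ and $Z_i\subseteq A_i$, as required.

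The backward direction is immediate: any such $(A_1,A_2)$ exhibits $X=A_1\cap A_2$ as a $Z_1$-$Z_2$ separator of size $<|Z_1|$, so by Menger the number of vertex-disjoint $Z_1$-$Z_2$ paths is at most $|X|<|Z_1|$, certifying that $(s,s,Z_1,Z_2)$ is a single bag lean witness. For the algorithmic statement, I would compute a maximum family of vertex-disjoint $Z_1$-$Z_2$ paths via a polynomial-time max-flow algorithm on the standard auxiliary graph obtained by vertex splitting and adding a super-source/super-sink; the min-cut $X$ and the realizing paths are both extracted in polynomial time, and $(A_1,A_2)$ is then recovered by one BFS/DFS through $G-X$. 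I do not foresee any substantive obstacle: the only bookkeeping point is ensuring that the paths can be indexed exactly by $X$ with $x\in V(P_x)$, but this is inherent to the standard proof of Menger's theorem for a minimum separator and requires no new ideas.
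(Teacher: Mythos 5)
Your proof is correct and is exactly the argument the paper intends: the paper states the claim as "a direct consequence of Menger's theorem" without further detail, and your write-up simply fleshes out that derivation (vacuousness of the tree-path clause for $t_1=t_2=s$, the set version of Menger giving the minimum separator $X$ together with $|X|$ disjoint paths each hitting $X$ once, and the separation built from the components of $G-X$). The only point worth making explicit is that $Z_1\cap Z_2$ may be nonempty, so its vertices are single-vertex paths that must belong to every separator — but the set version of Menger's theorem you invoke already accommodates this.
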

The minimum order of a separation from the second point of Claim~\ref{claim:equiv} is called the {\em{order}}
of the single bag lean witness $(s,s,Z_1,Z_2)$.

To argue that the refinement process stops after a small number of steps, or that it stops
at all, we define a potential for a graph $G$, a tree decomposition $(T,\beta)$, and an integer $k$
as follows. 
\begin{align}
\Pot^1_{G,k}(T, \beta) & = \sum_{t \in T} \max(0, |\beta(t)|-2k-1) \\
\Pot^2_{G,k}(T, \beta) & = \sum_{t \in T} (2^{\min(|\beta(t)|, 2k+1)}-1) \\
\Pot_{G,k}(T, \beta) &= 2^{2k+1} (n+1) \Pot^1_{G,k}(T, \beta) + \Pot^2_{G,k}(T, \beta).
\end{align}
Note that if $T$ has at most $n+1$ bags, then $\Pot_{G,k}(T, \beta)$ is order-equivalent to 
the lexicographical order of $(\Pot^1_{G,k}(T, \beta), \Pot^2_{G,k}(T, \beta))$.
Also note that this potential is different from the one used by Bellenbaum and Diestel in~\cite{BellenbaumD02}:
the potential of~\cite{BellenbaumD02} can be exponential in $n$ while being oblivious to the cut
size $k$.

Given a witness, a single refinement step we use is encapsulated in the following lemma, which is essentially a repetition
of the refinement process of~\cite{BellenbaumD02} with the analysis of the new potential.
We emphasize that in this part, all considered tree decompositions are unrooted.

\begin{lemma}\label{lem:vertex-refine}
Assume we are given a graph $G$, an integer $k$, a tree decomposition $(T, \beta)$ of $G$
with every adhesion of size at most $k$, a node $s \in T$, and
a single bag lean witness $(s,s,Z_1,Z_2)$ with $|Z_1| \le k+1$.
Then one can in polynomial time compute a tree decomposition $(T',\beta')$ of $G$
with every adhesion of size at most $k$ and having at most $n+1$ bags
such that $\Pot_{G,k}(T,\beta) > \Pot_{G,k}(T',\beta')$.
\end{lemma}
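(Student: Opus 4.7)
The plan is to follow the classical refinement step of Bellenbaum and Diestel~\cite{BellenbaumD02}, specialised to a single bag witness, and verify that our parameter-tuned potential $\Pot_{G,k}$ strictly decreases under it. As the first step, I would invoke Claim~\ref{claim:equiv} on the witness $(s,s,Z_1,Z_2)$ to extract in polynomial time a minimum-order separation $(A_1, A_2)$ of $G$ with $Z_i \subseteq A_i$; setting $X := A_1 \cap A_2$, this gives $|X| < |Z_1| \leq k+1$, hence $|X| \leq k$.

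Next I would construct the refined decomposition by the Bellenbaum--Diestel procedure: produce two copies $T^1, T^2$ of $T$, place at node $t^i \in T^i$ the bag $(\beta(t) \cap A_i)$ augmented by a routing subset of $X$ where needed to preserve connectivity of vertices in $X \setminus \beta(s)$, and glue the copies by a new edge $s^1 s^2$ whose adhesion is exactly $X$. By design every adhesion of the resulting decomposition is either a subset of a previous adhesion restricted to $A_i$ or equals $X$, so of size at most $k$. Applying the cleanup procedure from Section~\ref{s:prelim} then contracts edges where a bag is contained in its neighbour, yielding $(T',\beta')$ with at most $n+1$ bags by the forgotten-vertex charging.

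The decisive structural fact for the potential analysis is that the bag at $s$ is replaced by two bags, each of size strictly less than $|\beta(s)|$: writing $|\beta''(s^i)| \leq |\beta(s) \cap A_i| + |X \setminus \beta(s)|$ and using $|Z_{3-i}| > |X|$ together with $Z_{3-i} \subseteq \beta(s) \cap A_{3-i}$, one obtains $|\beta''(s^i)| \leq |\beta(s)| - 1$. From here the analysis splits by $|\beta(s)|$. If $|\beta(s)| > 2k+1$, the shrinkage at $s$ strictly decreases $\Pot^1$ by at least $1$; since any decomposition with at most $n+1$ bags satisfies $\Pot^2 < (n+1)\cdot 2^{2k+1}$, the weighting $2^{2k+1}(n+1)$ on $\Pot^1$ in $\Pot_{G,k}$ ensures this decrease dominates any $\Pot^2$ fluctuation. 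If $|\beta(s)| \leq 2k+1$, both new bags at $s$ have size $\leq 2k$ and contribute $0$ to $\Pot^1$, and the identity $(2^{b_1}-1)+(2^{b_2}-1) < 2^b-1$ for $1 \leq b_1, b_2 < b$ delivers a strict decrease of $\Pot^2$ at $s$.

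The main obstacle is controlling $\Pot^1$ away from $s$. Naive duplication of bags may inflate them via routing of $X \setminus \beta(s)$, potentially pushing some $|\beta'(t^i)|$ past the $2k+1$ threshold and increasing $\Pot^1$; this would be catastrophic, since any $\Pot^1$ increase by $1$ would outweigh the $\Pot^2$ savings at $s$. Here the cleanup procedure is crucial: for the majority of nodes $t$ whose bag essentially lies on one side of the separation, the ``other-side'' copy $t^{3-i}$ has bag subsumed by a neighbour and is contracted away, so no inflation occurs. For the remaining genuinely straddling bags one must argue from the original adhesion bound $|\adh(e)| \leq k$ that the inflation is absorbed --- either by the $2^{2k+1}(n+1)$ multiplier when $|\beta(s)| > 2k+1$, or by a matching $\Pot^2$ decrease through cleanup when $|\beta(s)| \leq 2k+1$. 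Verifying this bookkeeping in every regime of $|\beta(s)|$ and every straddling pattern is the technical core of the proof.
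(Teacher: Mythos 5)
Your outline correctly identifies the construction (two copies of $T$ glued along $X=A_1\cap A_2$, followed by cleanup) and the behaviour of the potential at the bag $s$ itself, but it leaves unproved exactly the step on which the lemma turns, and the mechanism you sketch for that step would not work. Claim~\ref{claim:equiv} gives you not only the separation $(A_1,A_2)$ but also a system of vertex-disjoint $Z_1$--$Z_2$ paths $\{P_x\}_{x\in X}$ with $x\in V(P_x)$; you never extract or use these paths, and without them you cannot control the bags away from $s$. The decisive fact is that \emph{no} bag and no adhesion ever grows: for every node $t$ and $i\in\{1,2\}$ one has $|\beta^i(t^i)|\le|\beta(t)|$, because each routing vertex $x\in\beta^i(t^i)\setminus\beta(t)$ forces, via the subpath $P_x^{3-i}$ (which runs from $\beta(s)$ to the bag introducing $x$ and hence crosses $\beta(t)$), at least one vertex of $V(P_x^{3-i})\setminus\{x\}\subseteq A_{3-i}\setminus A_i$ to lie in $\beta(t)$ and therefore to be deleted from $\beta^i(t^i)$. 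This injection is what rules out the ``inflation'' you worry about; your proposed fallback --- that inflation at straddling bags is ``absorbed by the $2^{2k+1}(n+1)$ multiplier'' --- contradicts your own (correct) observation that any increase of $\Pot^1$ outweighs all $\Pot^2$ savings. The same injection is also needed for the adhesion bound: adhesions acquire routing vertices of $X$ too, so ``a subset of a previous adhesion restricted to $A_i$ or equal to $X$'' is not an accurate description of them.

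Even granting $|\beta^i(t^i)|\le|\beta(t)|$, the case analysis should not be driven by the size of $\beta(s)$ alone: one must account for every bag $t$ with $|\beta(t)|>2k+1$. The correct dichotomy is: either some such $t$ has both copies strictly smaller, in which case the inequality $|\beta(t)|-2k-1 > (|\beta^1(t^1)|-2k-1)+(|\beta^2(t^2)|-2k-1)$ (which follows from $|\beta^1(t^1)|+|\beta^2(t^2)|\le|\beta(t)|+2|X|$ and $|X|\le k$) yields a strict drop of $\Pot^1$; or for every such $t$ one copy retains full size, and then a second path-based argument shows that the bag of the \emph{other} copy is contained in the bag of its neighbour towards $s$ and disappears in cleanup, after which $\Pot^1$ is non-increasing and $\Pot^2$ strictly drops at $s$ (both copies of $s$ shrink strictly and survive the cleanup). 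You explicitly defer this bookkeeping as ``the technical core of the proof''; since that bookkeeping, together with the path injection it relies on, is the substance of the lemma, the proposal as it stands has a genuine gap.
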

\begin{figure}[tb]
\begin{center}
\includegraphics[width=.95\linewidth]{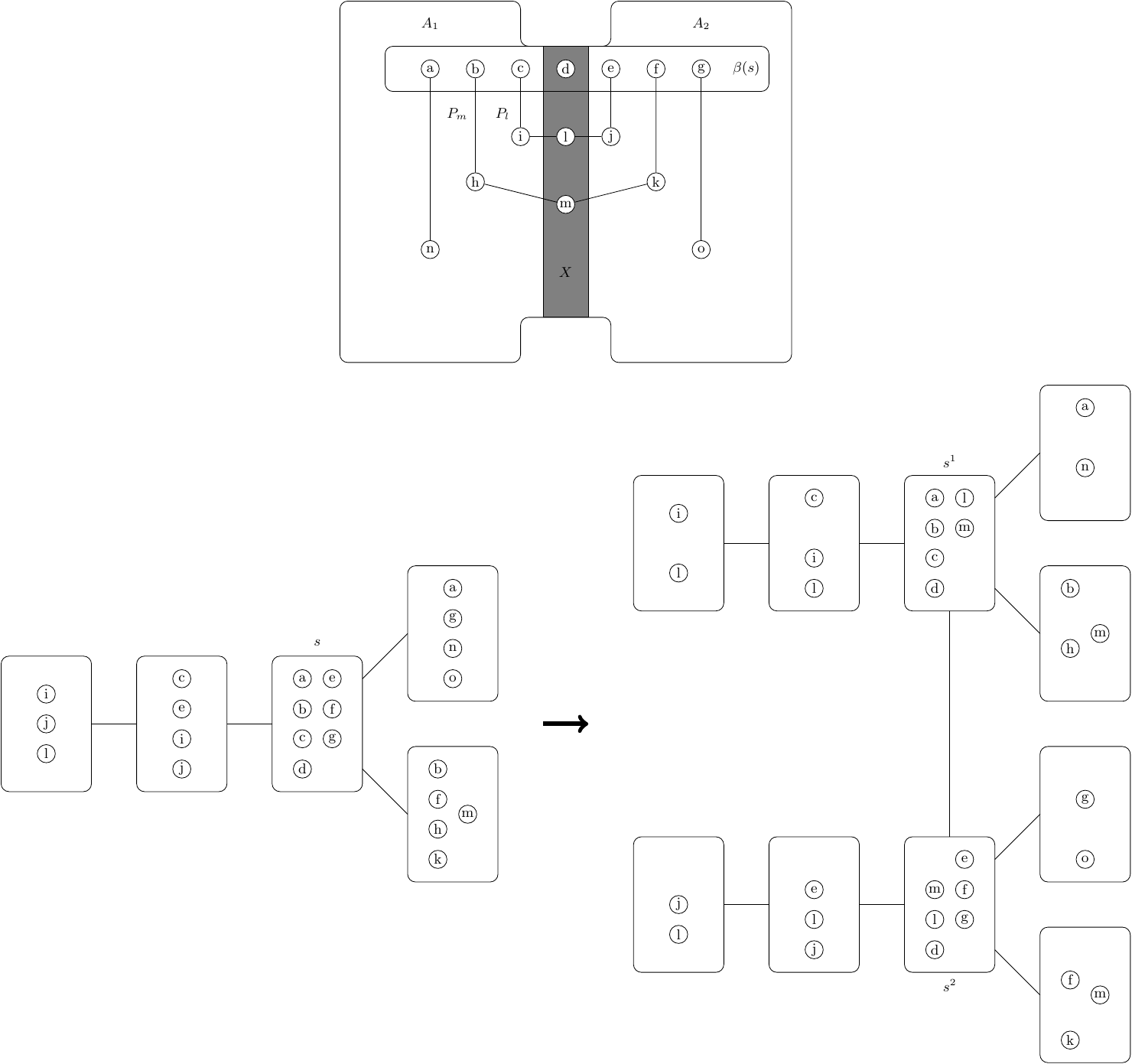}
\caption{Tree decomposition surgery in the proof of Lemma~\ref{lem:vertex-refine}.}\label{fig:s31}
\end{center}
\end{figure}

\begin{proof}
Apply Claim~\ref{claim:equiv}, yielding a separation $(A_1,A_2)$ and a family $\{P_x\}_{x \in X}$ of vertex disjoint $Z_1-Z_2$ paths, where $X=A_1\cap A_2$. 
Note that $k+1 \ge |Z_1| > |X|$, hence the order of $(A_1,A_2)$ is at most $k$.
For every $x \in X$, the path $P_x$ is split by $x$ into two subpaths $P_x^i$, $i=1,2$, each leading from $Z_i$ to $x$.
Note that $V(P_x^i) \setminus \{x\} \subseteq A_i \setminus A_{3-i}$ for $i=1,2$.

We construct a tree decomposition $(T',\beta')$ as follows.
First for every $i = 1,2$, we construct a decomposition $(T^i,\beta^i)$ of $G[A_i]$:
we start with $T^i$ being a copy of $T$, where a node $t \in V(T)$ corresponds
to a node $t^i \in V(T^i)$, and we set $\beta^i(t^i) := \beta(t) \cap A_i$ for every $t \in V(T)$.
Then for every $x \in X \setminus \beta(s)$ we take the node $t_x \in V(T)$ such that $x \in \beta(t_x)$ and $t_x$
is the closest to $s$ in $T$ among such nodes, and insert $x$ into every bag $\beta^i(t^i)$ for $t^i$
lying on the path between $s^i$ (inclusive) and $t_x^i$ (exclusive) in $T^i$. See Figure~\ref{fig:s31}.

Clearly, $(T^i,\beta^i)$ is a tree decomposition of $G[A_i]$ and $X \subseteq \beta^i(s^i)$.
We construct $(T^\circ,\beta^\circ)$ by taking $T^\circ$ to be the disjoint union of $T^1$ and $T^2$, with
the copies of the node $s$ connected by an edge $s^1s^2$, and $\beta^\circ := \beta^1 \cup \beta^2$.
Since $(A_1,A_2)$ is a separation and $X = A_1 \cap A_2$ is present in both bags $\beta^1(s^1)$, $\beta^2(s^2)$, we infer
that $(T^\circ,\beta^\circ)$ is a tree decomposition of $G$.

Finally, we apply the cleanup procedure to $(T^\circ,\beta^\circ)$, thus obtaining the final tree decomposition $(T',\beta')$ where $T'$ has at most $n$ edges.
From the properties of the separation $(A_1,A_2)$ we infer that $\beta^i(s^i) \subsetneq \beta(s)$ and
$\beta^i(s^i) \not\subseteq A_{3-i}$ for $i=1,2$. In particular, the edge $s^1s^2$ is not contracted during the cleanup
and in $(T',\beta')$ the adhesion of the edge $s^1s^2$ is exactly $X$, which is of size at most $k$.

Consider now a node $t^i$ in the tree decomposition $(T^i,\beta^i)$. The set $\beta^i(t^i) \setminus \beta(t)$
consists of some vertices of $X$, namely those vertices $x \in X \setminus \beta(s)$ for which $t$
lies on the path between $s$ (inclusive) and $t_x$ (exclusive).
On the other hand, there is a path $P_x^{3-i}$ from $x$ to $Z_2$. 
From the connectivity property (ii) of a tree decomposition this path has to cross $\beta(t)$,
as $x \not \in \beta(t)$ from the definition of $t_x$, $\beta(t)$ has to contain at least one vertex of $V(P_x^{3-i})\setminus\{x\} \subseteq A_{3-i} \setminus A_i$.
This vertex is not present in $A_i$, hence it is also not present in $\beta^i(t^i)$.
We conclude that $|\beta^i(t^i)\setminus \beta(t)|\leq |\beta(t)\setminus \beta^i(t^i)|$, implying $|\beta^i(t^i)| \leq |\beta(t)|$.
The same argumentation can be applied to every edge $e^i \in E(T^i)$ and adhesion of this edge, showing that $|\sigma^i(e^i)|\leq |\sigma(e)|$.

We infer that every adhesion of $(T^\circ,\beta^\circ)$ (and thus also of $(T',\beta')$) is of size at most $k$.
We are left with analysing the potential decrease.  We make an auxiliary claim.

\begin{claim}
  The first part of the potential is non-increasing, i.e., 
  $\Pot^1_{G,k}(T, \beta) \geq \Pot^1_{G,k}(T', \beta')$.
  Furthermore, if $|\beta(t)| > \max(|\beta^1(t^1)|, |\beta^2(t^2)|)$ for any $t \in V(T)$
  with $|\beta(t)|>2k+1$, then $\Pot_{G,k}^1(T, \beta) > \Pot_{G,k}^1(T', \beta')$.
\end{claim}
\begin{proof}
  It suffices to show that this holds for $(T^\circ,\beta^\circ)$, as the cleanup operation does not increase any of the parts of the potential.
  Fix $t \in V(T)$. We analyse the difference between the contribution to the potential
  of $t$ in $(T,\beta)$ and the copies of $t$ in $(T^\circ,\beta^\circ$).
  First, we have already argued that $|\beta^i(t^i)| \leq |\beta(t)|$
  for $i=1,2$. Consequently, if $|\beta(t)| \leq 2k+1$, then $|\beta^i(t^i)| \leq 2k+1$ for $i=1,2$
  and the discussed contributions are all equal to~$0$.
  Furthermore, if $|\beta^i(t^i)| \leq 2k+1$ for some $i=1,2$, then $\max(|\beta^{i}(t^{i})|-2k-1,0)=0$ and
  $\max(|\beta(t)|-2k-1,0) \geq \max(|\beta^{3-i}(t^{3-i})|-2k-1,0)$ due to
  $|\beta^{3-i}(t^{3-i})| \leq |\beta(t)|$, yielding the claim.
  
  Otherwise, assume that $|\beta(t)| > 2k+1$ and $|\beta^i(t^i)| > 2k+1$ for $i=1,2$.
  Note that $\beta^1(t^1) \cup \beta^2(t^2) \subseteq \beta(t) \cup X$
  while $\beta^1(t^1) \cap \beta^2(t^2) \subseteq X$ and $|X| \leq k$.
  Consequently, 
  \begin{align}
    |\beta(t)|-2k-1 &\geq |\beta(t) \setminus A_2| + |\beta(t) \setminus A_1| - 2k-1\nonumber \\
                    &\geq |(\beta(t) \setminus A_2) \cup X| + |(\beta(t) \setminus A_1) \cup X| - 4k-1\label{eq:raccoon} \\
                    &> |\beta^1(t^1)| -2k-1 + |\beta^2(t^2)| -2k-1.\nonumber
  \end{align}
  We infer that for every bag $t \in V(T)$, the contribution of $t$ to the potential
  $\Pot^1_{G,k}(T,\beta)$ is not smaller than the contribution of the two copies of $t$
  in $\Pot^1_{G,k}(T^\circ,\beta^\circ)$.  

  Finally, assume that $|\beta(t)| > \max(|\beta^1(t^1)|, |\beta^2(t^2)|)$ and $|\beta(t)| > 2k+1$. 
  If $|\beta^i(t^i)| > 2k+1$ for $i=1,2$, then by~\eqref{eq:raccoon} the potential $\Pot^1$ decreases, so assume (w.l.o.g.) that $|\beta^2(t^2)| \leq 2k+1$. 
  Then $t^2$ contributes nothing to $\Pot^1_{G,k}(T^\circ, \beta^\circ)$, while 
  the contribution of $t^1$ to $\Pot^1_{G,k}(T^\circ, \beta^\circ)$ is strictly smaller than
  the contribution of $t$ to $\Pot^1_{G,k}(T, \beta)$.   
\cqed\end{proof}

We proceed with the proof.
Note that since both $T$ and $T'$ have at most $n+1$ vertices due to the cleanup step
$\Pot^1_{G,k}(T, \beta) > \Pot^1_{G,k}(T', \beta')$ entails $\Pot_{G,k}(T, \beta) > \Pot_{G,k}(T', \beta')$. 
Hence, in the remainder we assume that for every $t \in V(T)$ with $|\beta(t)|>2k+1$
we have $\max_{i=1, 2} |\beta^i(t^i)|=|\beta(t)|$.
We argue that in this case, one of $t^1$ and $t^2$ disappears in cleanup. 
Let $t \in V(T)$ be arbitrary with $|\beta(t)|>2k+1$, and assume
w.l.o.g.\ that $|\beta(t)| = |\beta^1(t^1)|$.  
This in particular implies that $s \neq t$.

Let $X' = \beta^1(t^1) \setminus \beta(t)$; note that $X' \subseteq X$.
Recall that for every $x \in X'$ the bag $\beta(t)$ contains at least one vertex of $V(P_x^2) \setminus \{x\} \subseteq A_2 \setminus A_1$ which is no longer present in $\beta^1(t^1)$.
Since $|\beta(t)| = |\beta^1(t^1)|$, $\beta(t)$
contains  exactly one vertex $v_x$ of $V(P_x^2) \setminus \{x\}$ for every $x \in X'$
and 
$$\beta^2(t^2) = (\beta(t) \cap X) \cup \{v_x\colon x \in X'\}.$$
Let $t_\circ$ be the neighbor of $t$ that lies on the path from $t$ to $s$ in $T$.
Fix $x \in X'$; we claim that $v_x \in \beta(t_\circ)$.
This is clear if $v_x \in Z_2$ is the endpoint of $P_x^2$, because then $v_x\in \beta(s)\cap \beta(t)$ and from the connectivity property (ii) of a tree decomposition $v_x$ has to be in $\beta(t_\circ)$ as well. Otherwise, let $w_x$ be the neighbor of $v_x$ on $P_x^2$ in the direction of the endpoint in $Z_2$.
Since $v_x$ is the only vertex of $V(P_x^2) \setminus \{x\}$ in $\beta(t)$, while the endpoint of $P_x^2$ belonging to $Z_2$ lies in $\beta(s)$, 
$w_x$ belongs to a bag at a node $t'$ of $T$ that lies on the path between $t$ and $s$ in $T$.
As $w_x\notin \beta(t)$ while $v_x$ and $w_x$ are adjacent, we infer that $v_x \in \beta(t_\circ)$, as desired.
Since $v_x\in A_2$, this entails $v_x\in \beta^2(t_\circ^2)$.

The argumentation of the previous paragraph shows that $\{v_x\colon x\in X'\}\subseteq \beta^2(t_\circ^2)$.
On the other hand, from the construction we directly have $\beta(t) \cap X \subseteq \beta^2(t_\circ^2)$. 
Thus $\beta^2(t^2) \subseteq \beta^2(t_\circ^2)$ and $t^2$ disappears in the cleanup procedure.
Hence, its contribution to $\Pot_{G,k}(T', \beta')$ can be ignored.

Finally, we claim that under these conditions, $\Pot^2_{G,k}(T, \beta) > \Pot^2_{G,k}(T', \beta')$. 
Take any $t \in V(T)$. 
If either $t^1$ or $t^2$ disappears in $T'$, then the contribution of $t$ to $\Pot^2_{G,k}(T, \beta)$ is not smaller than the total contribution of $t^1$ and $t^2$ to $\Pot^2_{G,k}(T', \beta')$.
Hence, suppose that both $t^1$ and $t^2$ remain in $T'$.
Letting $\ell=|\beta(t)|$, by the assumption we have $\ell \leq 2k+1$ and the contribution
of $t$ to $\Pot^2_{G,k}(T,\beta)$ is equal to  $2^\ell-1$. On the other hand, since $\max_{i=1, 2} |\beta^i(t^i)|<|\beta(t)|$,
the total contribution of $t^1$ and $t^2$ to $\Pot^2_{G,k}(T', \beta')$ is
at most $2 \cdot (2^{\ell-1}-1) = 2^\ell - 2$, which is strictly smaller than $2^{\ell}-1$.
Since (as noted) $|\beta^i(s^i)| < |\beta(s)|$ for $i=1,2$, both $s^1$ and $s^2$ remain in $T'$, so
it follows that $\Pot_{G,k}(T, \beta) > \Pot_{G,k}(T', \beta')$.
\end{proof}

\subsection{Finding lean witness}\label{s:vertex}

In this section, we prove that if there is a bag that is not $(i,i)$-unbreakable for some $1 \leq i \leq k$, then we can efficiently find a single-bag lean witness of this bag.

Let us recall, that in the example finding a $(k,k)$-good edge separation presented in the introduction, in order to find the solution we used the fact that both $A$ and $B$ are connected and we could easily just scan the white components. In the case of a vertex separation of a single-bag lean witness, we do not have this property, but our stronger definition of the unbreakability allows us to find a connected part (Claim~\ref{cl:vertex:connected}) that
leads to the solution.

The proof follows the principles of the 
algorithms for the high-connectivity phase in the technique
of randomized contractions~\cite{randcontr}
and the \textsc{Hypergraph Painting} subroutine of~\cite{minbisection-STOC}; 
in particular, the main ingredient is color-coding. 

\begin{lemma}\label{lem:separations}
Given an $n$-vertex graph $G$, an integer $k$, and a set $S \subseteq V(G)$
with the property that every connected component $D$ of $G-S$ satisfies $|N_G(D)| \leq k$,
one can in time $2^{\Oh(k \log k)} n^{\Oh(1)}$ either
find a separation $(A,B)$ in $G$ such that the order of $(A,B)$ is some $\ell \leq k$ and it holds that $|A \cap S| > \ell$ and $|B \cap S| > \ell$, or correctly conclude
that no separation $(A,B)$ of order $\ell \leq k$ in $G$ has the property that $|A \cap S| > \ell$ and $|B \cap S| > \ell$.
\end{lemma}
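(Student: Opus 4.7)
The plan is to iterate over the candidate order $\ell \in \{1,\ldots,k\}$ and, for each $\ell$, use the color-coding tool of Lemma~\ref{lem:random2} combined with vertex min-cut computations. For any target separation $(A,B)$ of order exactly $\ell$ with $|A\cap S|,|B\cap S|\geq \ell+1$, fix $(\ell+1)$-subsets $R_A\subseteq A\cap S$ and $R_B\subseteq B\cap S$; since $R_A\cap R_B\subseteq A\cap B$ has size at most $\ell$, both $R_A\setminus R_B$ and $R_B\setminus R_A$ are nonempty, and the three pairwise disjoint sets $R_A\setminus R_B$, $R_B\setminus R_A$, and $A\cap B$ have sizes at most $\ell+1$, $\ell+1$, and $\ell$. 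Applying Lemma~\ref{lem:random2} with $r=3$ and $a_1=a_2=\ell+1$, $a_3=\ell$ yields a family $\randfamily$ of $2^{\Oh(k\log k)}\log^2 n$ functions $f\colon V(G)\to\{1,2,3\}$, at least one of which correctly classifies these three sets.

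For each $f\in\randfamily$ and each pair $(r_A,r_B)\in(f^{-1}(1)\cap S)\times(f^{-1}(2)\cap S)$, I compute the minimum $r_A$-$r_B$ vertex cut $C^\star$ in $G$ with cuttable vertices restricted to $f^{-1}(3)$, formally by running vertex-capacitated max-flow with capacity one on vertices of $f^{-1}(3)$ and infinite capacity elsewhere. If $|C^\star|\leq\ell$, I attempt to partition the connected components of $G-C^\star$ between the two sides of the induced separation (placing the component of $r_A$ on the $A$-side and that of $r_B$ on the $B$-side) so that each side contains strictly more than $|C^\star|$ vertices of $S$. This reduces to a subset-sum-type feasibility check on the values $|V(D)\cap S|$ over the components, solvable in polynomial time. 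If any iteration succeeds, I return the resulting separation; otherwise I output ``no such separation.'' The running time is immediate: the outer loop contributes a factor of $k$, color coding yields $2^{\Oh(k\log k)}\log^2 n$ functions, each paired with $n^{\Oh(1)}$ pairs $(r_A,r_B)$ and $n^{\Oh(1)}$ work per pair.

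The hard part will be the correctness argument: showing that whenever a valid separation exists, at least one iteration succeeds. The color-coding guarantee produces a ``correct'' $f$ for which the original separator $A\cap B\subseteq f^{-1}(3)$ is a valid $r_A$-$r_B$ cut of size exactly $\ell$, so the returned min cut $C^\star$ has $|C^\star|\leq\ell$; the delicacy is that $C^\star$ may be strictly smaller than $|A\cap B|$, in which case the size-condition margin $\ell-|C^\star|$ tightens and feasibility is not automatic. I plan to handle this using the structural property that every component $D$ of $G-S$ has $|N_G(D)|\leq k$, which constrains how the separator can cut through components of $G-S$. Concretely, I expect to argue---possibly via a torso-style reduction on $G$ at $S$, or by case analysis on whether $|(A\setminus B)\cap S|$ and $|(B\setminus A)\cap S|$ are small---that at least one of the enumerated iterations produces a cut for which the feasibility check succeeds, perhaps by selecting among multiple min cuts that the max-flow may return, or by strengthening the color coding so that the full separator (not just its overlap with the witnesses) is pinned down in $f^{-1}(3)$.
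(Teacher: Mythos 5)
There is a genuine gap, and you have in fact located it yourself: the completeness argument you defer to your last paragraph is the entire difficulty of the lemma, and the algorithm you propose does not support such an argument. In the iteration with the ``correct'' coloring, the minimum $r_A$--$r_B$ vertex cut $C^\star$ restricted to $f^{-1}(3)$ may be much smaller than $\ell$ and completely misaligned with the target separation: it can hug a small region around $r_A$, so that the side of the induced separation containing $r_A$ has at most $|C^\star|$ vertices of $S$, and since your feasibility check can only move whole components of $G-C^\star$ between the two sides (of which there may be exactly two), the deficit cannot be repaired. A single source/sink pair carries no information about how many vertices of $S$ each side must absorb, which is precisely what the statement demands, and nothing in the proposal shows that some other coloring, pair $(r_A,r_B)$, or choice among minimum cuts succeeds. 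There is also a smaller slip: $R_A\setminus R_B$, $R_B\setminus R_A$ and $A\cap B$ need not be pairwise disjoint, because $R_A\subseteq A\cap S$ may meet $A\cap B$; one cannot in general choose $R_A$ inside $(A\setminus B)\cap S$, which may contain as few as one vertex.

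The missing idea, which the paper's proof supplies, is to pin down \emph{before} any cut computation, for each side, either a set of more than $k$ vertices of $S$ that the side must contain, or the full traces $A\cap S$ and $B\cap S$; connectivity in the torso $H$ of $S$ is what makes this possible. The paper distinguishes two cases. If both $H[(A\setminus B)\cap S]$ and $H[(B\setminus A)\cap S]$ contain connected sets of size $k+1$, these two sets are colored together with the at most $k^2$ ``touched'' vertices (those in $A\cap B$ or in the attachment of a component of $G-S$ meeting $A\cap B$); the touched vertices block every $H$-edge crossing the separation, so the connected component of $H[f^{-1}(1)]$ containing the first set lies entirely in $(A\setminus B)\cap S$ (symmetrically for the other side), and then \emph{any} minimum-order separation between these two recovered sets of size at least $k+1>\ell$ is a valid output. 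Otherwise every component of, say, $H[(A\setminus B)\cap S]$ has at most $k$ vertices; one then argues that a solution with inclusion-minimal $A\setminus B$ has $H[(A\setminus B)\cap S]$ connected with every vertex of $H$-degree below $k^2$, so coloring $(A\setminus B)\cap S$, $A\cap B\cap S$, and $N_H((A\setminus B)\cap S)$ minus the separator recovers $A\cap S$ and $B\cap S$ exactly, and a minimum-order separation between them is automatically a solution. Your fallback suggestions (torso reduction, case analysis on small sides, choosing among min cuts) gesture at these ingredients, but without an argument of this kind the min-cut-plus-subset-sum step cannot be certified correct, so the proof as proposed does not go through.
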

\begin{proof}
Recall that we are given an $n$-vertex graph $G$, an integer $k$, and a set $S \subseteq V(G)$
such that for every connected component $D$ of $G-S$ we have $|N_G(D)| \leq k$.
We refer to the set $N_G(D)$ as the \emph{attachment} of $D$ in $S$ and we say that
$X \subseteq S$ is an \emph{attachment set} if there exists a component $D$ of $G-S$
such that $X$ is the attachment of $D$ in $S$.

Our goal is to test whether there is a separation $(A,B)$ in $G$ of order
$\ell \leq k$ such that $|A \cap S|, |B \cap S| > \ell$. 
We call such a separation $(A,B)$ a \emph{solution}.
Since the problem is easily solved in time $n^{\Oh(k)}$, we assume~$k \geq 2$.

We define $H$ to be the torso of $S$ in $G$. 
That is, we take $V(H) = S$ and two vertices $u,v \in S$
are connected by an edge in $H$ if $uv \in E(G)$ or there exists a component $D$
of $G-S$ such that $u,v\in N_G(D)$.
Note that every attachment set becomes a clique in $H$.

Assume that there exists a solution $(A,B)$ of order $\ell \leq k$ and fix one such solution.
We say that a component $D$ of $G-S$ is \emph{touched} if $D \cap A \cap B \neq \emptyset$.
A vertex $t \in S$ is \emph{touched} if either $t \in A \cap B$ 
or there exists a touched component $D$ with $t \in N_G(D)$.
Note that every component $D$ gives raise to at most $k$ touched vertices.
There are at most $k$ touched components of $G-S$ and at most $k^2$ touched vertices of $S$.

We now consider two cases. 

\paragraph*{Case 1:} We first assume that both $H[(A \setminus B) \cap S]$ and
$H[(B \setminus A) \cap S]$
contain connected components of size at least $k+1$ each.
Select any set $S_A \subseteq (A \setminus B) \cap S$ consisting of $k+1$ vertices such that $H[S_A]$ is connected,
and similarly select $S_B \subseteq (B \setminus A) \cap S$.
Let $C$ be the set of touched vertices that are not contained in $S_A \cup S_B$.

We apply Lemma~\ref{lem:random2} to the universe $S$ with $r=3$, $a_1=a_2=k+1$
and $a_3=c=k^2$. In time $2^{\Oh(k \log k)} n \log^2 n$ we obtain a family $\randfamily$
of $2^{\Oh(k \log k)} \log^2 n$ functions $f \colon S \to \{1,2,3\}$ such that 
there exists $f \in \randfamily$ with
$S_A \subseteq f^{-1}(1)$, $C \subseteq f^{-1}(2)$
and $S_B \subseteq f^{-1}(3)$.
Note that if this is the case, then $S_A$ is contained in one connected component of
$H[f^{-1}(1)]$ and $S_B$ is contained in one connected component of $H[f^{-1}(3)]$.

The following claim follows directly from the definition of touched vertices.
\begin{claim}\label{cl:vertex:use-f}
If $f \colon S \to \{1,2,3\}$ is such that 
$S_A \subseteq f^{-1}(1)$, $C \subseteq f^{-1}(2)$,
and $S_B \subseteq f^{-1}(3)$, then
the connected component of $H[f^{-1}(1)]$ containing $S_A$ is completely contained in $(A \setminus B) \cap S$
and, symmetrically, 
the connected component of $H[f^{-1}(3)]$ containing $S_B$ is completely contained in $(B \setminus A) \cap S$.
\end{claim}

By iterating over all the fewer than $|\randfamily| \cdot |S|^2 = 2^{\Oh(k \log k)} n^2 \log^2 n$
options, we guess the function $f \in \randfamily$ satisfying the assumptions of Claim~\ref{cl:vertex:use-f} 
and the connected components $A_0$ and $B_0$ of $H[f^{-1}(1)]$ and $H[f^{-1}(3)]$ that contain $S_A$ and $S_B$, respectively.
Note that $|A_0| \geq |S_A| = k+1$ and $|B_0| \geq |S_B| = k+1$. On the other hand,
since $A_0 \subseteq A \setminus B$ and $B_0 \subseteq B \setminus A$, while $|A \cap B| \leq \ell$, one can in polynomial
time find a separation $(A',B')$ of order at most $\ell$ with $A_0 \subseteq A'$ and $B_0 \subseteq B'$. While we may not necessarily obtain $(A',B') = (A,B)$, the separation
$(A',B')$ satisfies the desired properties
and can be returned by the algorithm.
This finishes the description of the first case.

\paragraph*{Case 2:} We now assume that either every connected component of $H[(A \setminus B) \cap S]$
has at most $k$ vertices, or this holds for $H[(B \setminus A) \cap S]$.
By symmetry, without loss of generality assume the former.
In particular, this implies that every connected component of $G[A \setminus B]$ contains
at most $k$ vertices of $S$.

\begin{claim}\label{cl:vertex:degH}
For every $v \in (A \setminus B) \cap S$, the degree of $v$ in $H$ is less than $k^2$
\end{claim}
\begin{proof}
Consider a vertex $w \in N_H(v) \cap B$. Since $vw \in E(H)$, either $vw \in E(G)$
or there exists a component $D$ of $G-S$ with $v,w \in N(D)$. In either case, either
$w \in A \cap B$ or $D \cap A \cap B \neq \emptyset$.
Since $|A \cap B| \leq k$ and every component $D$ of $G-S$ has attachment of size at most $k$,
there are at most $(k-1)k$ vertices of $N_H(v) \cap B$. 
Now, the connected component of $H[(A \setminus B) \cap S]$ has at most $k$ vertices, thus, $N_H(v) \cap (A \setminus B)$ has at most
$k-1$ vertices. In total, $N_H(v)$ has at most $k^2-1$ vertices.
\cqed\end{proof}

The following claim follows by trivial algebraic manipulations.
\begin{claim}\label{cl:vertex:solutionbalance}
A separation $(A',B')$ is a solution if and only if $|A' \cap B'| \leq k$
and 
\begin{equation}\label{eq:vertex:solutionbalance}
|(A' \setminus B') \cap S|> |(A' \cap B') \setminus S|\qquad\textrm{and}\qquad |(B' \setminus A') \cap S| > |(A' \cap B') \setminus S|.
\end{equation}
\end{claim}
We now establish the crucial observation: the ``imbalancedness'' requirement of~\eqref{eq:vertex:solutionbalance}
implies that a solution $(A',B')$ with inclusion-wise minimal $A' \setminus B'$ keeps $(A' \setminus B') \cap S$ connected in $H$.
\begin{claim}\label{cl:vertex:connected}
If $(A',B')$ is a solution with inclusion-wise minimal set $A' \setminus B'$ among all solutions, then
$H[(A' \setminus B') \cap S]$ is connected and, furthermore, every connected component of $G[A' \setminus B']$ contains a vertex of~$S$.
\end{claim}
\begin{proof}
First, assume that there exists a connected component $D$ of $G[A' \setminus B']$ that contains no vertex of $S$.
Then $(A' \setminus D, B' \cup D)$ is a solution as well, a contradiction to the minimality of $A' \setminus B'$. 

Second, assume $(A' \setminus B') \cap S$ can be partitioned into $K_1 \uplus K_2$ with $K_1,K_2 \neq \emptyset$
and no edge of $H$ connecting $K_1$ with $K_2$. See Figure~\ref{fig:claim37} for an illustration.
For $i=1,2$, let $C_i$ be the union of all connected components of $G[A' \setminus B']$ that contain a vertex of $K_i$.
By the construction of $H$, $C_1 \cap C_2 = \emptyset$. Since every connected component of $G[A' \setminus B']$ contains a vertex of $S$,
$C_1 \cup C_2 = A' \setminus B'$. That is, $C_1 \uplus C_2$ is a partition of $A' \setminus B'$.

For $i=1,2$, let $A_i = N_G[C_i]$ and $B_i = V(G) \setminus C_i$. Then $(A_i,B_i)$ is a separation in $G$ and:
\begin{align*}
A_i \setminus B_i & = C_i \subsetneq A' \setminus B' & \Rightarrow
(A_i \setminus B_i) \cap S &= C_i \cap S = K_i, \\
B_i \setminus A_i &= V(G) \setminus N_G[C_i] \supseteq V(G) \setminus N_G[A' \setminus B'] \supseteq B' \setminus A' & 
\Rightarrow (B_i \setminus A_i) \cap S & \supseteq (B' \setminus A') \cap S, \\
A_i \cap B_i &= N_G(C_i) \subseteq N_G(A' \setminus B') \subseteq A' \cap B'. & &
\end{align*}
Furthermore, by the construction of $H$, no component $D$ of $G-S$ contains both a vertex of $K_1$ and $K_2$ in its neighborhood. 
Consequently, $N(C_1) \setminus S$ is disjoint from $N(C_2) \setminus S$. We infer that
$$\left((A_1 \cap B_1) \setminus S\right) \cap \left((A_2 \cap B_2) \setminus S\right) = \emptyset.$$
Therefore at least one of the separations $(A_1,B_1)$ and $(A_2,B_2)$ satisfy~\eqref{eq:vertex:solutionbalance}
and thus, by Claim~\ref{cl:vertex:solutionbalance}, is a solution. That contradicts the minimality of $(A',B')$ and proves the claim.
\cqed\end{proof}

By Claim~\ref{cl:vertex:connected}, we can without loss of generality assume that $H[(A \setminus B) \cap S]$ is connected.
By the assumptions of this case, $|(A \setminus B) \cap S| \leq k$.
Let $S_A = (A \setminus B) \cap S$, $C = A \cap B \cap S$, and $S_B = N_H(S_A) \setminus C$. Observe that $S_A$, $C$, and $S_B$ are disjoint subsets of $S$.
Since $(A,B)$ is a separation in $G$, no edge of $G$ connects a vertex of $S_A$ and a vertex of $S_B$.
Furthermore, by Claim~\ref{cl:vertex:degH} we have that $|S_B| < k^3$.

We apply Lemma~\ref{lem:random2} to the universe $S$ with $r=3$, $a_1=a_2=k$
and $a_3=c=k^3$. In time $2^{\Oh(k \log k)} n \log^2 n$ we obtain a family $\randfamily$
of $2^{\Oh(k \log k)} \log^2 n$ functions $f \colon S \to \{1,2,3\}$ such that 
there exists $f \in \randfamily$ with
$S_A \subseteq f^{-1}(1)$, $C \subseteq f^{-1}(2)$
and $S_B \subseteq f^{-1}(3)$.

By the definition of $H$, for every connected component $D$ of $G-S$, if $N(D)$ contains a vertex of $S_A$, then $N(D) \cap (A \setminus B) \subseteq S_A$,
$N(D) \cap (A \cap B) \subseteq C$, and $N(D) \cap (B \setminus A) \subseteq S_B$. 
Furthermore, $S \cap N_G(S_A) \subseteq N_H(S_A) \subseteq C \cup S_B$ by the definition of $S_B$. We infer that:
\begin{claim}\label{cl:vertex:use-f-2}
If $f \colon S \to \{1,2,3\}$ is such that 
$S_A \subseteq f^{-1}(1)$, $C \subseteq f^{-1}(2)$,
and $S_B \subseteq f^{-1}(3)$, then
$S_A$ is one of the connected components of $H[f^{-1}(1)]$,
$A \cap S = N_H[S_A] \cap f^{-1}(\{1,2\})$,
and $B \cap S = S \setminus S_A$.
\end{claim}
We infer that there are at most $|S| \cdot |\randfamily| \leq 2^{\Oh(k \log k)} n \log^2 n$ choices for the set $S_A$ and, consequently,
for the sets $A \cap S$ and $B \cap S$.
Since $(A,B)$ is a solution, a minimum order separation $(A',B')$ between $A \cap S$ and $B \cap S$ is a solution as well. 
Consequently, we can find a solution if there is one as follows: for every $f \in \randfamily$ and every connected component
$S_A'$ of $H[f^{-1}(1)]$, find a minimum-order separation $(A',B')$ with $N_H[S_A'] \cap f^{-1}(\{1,2\}) \subseteq A'$
and $S \setminus S_A' \subseteq B'$ and check if $(A',B')$ is a solution. Return a solution if one is found, otherwise report that there is no solution.

This finishes the proof of Lemma~\ref{lem:separations}.
\end{proof}

\begin{figure}[bt]
\centering
\includegraphics[width=.25\linewidth]{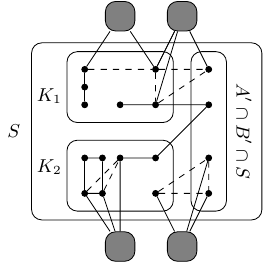}
\caption{Proof of Claim~\ref{cl:vertex:connected}. Components
  of $G-S$ are gray rectangles, edges of $E(H) \setminus E(G)$ are dashed.}
\label{fig:claim37}
\end{figure}

\subsection{The algorithm}\label{s:algorithm}

With Lemma~\ref{lem:separations} established, we now complete the proof of
Theorem~\ref{thm:decomp}.

\begin{proof}[Proof of Theorem~\ref{thm:decomp}.]
First, we can assume that $G$ is connected, as otherwise we can compute a tree decomposition
for every component separately, and then glue them up in an arbitrary fashion.

We start with a naive unrooted tree decomposition $(T,\beta)$ that has a single node whose bag is the entire vertex set. 
Then we iteratively improve it using Lemma~\ref{lem:vertex-refine},
until it satisfies the conditions of Theorem~\ref{thm:decomp}, except for compactness,
      which we will handle in the end.
We maintain the invariant that every adhesion of $(T,\beta)$ is of size at most $k$.
At every step the potential $\Pot_{G,k}(T, \beta)$ will decrease, leading to at most
$\Pot_{G,k}(T,\beta)=\Oh(4^kn^2)$ steps of the algorithm.

Let us now elaborate on a single step of the algorithm.
There is one reason why $(T,\beta)$ may not satisfy
the conditions of Theorem~\ref{thm:decomp}: namely, it contains a bag that is not $(i,i)$-unbreakable for some $1 \leq i \leq k$.

Consider a bag $S := \beta(t)$ that is not $(i,i)$-unbreakable for some $1 \leq i \leq k$.
Since every adhesion of $(T,\beta)$ is of size at most $k$, we have
that for every connected component $D$ of $G-S$ it holds that $|N_G(D)| \leq k$.
Consequently, Lemma~\ref{lem:separations} allows us to find a single-bag lean witness of order
at most $k$ for the node $t$
in time $2^{\Oh(k \log k)}\cdot n^{\Oh(1)}$. 

Suppose we uncovered a single-bag lean witness for the node $t$. 
Then we may refine the decomposition by applying Lemma~\ref{lem:vertex-refine}, and proceed iteratively with the refined decomposition.
As asserted by Lemma~\ref{lem:vertex-refine}, the potential $\Pot_{G,k}(T,\beta)$ strictly decreases in each iteration,
while the number of edges in the decomposition is always upper bounded by $n$.


Observe that the potential $\Pot_{G,k}(T,\beta)$ is always upper bounded by $2^{\Oh(k)} \cdot n^{\Oh(1)}$ 
and every iteration can be executed in time $2^{\Oh(k\log k)}\cdot n^{\Oh(1)}$.
Hence, we conclude that the refinement process finishes within the claimed time complexity and outputs an unrooted tree decomposition $(T,\beta)$
that satisfies all the requirements of Theorem~\ref{thm:decomp}, except for being compact.
This can be remedied by applying the algorithm of Lemma~\ref{lem:compactification}.
Note that neither the unbreakability of bags nor the upper bound on the sizes of adhesions can deteriorate as a result of applying the algorithm of Lemma~\ref{lem:compactification},
as every bag (resp. every adhesion) of the obtained tree decomposition is a subset of a bag (resp. an adhesion) of the original one.
\end{proof}

\section{Applications}\label{s:app}

In this section we exemplify how the decomposition provided by Theorem~\ref{thm:decomp} can be used to give faster fixed-parameter algorithms for cut problems, using the problems
{\sc{Minimum Bisection}}, {\sc{Steiner Cut}}, and {\sc{Steiner Multicut}} as examples.
Similarly as in~\cite{minbisection-STOC}, the idea is to give a bottom-up dynamic programming algorithm working on the constructed tree decomposition.
Each step of this dynamic programming boils down to solving the problem on a highly connected (formally, $(k,k)$-unbreakable) bag, 
where the tables pre-computed for children serve as ``black-boxes'' attached to adhesions between the bag and its children bags.
The key is to use color coding as in the high-connectivity phase of the randomized contractions technique~\cite{randcontr}.

A dynamic programming of this kind for {\sc{Minimum Bisection}} was already presented in~\cite{minbisection-STOC}. Unfortunately, there are multiple details that does not allow us to use this presentation as is. In particular, this dynamic programming used directly leads to the worse running time than from Theorem~\ref{thm:bisection}.
In what follows, we present the main ideas of our algorithm and
explain how it relates to the previous one. 


 In a single step of the dynamic programming, for $t \in T$, we aim to compute a size of a minimum cut $(A,B)$ of $G_t$ with the property $|A \cap \beta(t)| \leq k$ that splits $\adh(t)$ in a prescribed way and has prescribed size of $A$. The $(k,k)$-unbreakability of $\beta(t)$ allows us to make the first aforementioned assumption. 
Let $(A,B)$ be the sought minimum cut. Define $A^\ast := A \cap \beta(t)$; we have assumed that $|A^\ast| \leq k$.

We say that an adhesion of $\beta(t)$ is \emph{broken} if is not fully contained in $A$ or in $B$. Let $B^\ast$ be the set of vertices from $B$ that are either: the endpoints of the edges in the cut or the vertices in some broken adhesion. We will prove in our analysis that there are at most $k$ broken adhesions.
This in turn implies $|B^\ast| = \Oh(k^2)$, as every adhesion is of size at most $k$.

Let us color every vertex $v \in \beta(t)$ white or black independently with probability $\frac{1}{2}$ each. The set $A^\ast$ will be colored white and $B^\ast$ black with probability at least $2^{-\Oh(k^2)}$,
  as $|A^\ast| \leq k$ and $|B^\ast| = \Oh(k^2)$ and hence we request the color of $\Oh(k^2)$ vertices. 
  
Next, we define a graph $H$ to be a torso of $\beta(t)$, i.e., $H$ is constructed from the subgraph of $G$ induced by $\beta(t)$ by turning into a clique every adhesion of $\beta(t)$.
We consider white connected components of $H$.
As we colored vertices of $B^\ast$ black, those components have to be completely contained either in $A^\ast$ or completely disjoint with $A^\ast$. We inspect white connected components of $\beta(t)$ one-by-one and try to add
them to the constructed candidate for $A^\ast$. The definition of $H$ provides us
significant independence of the decisions what allows us to use multidimensional 
knapsack-type dynamic programming to find the size of the cut $(A,B)$. The randomized time $2^{\Oh(k^2)}$ can be boosted to $2^{\Oh(k \log(k))}$ and derandomized. We will present an already derandomized version using Lemma~\ref{lem:random}.

In contrast, the color coding algorithm from~\cite{minbisection-STOC} works in two levels of randomization. The first level colors edges and adhesions of a given bag $\beta(t)$ to properly color the $A$ sidde (i.e., the smaller side) of the cut, the edges of the cut and broken adhesions (to identify them later). The second level colors the candidates for broken adhesions with quantiative values corresponding to partitions of those adhesions (as the authors prove their numbers to be bounded) to properly assign them. Then, they
inspect proper components of $\beta(t)$ in dynamic programming fashion to find the solution. Our algorithm colors just the vertices of $\beta(t)$ finding the correct partitions of broken adhesions in a single phase. The method from the previous algorithm applied to our decomposition from Theorem~\ref{thm:decomp} would give $2^{\Oh(k^2 \log(k))}$ factor in the running time bound. We utilize the better bounds in our decomposition to simplify the color coding and reduce the time. On the other hand,
our dynamic programming applied to the decomposition from~\cite{minbisection-STOC}
would give doubly-exponential running time bound.

Also, as mentioned in the introduction, the decomposition of~\cite{minbisection-STOC} ensures that every bag $\beta(t)$ is suitably unbreakable in the graph $G_t$, while the decomposition of Theorem~\ref{thm:decomp} only ensures unbreakability in the whole graph $G$. This feature requires a somewhat careful treatment in the proof of correctness of the algorithm, and there are more details of similar nature regarding the fine analysis of the running time bound. 

Thus, we decided to include the full description of dynamic programming for {\sc{Minimum Bisection}}, noting that it follows the approach proposed and executed in~\cite{minbisection-STOC}. Two other problems {\sc{Steiner Cut}}, and {\sc{Steiner Multicut}} that were previously solved by randomized contraction technique in \cite{randcontr} and \cite{BringmannHML16} follow similar dynamic programming routines and we present them in the subsequent subsections.


\subsection{Minimum Bisection}\label{ss:bisection}
\newcommand{\cnstr}{\mathcal{C}}
\newcommand{\fbalance}{\mathfrak{b}}
\newcommand{\fcost}{\mathfrak{c}}
\newcommand{\symM}{M'}
\newcommand{\lasta}{\overrightarrow{a}}
\newcommand{\geti}{\overline{i}}

Our first application of the decomposition of Theorem~\ref{thm:decomp}
is an algorithm for the \textsc{Minimum Bisection} problem. Recall that in this problem, we are given a graph $G$ with even number
of vertices and an integer $k$, and the task is to find an edge cut $(A,B)$ of order
at most $k$ such that $|A| = |B|$.
We prove Theorem~\ref{thm:bisection}. That is,
we show that, with help of the decomposition of Theorem~\ref{thm:decomp},
one can obtain a fixed-parameter algorithm with better dependency on the parameter
than the one of~\cite{minbisection-STOC}.

\begin{proof}[Proof of Theorem~\ref{thm:bisection}.]
Let $(G,k)$ be an input to \textsc{Minimum Bisection} and let $n = |V(G)|$.
Without loss of generality, assume $k \geq 2$.

We start by invoking the algorithm of Theorem~\ref{thm:decomp} to $G$ and $k$,
obtaining a rooted compact
tree decomposition $(T,\beta)$ of $G$ whose every bag is $(k,k)$-edge-unbreakable
and every adhesion is of size at most $k$. The running time of this step 
is $2^{\Oh(k \log k)} n^{\Oh(1)}$.

We perform a bottom-up dynamic programming
algorithm on $(T,\beta)$. For every node $t \in V(T)$, every set $A^\adh \subseteq \adh(t)$,
and every integer $0 \leq n^\circ \leq |\alpha(t)|$, we compute an integer
$M[t, A^\adh, n^\circ] \in \{0, 1, 2, \ldots, k, +\infty\}$ with the following properties.
\begin{enumerate}[label={(\alph*)}]
\item If $M[t,A^\adh,n^\circ] \neq +\infty$, then there exists an edge cut $(A,B)$
of $G_t$ such that:\label{i:bis:cmpl}
\begin{itemize}
\item $A \cap \adh(t) = A^\adh$,
\item $|A \cap \alpha(t)| = n^\circ$,
\item $|A \cap \beta(t)| \leq k$, and
\item the order of $(A,B)$ is at most $M[t,A^\adh,n^\circ]$ in $G_t$.
\end{itemize}
\item For every edge cut $(A,B)$ of the \emph{entire graph $G$} that satisfies:\label{i:bis:sound}
\begin{itemize}
\item $A \cap \adh(t) = A^\adh$,
\item $|A \cap \alpha(t)| = n^\circ$, 
\item $|A \cap \beta(t)| \leq k$, and
\item the order of $(A,B)$ is at most $k$ in $G$,
\end{itemize}
the order of the edge cut $(A \cap \gamma(t), B \cap \gamma(t))$ is at least
$M[t,A^\adh,n^\circ]$ in $G_t$.
\end{enumerate}
Let us first formally observe that the table $M[\cdot]$ is sufficient for our purposes.
\begin{claim}\label{cl:bis:M}
$(G,k)$ is a yes-instance to \textsc{Minimum Bisection} if and only if
$M[r,\emptyset,n/2] \neq +\infty$ for the root $r$ of $T$.
\end{claim}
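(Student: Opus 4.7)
The plan is to handle the two directions separately, starting with the specialization of properties~\ref{i:bis:cmpl} and~\ref{i:bis:sound} at the root: since $\adh(r) = \emptyset$, we have $\alpha(r) = \gamma(r) = V(G)$ and $G_r = G$, so the constraint $A \cap \adh(r) = \emptyset$ is vacuous, $|A \cap \alpha(r)| = |A|$, and edge cuts of $G_r$ coincide with edge cuts of $G$. With this in hand, both implications reduce to essentially immediate invocations of the guarantees of $M[\cdot]$.

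For the ``if'' direction, assuming $M[r,\emptyset,n/2] \leq k$, property~\ref{i:bis:cmpl} produces an edge cut $(A,B)$ of $G$ with $|A| = n/2$ (hence $|B| = n/2$) of order at most $M[r,\emptyset,n/2] \leq k$, which is exactly a bisection of $G$ of order at most $k$.

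For the ``only if'' direction, I will start with an arbitrary bisection $(A,B)$ of $G$ of order at most $k$ and apply property~\ref{i:bis:sound} to it. Three of the four hypotheses of~\ref{i:bis:sound} are immediate; the only non-trivial one is $|A \cap \beta(r)| \leq k$, and this is precisely where Theorem~\ref{thm:decomp} enters the argument. Since $\beta(r)$ is $(k,k)$-edge-unbreakable and $(A,B)$ has order at most $k$, at least one of $|A \cap \beta(r)|$ and $|B \cap \beta(r)|$ is at most $k$; because $|A| = |B| = n/2$, swapping the labels of $A$ and $B$ preserves all the other hypotheses (notably $|A \cap \alpha(r)| = n/2$), so without loss of generality $|A \cap \beta(r)| \leq k$. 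Property~\ref{i:bis:sound} then forces $M[r,\emptyset,n/2]$ to be at most the order of $(A,B)$ in $G_r = G$, which is in turn at most $k$, and in particular $M[r,\emptyset,n/2] \neq +\infty$. The only subtle step in the whole argument is this swap: without the unbreakability of $\beta(r)$ one could not bring an arbitrary bisection into the form tracked by the DP, as $M[\cdot]$ is only promised to record cuts whose ``computed'' side meets $\beta(r)$ in at most $k$ vertices.
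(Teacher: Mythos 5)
Your proof is correct and follows essentially the same route as the paper's: the ``if'' direction reads off the cut guaranteed by Point~\ref{i:bis:cmpl} at the root, and the ``only if'' direction uses the $(k,k)$-edge-unbreakability of $\beta(r)$ to relabel the sides so that the hypothesis $|A \cap \beta(r)| \leq k$ of Point~\ref{i:bis:sound} holds (the relabeling being harmless precisely because $|A|=|B|=n/2$). Your write-up just makes explicit a few details the paper leaves implicit, such as why the swap preserves the balance condition.
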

\begin{proof}
In one direction, note that the edge cut $(A,B)$ whose existence is asserted by Point~\ref{i:bis:cmpl} for $M[r,\emptyset,n/2] \neq +\infty$ witnesses that $(G,k)$ is a yes-instance, as $G_r = G$.

In the other direction, let $(A,B)$ be an edge cut of $G$ of order at most $k$
such that $|A| = |B| = n/2$. Since $\beta(r)$ is $(k,k)$-edge-unbreakable, we have
$|A \cap \beta(r)| \leq k$ or $|B \cap \beta(r)| \leq k$; w.l.o.g. assume the former. 
Since $\adh(r) = \emptyset$, the edge cut $(A,B)$ fits into the assumptions for Point~\ref{i:bis:sound} for the cell $M[r,\emptyset,n/2]$, finishing the proof.
\cqed\end{proof}

Thus, to prove Theorem~\ref{thm:bisection}, it suffices to show how to compute in time
$2^{\Oh(k \log k)} n^{\Oh(1)}$ entries $M[t,\cdot,\cdot]$ for a fixed node $t \in V(T)$, given
entries $M[s,\cdot,\cdot]$ for all children $s$ of $t$ in $T$. Since $|\adh(t)| \leq k$,
it suffices to focus on a computation of a single cell $M[t,A^\adh,n^\circ]$.
Let $Z$ be the set of children of $t$ in $T$.

The definition of the set $M[t, A^\adh, n^\circ]$
is somewhat asymmetric, as it requires the same side of the 
separation $(A,B)$ to be of size at most $k$ in $\beta(t)$ and to contain $A^\adh$ from
the adhesion $\adh(t)$. Given values $M[t, \cdot, \cdot]$, let us define the symmetrized
variant as follows.
$$\symM[t, A^\adh, n^\circ] = \min\left(M[t, A^\adh, n^\circ],
    M[t, \adh(t) \setminus A^\adh, |\alpha(t)|-n^\circ]\right).$$
We have the following straightforward claim:
\begin{claim}\label{cl:bis:sym}
If the values $M[t, \cdot, \cdot]$ satisfy Points~\ref{i:bis:cmpl} and~\ref{i:bis:sound},
   then the values $\symM[t, \cdot, \cdot]$ satisfy the following.
\begin{enumerate}[label={(\alph*')}]
\item If $\symM[t,A^\adh,n^\circ] \neq +\infty$, then there exists an edge cut $(A,B)$
of $G_t$ such that:\label{i:bis:cmpl:sym}
\begin{itemize}
\item $A \cap \adh(t) = A^\adh$,
\item $|A \cap \alpha(t)| = n^\circ$,
\item the order of $(A,B)$ is at most $M[t,A^\adh,n^\circ]$ in $G_t$.
\end{itemize}
\item For every edge cut $(A,B)$ of the \emph{entire graph $G$} that satisfies:\label{i:bis:sound:sym}
\begin{itemize}
\item $A \cap \adh(t) = A^\adh$,
\item $|A \cap \alpha(t)| = n^\circ$, 
\item the order of $(A,B)$ is at most $k$,
\end{itemize}
the order of the edge cut $(A \cap \gamma(t), B \cap \gamma(t))$ is at least
$\symM[t,A^\adh,n^\circ]$ in $G_t$.
\end{enumerate}
\end{claim}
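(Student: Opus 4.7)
The plan is to deduce both items directly from Points~\ref{i:bis:cmpl} and~\ref{i:bis:sound} by a symmetry argument, with the extra help of $(k,k)$-edge-unbreakability of $\beta(t)$ appearing only in the soundness direction. For~\ref{i:bis:cmpl:sym}, I would unfold the definition: if $\symM[t,A^\adh,n^\circ]\neq +\infty$, then the minimum is attained either at $M[t,A^\adh,n^\circ]$ or at $M[t,\adh(t)\setminus A^\adh,|\alpha(t)|-n^\circ]$. In the former case, Point~\ref{i:bis:cmpl} directly yields a cut $(A,B)$ with the desired intersections with $\adh(t)$ and $\alpha(t)$ and the required order in $G_t$. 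In the latter case, Point~\ref{i:bis:cmpl} applied to the flipped cell yields a cut $(A',B')$ with $A'\cap\adh(t)=\adh(t)\setminus A^\adh$ and $|A'\cap\alpha(t)|=|\alpha(t)|-n^\circ$; then $(A,B):=(B',A')$ has the prescribed intersections and the same order, because the order of an edge cut is invariant under swapping its sides.

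For~\ref{i:bis:sound:sym}, fix an edge cut $(A,B)$ of $G$ of order at most $k$ with $A\cap\adh(t)=A^\adh$ and $|A\cap\alpha(t)|=n^\circ$. The issue compared to~\ref{i:bis:sound} is that we no longer assume $|A\cap\beta(t)|\leq k$. Here $(k,k)$-edge-unbreakability of $\beta(t)$, provided by Theorem~\ref{thm:decomp}, forces one of $|A\cap\beta(t)|$ or $|B\cap\beta(t)|$ to be at most $k$. If $|A\cap\beta(t)|\leq k$, then the cut $(A,B)$ satisfies all the hypotheses of Point~\ref{i:bis:sound} for the cell $M[t,A^\adh,n^\circ]$, so the order of $(A\cap\gamma(t),B\cap\gamma(t))$ in $G_t$ is at least $M[t,A^\adh,n^\circ]\geq \symM[t,A^\adh,n^\circ]$. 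Otherwise $|B\cap\beta(t)|\leq k$, and I swap the roles of the two sides: the cut $(B,A)$ has $B\cap\adh(t)=\adh(t)\setminus A^\adh$, $|B\cap\alpha(t)|=|\alpha(t)|-n^\circ$, $|B\cap\beta(t)|\leq k$, and the same order as $(A,B)$; so Point~\ref{i:bis:sound} applied to $M[t,\adh(t)\setminus A^\adh,|\alpha(t)|-n^\circ]$ lower-bounds the order of $(B\cap\gamma(t),A\cap\gamma(t))$, which is the same as that of $(A\cap\gamma(t),B\cap\gamma(t))$, by $\symM[t,A^\adh,n^\circ]$.

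No real obstacle is expected: the whole point of the symmetrization is to absorb the asymmetric constraint $|A\cap\beta(t)|\leq k$ of the original table, and edge-unbreakability of $\beta(t)$ is exactly tailored to do so for cuts of order at most $k$. The only care needed is to keep the indexing straight when appealing to the flipped cell — the identities $\adh(t)\setminus(\adh(t)\setminus A^\adh)=A^\adh$ and $|\alpha(t)|-(|\alpha(t)|-n^\circ)=n^\circ$ — and to note that swapping the two sides of an edge cut leaves both its cut-edge set and its order unchanged, both in $G$ and in $G_t$.
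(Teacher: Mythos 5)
Your proposal is correct and follows essentially the same route as the paper's proof: case analysis on which term attains the minimum for Point~\ref{i:bis:cmpl:sym} (swapping the sides of the cut obtained from the flipped cell), and using $(k,k)$-edge-unbreakability of $\beta(t)$ for Point~\ref{i:bis:sound:sym} to reduce to Point~\ref{i:bis:sound} applied either to the original cell or to the flipped cell with the cut's sides swapped. The bookkeeping you note (flipped indices, invariance of the cut order under swapping sides) is exactly what the paper relies on.
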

\begin{proof}
For Point~\ref{i:bis:cmpl:sym}, consider two cases.
If $\symM[t,A^\adh,n^\circ] = M[t,A^\adh,n^\circ]$, then observe that the edge cut $(A,B)$
asserted by Point~\ref{i:bis:cmpl} for $M[t,A^\adh,n^\circ]$ works also here.
Otherwise, if $\symM[t,A^\adh,n^\circ] = M[t, \adh(t) \setminus A^\adh, |\alpha(t)|-n^\circ]$,
then let $(A,B)$ be the edge cut asserted by Point~\ref{i:bis:cmpl} for
$M[t, \adh(t) \setminus A^\adh, |\alpha(t)|-n^\circ]$, and observe that $(B,A)$ satisfies
all the requirements.

For Point~\ref{i:bis:sound:sym}, let $(A,B)$ be an edge cut as in the statement. 
Since $\beta(t)$ is $(k,k)$-edge-unbreakable in $G$, we have that either $|A \cap \beta(t)| \leq k$
or $|B \cap \beta(t)| \leq k$. In the first case, we have that $(A,B)$ satisfies the requirements
of Point~\ref{i:bis:sound} for $M[t,A^\adh,n^\circ]$, yielding that the order of $(A,B)$
is at least $M[t, A^\adh,n^\circ]$. 
Otherwise, we have that $(B,A)$ satisfies the requirements
of Point~\ref{i:bis:sound} for $M[t,\adh(t)\setminus A^\adh,|\alpha(t)|-n^\circ]$, yielding that the order of $(B,A)$ (which is the same as order of $(A,B)$) is at least $M[t, \adh(t) \setminus A^\adh,|\alpha(t)|-n^\circ]$. This finishes the proof of the claim.
\cqed\end{proof}

Intuitively, in a single step of a dynamic programming algorithm, we would like to focus only
on partitioning $\beta(t)$ into $A$-side and $B$-side of the partition $(A,B)$, and read
the best way to partition subgraphs $G[\alpha(s)]$ for $s \in Z$ from the tables
$M[s, \cdot, \cdot]$. 
To this end, every adhesion $\adh(s)$ for $s \in Z$ serves as a ``black-box'' that, given
a partition of $\adh(s)$ and a requested balance of the partition of $\alpha(s)$,
returns a minimum-size edge cut of $G_s$. Within the same framework, one can think of edges 
$e \in E(G_t[\beta(t)])$ as ``mini-black-boxes'' that force us to pay $1$ if we put the endpoints
of $e$ into different sets. 

This motivates the following definition of a family $\cnstr$ of \emph{constraints};
every child $s \in Z$ and every
edge $e \in E(G_t[\beta(t)])$ gives raise to a single constraint.
A constraint $\Gamma \in \cnstr$ consists of:
\begin{itemize}
\item a set $X_\Gamma \subseteq \beta(t)$ of size at most $k$;
\item a nonnegative integer $n_\Gamma$;
\item a function $M_\Gamma: 2^{X_\Gamma} \times \{0,1,\ldots,n_\Gamma\} \to \{0,1,2,\ldots,k,+\infty\}$.
\end{itemize}
For a child $s \in Z$, we define a constraint $\Gamma(s)$ as:
\begin{itemize}
\item $X_{\Gamma(s)} = \adh(s) = \beta(s) \cap \beta(t)$,
\item $n_{\Gamma(s)} = |\alpha(s)|$,
\item $M_{\Gamma(s)}(A^\adh_s, n^\circ_s) = \symM[s,A^\adh_s,n^\circ_s]$ for every $A^\adh_s \subseteq X_{\Gamma(s)}$ and $0 \leq n^\circ_s \leq n_{\Gamma(s)}$.
\end{itemize}
For an edge $e \in E(G_t[\beta(t)])$, we define a constraint $\Gamma(e)$ as:
\begin{itemize}
\item $X_{\Gamma(e)} = e$,
\item $n_{\Gamma(e)} = 0$,
\item $M_{\Gamma(e)}(\emptyset, 0) = M_{\Gamma(e)}(e, 0) = 0$ and $M_{\Gamma(e)}(\{v\}, 0) = 1$
for every $v \in e$.
\end{itemize}
A \emph{balance function} is a function $f$ that assigns to every constraint $\Gamma \in \cnstr$
an integer $f(\Gamma) \in \{0, 1, \ldots, n_\Gamma\}$. 
A \emph{feasible set} is a set $A_t \subseteq \beta(t)$ with
$A_t \cap \adh(t) = A^\adh$ and $|A_t| \leq k$. 
Given a feasible set $A_t$ and a balance function $f$, the \emph{balance} and \emph{cost} 
of $A_t$ and $f$ are defined as
\begin{align}
\fbalance(A_t,f) &= |A_t \setminus \adh(t)| + \sum_{\Gamma \in \cnstr} f(\Gamma),\nonumber\\
\fcost(A_t,f) &= \sum_{\Gamma \in \cnstr} M_\Gamma(A_t \cap X_\Gamma, f(\Gamma)).\label{eq:bis:cost}
\end{align}
We claim the following.
\begin{claim}\label{cl:bis:feas}
Assume that the values $M'[s,\cdot,\cdot]$ satisfy Points~\ref{i:bis:cmpl:sym} and~\ref{i:bis:sound:sym} for every $s \in Z$.
Then, the following assignment satisfies Points~\ref{i:bis:cmpl} and~\ref{i:bis:sound}:
\begin{equation}\label{eq:bis:Mdef}
M[t, A^\adh, n^\circ] := \min \left\{ \fcost(A^t, f)\ |\ A^t\mathrm{\ is\ feasible,\ }f\mathrm{\ is\ a\ balance\ function}\mathrm{,\ and\ }\fbalance(A^t, f) = n^\circ\right\}.
\end{equation}
In the above, $M[t, A^\adh, n^\circ] := +\infty$ if the right hand side exceeds $k$.
\end{claim}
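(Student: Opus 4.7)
The plan is to prove the two points separately, establishing a correspondence between pairs \emph{(feasible set, balance function)} on one side and edge cuts of $G_t$ respecting the boundary conditions on the other. The combinatorial engine driving this correspondence is the disjoint decomposition of the edge set $E(G_t)$ into $E(G_t[\beta(t)])$ and $\bigcup_{s \in Z} E(G_s)$, where each edge or child of $t$ is then accounted for by exactly one constraint of $\cnstr$. Throughout, I will repeatedly use that $\alpha(s) \cap \beta(t) = \emptyset$, that $\alpha(s) \cap \alpha(s') = \emptyset$ for distinct children $s, s' \in Z$, and the disjoint decomposition $\alpha(t) = (\beta(t) \setminus \adh(t)) \sqcup \bigsqcup_{s \in Z} \alpha(s)$, all of which follow from the tree decomposition axioms and the fact that $\adh(s) \subseteq \beta(t)$ for every child $s$.

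For Point~\ref{i:bis:cmpl}, I will take a minimizer $(A_t, f)$ in the right-hand side of~\eqref{eq:bis:Mdef} with $\fcost(A_t, f) \leq k$. Since $n_{\Gamma(e)} = 0$, every balance function automatically satisfies $f(\Gamma(e)) = 0$ for each edge constraint $\Gamma(e)$. For each child $s \in Z$, the value $\symM[s, A_t \cap \adh(s), f(\Gamma(s))]$ is bounded by $\fcost(A_t, f) \leq k$, so Point~\ref{i:bis:cmpl:sym} of Claim~\ref{cl:bis:sym} supplies an edge cut $(A_s, B_s)$ of $G_s$ with the right intersection on $\adh(s)$, the prescribed cardinality on $\alpha(s)$, and order at most $\symM[s, \cdot, \cdot]$. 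Gluing these together, I set
\[
A := A_t \cup \bigcup_{s \in Z} (A_s \cap \alpha(s)), \qquad B := \gamma(t) \setminus A.
\]
The boundary conditions $A \cap \adh(t) = A^\adh$ and $|A \cap \beta(t)| = |A_t| \leq k$ are immediate, while $|A \cap \alpha(t)| = \fbalance(A_t, f) = n^\circ$ follows from the decomposition of $\alpha(t)$ noted above. Counting crossing edges along the decomposition of $E(G_t)$, every edge $e \in E(G_t[\beta(t)])$ contributes $M_{\Gamma(e)}(A_t \cap e, 0)$ and each child $s$ contributes at most $M_{\Gamma(s)}(A_t \cap \adh(s), f(\Gamma(s)))$, totalling to $\fcost(A_t, f)$.

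For Point~\ref{i:bis:sound}, I will run the converse construction. Given any edge cut $(A, B)$ of $G$ of order at most $k$ satisfying the three premises, set $A_t := A \cap \beta(t)$ and $f(\Gamma(s)) := |A \cap \alpha(s)|$ for every child $s$, and $f(\Gamma(e)) := 0$ for every edge constraint. Feasibility of $A_t$ and the equality $\fbalance(A_t, f) = n^\circ$ are then immediate. Applying Point~\ref{i:bis:sound:sym} of Claim~\ref{cl:bis:sym} to the edge cut $(A, B)$ itself, the order of $(A \cap \gamma(s), B \cap \gamma(s))$ in $G_s$ is at least $M_{\Gamma(s)}(A_t \cap \adh(s), f(\Gamma(s)))$ for each child $s$. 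Summing these lower bounds together with the $E(G_t[\beta(t)])$-contributions (each exactly $M_{\Gamma(e)}(A_t \cap e, 0)$), the disjoint decomposition of $E(G_t)$ yields $\fcost(A_t, f)$ as a lower bound on the order of $(A \cap \gamma(t), B \cap \gamma(t))$ in $G_t$, and hence on $M[t, A^\adh, n^\circ]$.

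The main technical hurdle I anticipate is rigorously justifying the edge partition $E(G_t) = E(G_t[\beta(t)]) \sqcup \bigsqcup_{s \in Z} E(G_s)$. This follows from the observation that every edge of $G_s$ has at least one endpoint in $\alpha(s)$ (because edges internal to $\adh(s)$ are stripped off in the definition $G_s = G[\gamma(s)] - E(G[\adh(s)])$), which places such edges outside both $E(G_t[\beta(t)])$ (as $\alpha(s) \cap \beta(t) = \emptyset$) and $E(G_{s'})$ for $s' \neq s$ (as $\alpha(s) \cap \alpha(s') = \emptyset$). Once this is in place, all counting identities and inequalities used in both directions follow by mechanical bookkeeping, and the claim is established.
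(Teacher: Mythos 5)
Your proposal is correct and takes essentially the same route as the paper's proof: completeness via gluing the cuts supplied by Point~\ref{i:bis:cmpl:sym} of Claim~\ref{cl:bis:sym} along the partition $E(G_t) = E(G_t[\beta(t)]) \sqcup \bigsqcup_{s \in Z} E(G_s)$, and soundness via restricting the given cut and invoking Point~\ref{i:bis:sound:sym} child by child. The only (welcome) addition is that you spell out the justification of the edge partition, which the paper leaves as a ``direct check''.
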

\begin{proof}
Let $A_t^\ast$ and $f^\ast$ be a feasible set and a balance function that achieve the minimum in the right hand side of~\eqref{eq:bis:Mdef}.

Consider first Point~\ref{i:bis:cmpl}, and assume $A_t^\ast$ and $f^\ast$ exist and have
cost at most $k$. Since $\fcost(A_t^\ast,f^\ast) \neq +\infty$,
for every $s \in Z$ we have $M_{\Gamma(s)}(A_t^\ast \cap X_{\Gamma(s)}, f^\ast(\Gamma(s))) \neq +\infty$, that is, $\symM[s, A_t^\ast \cap \adh(s), f^\ast(\Gamma(s))] \neq \emptyset$. Let $(A_s,B_s)$ be the edge cut for the node $s$ whose existence is asserted by Point~\ref{i:bis:cmpl:sym} for the cell $\symM[s, A_t^\ast \cap \adh(s), f^\ast(\Gamma(s))]$. Note that $A_s \cap \adh(s) = A_t^\ast \cap \adh(s)$ by definition.

This allows us to define $A = A_t^\ast \cup \bigcup_{s \in Z} A_s$ with the properties that
$A \cap \beta(t) = A_t^\ast$ and $A \cap \gamma(s) = A_s$ for every $s \in Z$. 
In particular, we have $|A \cap \beta(t)| \leq k$ and $A \cap \adh(t) = A^\adh$.
Consequently, by the definition of the balance of $(A_t^\ast, f^\ast)$ and the properties of $(A_s,B_s)$ asserted by Point~\ref{i:bis:cmpl}, we infer that $|A \setminus \adh(t)| = \fbalance(A_t^\ast, f^\ast)$. 
Here, recall that $f^\ast(\Gamma(s)) = |A_s \setminus \adh(s)|$, that is, the balance
integer $f^\ast(\Gamma(s))$ does not count the vertices in the adhesion $\adh(s)$.

Furthermore, by the definition of the constraints $\Gamma(e)$ for $e \in E(G_t[\beta(t)])$ 
and since the order of $(A_s,B_s)$ is at most $M[s, A_t^\ast \cap \adh(s), f^\ast(\Gamma(s))]$
in $G_s$,
a direct check shows that the order of $(A,B)$ in $G_t$ is at most $\fcost(A_t^\ast, f^\ast)$.
Here, recall the definition $G_t = G[\gamma(t)]-E(G[\adh(t)])$, i.e., $G_t$ does not contain
any edge inside $\adh(t)$.

Hence, $(A,B)$ satisfies all the properties for Point~\ref{i:bis:cmpl} for the cell $M[t, A^\adh, n^\circ]$.

Let us now consider Point~\ref{i:bis:sound}, and let $(A,B)$ be an edge cut in $G$ of order at most $k$ such that $A \cap \adh(t) = A^\adh$, $|A \cap \alpha(t)| = n^\circ$, and $|A \cap \beta(t)| \leq k$.

Let $A_t = A \cap \beta(t)$ and define a balance function $f$ as follows:
$f(\Gamma(s)) = |A \cap \alpha(s)|$ for every $s \in Z$ and 
$f(\Gamma(e)) = 0$ for every $e \in E(G_t[\beta(t)])$. 
By minimality of $\fcost(A_t^\ast, f^\ast)$, we have that
$\fcost(A_t^\ast, f^\ast) \leq \fcost(A_t, f)$. 
Hence, it suffices to show that the order of $(A \cap \gamma(t), B \cap \gamma(t))$ in $G_t$ is at least
$\fcost(A_t, f)$. 

To this end, consider an edge $e \in E(G_t) \cap E(A,B)$. If $e \in E(G_t[\beta(t)])$, then 
the constraint $\Gamma(e) \in \cnstr$ contributes $1$ in the sum in~\eqref{eq:bis:cost}.
Otherwise, $e \in E(G_s)$ for a unique $s \in Z$. 
In this case, observe that $(A,B)$ satisfies the prerequisites for 
Point~\ref{i:bis:sound:sym} for the entry $\symM[s, A \cap \adh(s), |A \cap \alpha(s)|]$.
Note that this is the same entry as $\symM[s, A_t \cap \adh(s), f(\Gamma(s))]$,
which is equal to $M_{\Gamma(s)}(A_t \cap X_{\Gamma(s)}, f(\Gamma(s)))$. 
Consequently, every $e \in E(G_s) \cap E(A,B)$ is counted in the summand corresponding
to $\Gamma(s)$ in~\eqref{eq:bis:cost}.
This finishes the proof of the claim.
\cqed\end{proof}

By Claim~\ref{cl:bis:feas}, our goal is minimize $\fcost(A_t, f)$ among all
feasible sets $A_t$ and balance functions $f$ with $\fbalance(A_t,f) = n^\circ$.
Assume this minimum is finite and at most $k$, and
fix a minimizing argument $(A_t^\ast, f^\ast)$. 
We say that a constraint $\Gamma \in \cnstr$ is \emph{broken} if both
$X_\Gamma \cap A_t^\ast$ and $X_\Gamma \setminus A_t^\ast$ are nonempty. 
We claim the following.
\begin{claim}\label{cl:bis:broken}
There are at most $\fcost(A_t^\ast, f^\ast) \leq k$ broken constraints.
\end{claim}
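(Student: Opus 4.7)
My plan is to show that each broken constraint contributes at least $1$ to the cost $\fcost(A_t^\ast,f^\ast)$, which will immediately bound the number of broken constraints by $\fcost(A_t^\ast,f^\ast) \leq k$. The two types of constraints (edges in $G_t[\beta(t)]$ and children of $t$) must be handled separately, but the argument is the same in spirit.

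For an edge constraint $\Gamma(e)$ with $e \in E(G_t[\beta(t)])$, being broken means that $A_t^\ast$ contains exactly one endpoint of $e$. By the very definition of $M_{\Gamma(e)}$, in this case $M_{\Gamma(e)}(A_t^\ast \cap e, 0) = 1$, so the contribution of $\Gamma(e)$ to the sum in~\eqref{eq:bis:cost} is at least $1$.

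For a child constraint $\Gamma(s)$ with $s \in Z$, being broken means that $A_t^\ast \cap \adh(s)$ is a proper nonempty subset of $\adh(s)$. The crucial ingredient is the compactness of $(T,\beta)$: since $\adh(s) \neq \emptyset$ (otherwise $G$ would be disconnected, contrary to the assumption in the proof of Theorem~\ref{thm:decomp}), we have that $G[\alpha(s)]$ is connected and $N_G(\alpha(s)) = \adh(s)$. From this one deduces that $G_s = G[\gamma(s)] - E(G[\adh(s)])$ is connected: every vertex of $\adh(s)$ has a neighbor in $\alpha(s)$ via the equality $N_G(\alpha(s)) = \adh(s)$, and removing edges inside $\adh(s)$ does not destroy these connections. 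Hence any edge cut $(A_s,B_s)$ of $G_s$ with $A_s \cap \adh(s) = A_t^\ast \cap \adh(s)$ has both sides nonempty (they already intersect $\adh(s)$ nontrivially), so its order in $G_s$ is at least $1$. By the definition of $M_{\Gamma(s)}$ via $\symM[s,\cdot,\cdot]$ and by Point~\ref{i:bis:cmpl:sym} of Claim~\ref{cl:bis:sym}, this forces $M_{\Gamma(s)}(A_t^\ast \cap X_{\Gamma(s)}, f^\ast(\Gamma(s))) \geq 1$.

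Combining the two cases, every broken constraint contributes at least $1$ to $\fcost(A_t^\ast,f^\ast)$, so the number of broken constraints is at most $\fcost(A_t^\ast,f^\ast) \leq k$, as desired. The only subtle point in the whole argument is the use of compactness to ensure connectivity of $G_s$; once that is available, the bound follows directly from the definitions, so I do not expect any serious obstacle.
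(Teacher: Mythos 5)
Your proof is correct and follows essentially the same route as the paper: every broken constraint contributes at least $1$ to the sum in~\eqref{eq:bis:cost}, trivially for edge constraints and via compactness (which forces any edge cut of $G_s$ splitting $\adh(s)$ nontrivially to have positive order) for child constraints. The extra detail you supply about the connectivity of $G_s$ is exactly the content the paper leaves implicit.
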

\begin{proof}
It suffices to show that every broken constraint contributes positive value to the sum
in~\eqref{eq:bis:cost}. This is straightforward for a constraint $\Gamma(e)$ for $e \in E(G_t[\beta(t)])$.

Consider now a constraint $\Gamma(s)$ for $s \in Z$. Observe that $\Gamma(s)$ contributes
$\symM[s, A_t^\ast \cap \adh(s), f(\Gamma(s))]$ to the sum in~\eqref{eq:bis:cost}.
By Point~\ref{i:bis:cmpl:sym}, there exists an edge cut $(A,B)$ in $G_s$ of order
at most $\symM[s, A_t^\ast \cap \adh(s), f(\Gamma(s))]$ with $A \cap \adh(s) = A_t^\ast \cap \adh(s)$. However, due to the compactness of $(T,\beta)$,
and the assumption that $\Gamma(s)$ is broken
(i.e., both $A_t^\ast \cap \adh(s)$ and $\adh(s) \setminus A_t^\ast$ are nonempty),
  any such an edge cut has positive order. This finishes the proof of the claim.
\cqed\end{proof}

Let $B^\ast$ be the set of all those vertices
$v \in \beta(t) \setminus A_t^\ast$ for which there exist a broken constraint $\Gamma \in \cnstr$
with $v \in X_\Gamma$. By Claim~\ref{cl:bis:broken}, we have that $|B^\ast| \leq k^2$.
We invoke Lemma~\ref{lem:random} for the universe $\beta(t)$ and integers $k$ and $k+k^2$.
We obtain a family $\randfamily$ of size $2^{\Oh(k \log k)} \log n$ such that
there exists $S \in \randfamily$ with $A_t^\ast \subseteq S$, but $S \cap (B^\ast \cup (\adh(t) \setminus A_t^\ast)) = \emptyset$. 
Note that this in particular implies $S \cap \adh(t) = A^\adh = A_t^\ast \cap \adh(t)$.
We call such a set $S$ \emph{lucky}.

Consider now an auxiliary graph $H$ with $V(H) = \beta(t)$ and $uv \in E(H)$ if and only if
$u \neq v$ and there exists a constraint $\Gamma \in \cnstr$ with $u,v\in X_\Gamma$.
By the definition of constraints $\Gamma(e)$, we have that $G_t[\beta(t)]$ is a subgraph of
$H$, but in $H$ we also turn all adhesions $\adh(s)$ for $s \in Z$ into cliques.
Observe the following.

\begin{claim}\label{cl:bis:stains}
For a lucky set $S$,
every connected component of $H[S]$ is either completely contained in $A_t^\ast$
or completely disjoint with $A_t^\ast$.
\end{claim}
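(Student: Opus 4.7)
The plan is a short contradiction argument that exploits the definitions of $B^\ast$ and of a broken constraint. Suppose some connected component $C$ of $H[S]$ intersects both $A_t^\ast$ and $\beta(t) \setminus A_t^\ast$. Since $C$ is connected in $H[S]$, I can walk along a path in $C$ from a vertex inside $A_t^\ast$ to a vertex outside $A_t^\ast$, and pick an edge $u'v'$ on this path with $u' \in A_t^\ast$ and $v' \in S \setminus A_t^\ast$.

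Next, I would translate this edge of $H$ back to a constraint. By definition of $H$, the edge $u'v'$ exists because there is some constraint $\Gamma \in \cnstr$ with $\{u',v'\} \subseteq X_\Gamma$. But then $X_\Gamma \cap A_t^\ast \supseteq \{u'\}$ and $X_\Gamma \setminus A_t^\ast \supseteq \{v'\}$, so $\Gamma$ is broken. By the definition of $B^\ast$, this forces $v' \in B^\ast$.

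This contradicts the luckiness of $S$: lucky means $A_t^\ast \subseteq S$ and $S \cap (B^\ast \cup (\adh(t) \setminus A_t^\ast)) = \emptyset$, whereas I have just produced $v' \in S \cap B^\ast$. So no such component $C$ can exist, and every connected component of $H[S]$ lies entirely in $A_t^\ast$ or entirely in its complement. No real obstacle is expected here; the whole point of the definitions of $H$, $B^\ast$, and lucky set was engineered precisely so that this edge-crossing argument goes through in one line.
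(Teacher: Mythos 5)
Your proof is correct and follows exactly the paper's argument: take an $H$-edge inside $S$ crossing the boundary of $A_t^\ast$, observe the underlying constraint is broken, conclude the endpoint outside $A_t^\ast$ lies in $B^\ast$, contradicting luckiness. No issues.
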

\begin{proof}
Assume the contrary: let $uv \in E(H)$ be an edge with $u, v \in S$ but
$u \in A_t^\ast$ and $v \notin A_t^\ast$. By the definition of $H$,
there exists a constraint $\Gamma \in \cnstr$ with $u,v \in X_\Gamma$.
Since $u \in A_t^\ast$ but $v \notin A_t^\ast$, the constraint $\Gamma$ is broken.
However, then $v \in B^\ast$, contradicting the fact that $S$ is lucky.
\cqed\end{proof}

Claim~\ref{cl:bis:stains} motivates the following approach. We try every set $S \in \randfamily$,
and proceed under the assumption that $S$ is lucky. 
We inspect connected components of $H[S]$ one-by-one and either try to add them to the constructed
candidate set for $A_t^\ast$ or not. Claim~\ref{cl:bis:stains} asserts that one could construct
$A_t^\ast$ in this manner. The definition of $H$ implies that for every constraint $\Gamma \in \cnstr$, the set $X_\Gamma$ intersects at most one component of $H[S]$. This gives significant
independence of the decisions, allowing us to execute a multidimensional knapsack-type dynamic
programming, as between different connected components of $H[S]$ we need only to keep intermediate values of the
balance and cost of the constructed set and balance function.

Let us proceed with formal details. 
For every $S \in \randfamily$ with $S \cap \adh(t) = A^\adh$ (which is a necessary condition for being lucky), we proceed as follows.
Let $C_1,C_2,\ldots,C_\ell$ be the connected components of $H[S]$ and for $1 \leq i \leq \ell$, 
let $\cnstr_i$ be the set of constraints $\Gamma \in \cnstr_i$ with
$X_\Gamma \cap C_i \neq \emptyset$.
By the definition of $H$,
  the sets $\cnstr_i$ are pairwise disjoint.
Let $\cnstr_0 = \cnstr \setminus \bigcup_{i=1}^\ell \cnstr_i$ be the remaining constraints, i.e., the constraints $\Gamma \in \cnstr$ with $X_\Gamma \cap S = \emptyset$.
It will be convenient for us to denote $C_0 = \emptyset$ to be a component accompanying
$\cnstr_0$.
For $I \subseteq \{0,1,2,\ldots,\ell\}$, denote $C_I = \bigcup_{i \in I} C_i$.
Furthermore, let $I^\adh \subseteq \{1,2,\ldots,\ell\}$ be the set of these indices $j$
for which $C_j \cap \adh(t) \neq \emptyset$.
For $0 \leq i \leq \ell$, denote $\cnstr_{\leq i} = \bigcup_{j \leq i} \cnstr_j$.

Let $m = |\cnstr|$. We order constraints in $\cnstr$ as $\Gamma_1,\Gamma_2,\ldots,\Gamma_m$
according to which set $\cnstr_i$ they belong to. That is, if $\Gamma_a \in \cnstr_i$,
$\Gamma_b \in \cnstr_j$ and $i < j$, then $a < b$.
For $0 \leq i \leq \ell$,
let $\lasta(i) = |\cnstr_{\leq i}|$, that is, $\Gamma_{\lasta(i)} \in \cnstr_{\leq i}$
but $\Gamma_{\lasta(i)+1}$ does not belong to $\cnstr_{\leq i}$ (if exists).
Furthermore, let $\lasta(-1) = 0$.

Let $0 \leq i \leq \ell$ and let $\lasta(i-1) \leq a \leq \lasta(i)$.
An \emph{$a$-partial balance function} $f$ is a balance function
defined on constraints $\Gamma_1,\Gamma_2,\ldots,\Gamma_a$ (i.e., $f(\Gamma_b)$ is defined and belongs 
to $\{0, 1, \ldots, n_{\Gamma_b}\}$ for every $0 \leq b \leq a$).
For a set $I \subseteq \{0,1,2,\ldots,i\}$ containing $I^\adh \cap \{0,1,2,\ldots,i\} $
and an $a$-partial balance function $f$,
we define the balance and cost of $I$ and $f$ as
\begin{align*}
\fbalance_{i,a}(I, f) &= |C_I \setminus \adh(t)| + \sum_{b=1}^a f(\Gamma_b), \\
\fcost_{i,a}(I, f) &= \sum_{b=1}^a M_{\Gamma_b}(C_I \cap X_{\Gamma_b}, f(\Gamma_b)). \\
\end{align*}
Note that in the above, $C_I \cap X_{\Gamma_b} \neq \emptyset$ if and only if $\Gamma_b \in \cnstr_j$
for some $j \in I \setminus \{0\}$.

The goal of our knapsack-type dynamic programming algorithm is to compute, for every
$-1 \leq i \leq \ell$ and $0 \leq n^\bullet \leq |\alpha(t)|$
a value $Q_i[n^\bullet]$ that equals a minimum possible cost
$\fcost_{i,\lasta(i)}(I,f)$ over all $I \subseteq \{0,1,2,\ldots,i\}$ containing
$I^\adh \cap \{0,1,2,\ldots,i\}$ and $\lasta(i)$-partial balance functions
$f$ with $\fbalance_{i,\lasta(i)}(I,f) = n^\bullet$.
The next claim shows that this suffices.
\begin{claim}\label{cl:bis:dp}
We have $Q_{\ell}[n^\circ] \geq \fcost(A_t^\ast, f^\ast)$.
Furthermore, if $\fcost(A_t^\ast, f^\ast) \leq k$ and $S$ is lucky, then 
$Q_{\ell}[n^\circ] = \fcost(A_t^\ast, f^\ast)$.
\end{claim}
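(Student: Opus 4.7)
The plan is to prove both halves of Claim~\ref{cl:bis:dp} by translating between DP configurations $(I, f)$ with $I \supseteq I^\adh$ and candidate pairs $(A_t, f)$ appearing in the definition of $M[t, A^\adh, n^\circ]$ from Claim~\ref{cl:bis:feas}. The central structural fact driving the translation is that for every constraint $\Gamma \in \cnstr_i$, the set $X_\Gamma$ forms a clique in $H$, so $X_\Gamma \cap S$ lies inside the single connected component $C_i$ of $H[S]$. Consequently the contribution of each constraint $\Gamma_b \in \cnstr_i$ to both cost and balance depends only on whether $i \in I$ and on $f(\Gamma_b)$, which is exactly the property that makes the knapsack-style DP over the pairwise disjoint families $\cnstr_0, \cnstr_1, \ldots, \cnstr_\ell$ well-defined.

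For the inequality $Q_\ell[n^\circ] \geq \fcost(A_t^\ast, f^\ast)$, the approach is to take any $(I, f)$ attaining the DP minimum (adopting the convention that DP entries above $k$ are capped at $+\infty$, as in \eqref{eq:bis:Mdef}) and set $A_t := C_I$. Balance equality is immediate: $\fbalance(A_t, f) = |C_I \setminus \adh(t)| + \sum_b f(\Gamma_b) = \fbalance_{\ell, \lasta(\ell)}(I, f)$. Cost equality follows term-by-term because $A_t \cap X_{\Gamma_b} = C_I \cap X_{\Gamma_b}$, and by the clique property this equals $X_{\Gamma_b} \cap S$ if $\Gamma_b \in \cnstr_i$ with $i \in I$, and $\emptyset$ otherwise. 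Feasibility of $(C_I, f)$ needs $C_I \cap \adh(t) = A^\adh$, which holds because $S \cap \adh(t) = A^\adh$ (by our choice of $S$) while every vertex of $A^\adh$ lies in some $C_j$ with $j \in I^\adh \subseteq I$; the bound $|C_I| \leq k$ is imposed by the same cap at $k$ that defines $M[\cdot]$. Minimality of $(A_t^\ast, f^\ast)$ then yields $\fcost(A_t^\ast, f^\ast) \leq \fcost(C_I, f) = Q_\ell[n^\circ]$.

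For equality in the lucky case, the plan is to realize $(A_t^\ast, f^\ast)$ itself as a DP configuration. Define $I^\ast := \{i : C_i \subseteq A_t^\ast\}$. By Claim~\ref{cl:bis:stains}, $A_t^\ast \cap S = C_{I^\ast}$, and since $S$ being lucky gives $A_t^\ast \subseteq S$, one concludes $A_t^\ast = C_{I^\ast}$. For each $j \in I^\adh$, the component $C_j$ meets $\adh(t) \cap S = A^\adh \subseteq A_t^\ast$, so Claim~\ref{cl:bis:stains} forces $C_j \subseteq A_t^\ast$, giving $j \in I^\ast$ and hence $I^\adh \subseteq I^\ast$. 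The balance and cost of $(I^\ast, f^\ast)$ match those of $(A_t^\ast, f^\ast)$ by the same term-by-term calculation, so $(I^\ast, f^\ast)$ is a valid DP configuration with cost $\fcost(A_t^\ast, f^\ast) \leq k$, yielding $Q_\ell[n^\circ] \leq \fcost(A_t^\ast, f^\ast)$.

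The main obstacle I anticipate is the bookkeeping of the two feasibility-like conditions on the DP side, namely $|C_I| \leq k$ and $C_I \cap \adh(t) = A^\adh$. The former aligns with the uniform cap at $k$ on DP values, while the latter is forced by the requirement $I \supseteq I^\adh$ combined with the assumption $S \cap \adh(t) = A^\adh$; both translate cleanly to the constraints defining $M[t, A^\adh, n^\circ]$. Apart from this, the argument is purely combinatorial bookkeeping driven by the clique structure in $H$ and the additivity of $\fcost_{i, \lasta(i)}$ and $\fbalance_{i, \lasta(i)}$ across the partition of constraints.
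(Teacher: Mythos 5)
Your proposal follows the paper's proof of this claim essentially verbatim: the first inequality is obtained by mapping a DP witness $(I,f)$ to the pair $(C_I,f)$, matching balance and cost term by term, and invoking minimality of $(A_t^\ast,f^\ast)$; the equality in the lucky case is obtained by realizing $A_t^\ast$ as $C_{I^\ast}$ via Claim~\ref{cl:bis:stains} and invoking minimality of $Q_\ell[n^\circ]$. The only place you go beyond the paper is the asserted justification that $|C_I|\le k$ "is imposed by the cap at $k$" --- that inference is not literally valid, since the cap bounds the cost rather than $|C_I|$ --- but the paper's own proof silently performs the same identification of $(C_I,f)$ with a feasible pair, so this does not separate your argument from theirs.
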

\begin{proof}
For the first claim, let $(I,f)$ be the witnessing arguments
for the value $Q_{\ell}[n^\circ]$. Note that $f$ is a balance function and $m = \lasta(\ell)$.
Let $A_t = C_I$. Observe that $n^\circ = \fbalance_{\ell,m}(I, f) = \fbalance(A_t,f)$
and $\fcost_{\ell,m}(I, f) = \fcost(A_t, f)$ directly from the definition.
The first claim follows from the minimality of $(A_t^\ast, f^\ast)$.

For the second claim, by Claim~\ref{cl:bis:stains} we know that every component $C_i$
is either completely contained in $A_t^\ast$ or disjoint with it. Let $I^\ast$ be the set
of indices $i \in \{1,2,\ldots,\ell\}$ for which $C_i \subseteq A_t^\ast$; note that 
$C_{I^\ast} = A_t^\ast$ and $I^\adh \subseteq I^\ast$ as $A_t^\ast \cap \adh(t) = A^\adh$.
Consequently, $\fbalance_{\ell,m}(I^\ast, f^\ast) = \fbalance(A_t^\ast, f^\ast) = n^\circ$ and $\fcost_{\ell,m}(I^\ast, f^\ast) = \fcost(A_t^\ast,f^\ast)$.
By the minimality of $Q_{\ell}[n^\circ]$, we have $Q_{\ell}[n^\circ] \leq \fcost(A_t^\ast, f^\ast)$, as desired.
\cqed\end{proof}

For the initialization step of our dynamic programming algorithm, note that for $i=-1$
we have $\lasta(i)=0$ and the minimization for $Q_{-1}[\cdot]$
takes into account only $I = \emptyset$ and $f=\emptyset$.

Fix now $0 \leq i \leq \ell$, we are to compute values $Q_i[\cdot]$. 
To this end, we use another layer of dynamic programming.
For $\lasta(i-1) \leq a \leq \lasta(i)$ and $0 \leq n^\bullet \leq |\alpha(t)|$,
we define $Q_{i,a}^\in[n^\bullet]$ to be a minimum possible cost
$\fcost_{i,a}(I,f)$ over all $I \subseteq \{0,1,2,\ldots,i\}$ containing
$I^\adh \cap \{0,1,2,\ldots,i\}$ and the index $i$, and all $a$-partial balance functions
$f$ with $\fbalance_{i,a}(I,f) = n^\bullet$.

Let us now compute the values $Q_{i,a}^\in[\cdot]$.
Since both $Q_{i,\lasta(i-1)}^\in$ and $Q_{i-1}$ use all constraints up to $\lasta(i-1)$,
we have that for every $0 \leq n^\bullet \leq |\alpha(t)|$:
$$Q_{i,\lasta(i-1)}^\in[n^\bullet] = Q_{i-1}[n^\bullet-|C_i|].$$
Here, and in subsequent formulas, we assume that a value of a cell $Q_{i,a}^\in[\cdot]$
or $Q_i[\cdot]$ equals $+\infty$ if the argument is negative or larger than $|\alpha(t)|$.
The above formula serves as the initialization step for computing values $Q_{i,a}^\in[\cdot]$.

For a single computation step, fix $\lasta(i-1) < a \leq \lasta(i)$.
The definitions of $Q_{i,a}^\in[\cdot]$ and $Q_{i,a-1}^\in[\cdot]$ differ only in the requirement
to define $f(\Gamma_a)$. Furthermore, $X_{\Gamma_a}$ intersects only the component $C_i$
(if $i > 0$), while $i$ is required to be contained in $I$ in the definition
of both $Q_{i,a}^\in[\cdot]$ and $Q_{i,a-1}^\in[\cdot]$. Consequently,
$$Q_{i,a}^\in[n^\bullet] = \min_{0 \leq n' \leq n^\bullet} \left( Q_{i,a-1}^\in[n^\bullet - n'] + M_{\Gamma_a}(C_i \cap X_{\Gamma_a}, n') \right).$$

If $i \notin I^\adh$, we need  also a second table, defined as follows.
Let $Q_{i,a}^{\notin}[n^\bullet]$ to be a minimum possible cost
$\fcost_{i,a}(I,f)$ over all $I \subseteq \{0,1,2,\ldots,i-1\}$ containing
$I^\adh \cap \{0,1,2,\ldots,i-1\}$, and all $a$-partial balance functions
$f$ with $\fbalance_{i,a}(I,f) = n^\bullet$. Note that here $I$ is required \emph{not}
to contain $i$. By a similar analysis as before, we obtain that for every $0 \leq n^\bullet \leq |\alpha(t)|$:
$$Q_{i,\lasta(i-1)}^{\notin}[n^\bullet] = Q_{i-1}[n^\bullet],$$
and for every $\lasta(i-1) < a \leq \lasta(i)$ and every $0 \leq n^\bullet \leq |\alpha(t)|$:
$$Q_{i,a}^{\notin}[n^\bullet] = \min_{0 \leq n' \leq n^\bullet} \left( Q_{i,a-1}^{\notin}[n^\bullet - n'] + M_{\Gamma_a}(\emptyset, n') \right).$$

We now show how to compute $Q_i[\cdot]$ from the values $Q_{i,\lasta(i)}^\in[\cdot]$ and $Q_{i,\lasta(i)}^{\notin}[\cdot]$ (if $i \notin I^\adh$).
Note that in the definition of $Q_i[\cdot]$, there is no requirement on whether
$I$ contains $i$ or not (as opposed to the definitions
    of $Q_{i,a}^\in[\cdot]$ and $Q_{i,a}^{\notin}[\cdot]$), unless $i \in I^\adh$.
Thus, while computing $Q_i[n^\bullet]$, which is a minimum of
$\fcost_{i,\lasta(i)}(I,f)$ over all $I \subseteq \{0,1,2,\ldots,i\}$ containing
$I^\adh \cap \{0,1,2,\ldots,i\}$ and $\lasta(i)$-partial balance functions
$f$ with $\fbalance_{i,\lasta(i)}(I,f) = n^\bullet$,
we separately consider sets $I$ that contain $i$ and the ones that do not contain $i$.
For the first case, note that the required minimum value is present in the cell
$Q_{i,\lasta(i)}^\in[n^\bullet]$, while for the second case in the cell $Q_{i,\lasta(i)}^{\notin}[n^\bullet]$. Consequently, we have that in the case $i \notin I^\adh$ it holds that
$$Q_i[n^\bullet] = \min\left(Q_{i,\lasta(i)}^\in[n^\bullet], Q_{i,\lasta(i)}^{\notin}[n^\bullet]\right),$$
while if $i \in I^\adh$ we have only the first case:
$$Q_i[n^\bullet] = Q_{i,\lasta(i)}^\in[n^\bullet].$$

The above dynamic programming algorithm computes the values $Q_\ell[\cdot]$ in polynomial time.
By Claims~\ref{cl:bis:feas} and~\ref{cl:bis:dp}, we can take $M[t, A^\adh, n^\circ]$
to be the minimum value of $Q_\ell[n^\circ]$ encountered over all choices of $S \in \randfamily$
with $S \cap \adh(t) = A^\adh$.
Claim~\ref{cl:bis:M} shows that the above suffices to conclude the proof of Theorem~\ref{thm:bisection}.
\end{proof}

\subsection{Steiner Cut and Steiner Multicut}\label{ss:steiner}
We now apply the framework to the \textsc{Steiner Cut} and \textsc{Steiner Multicut} problems.

To prove Theorems~\ref{thm:steinercut} and~\ref{thm:multicut} we introduce an auxiliary problem,
   provide an algorithm for the auxiliary problem through our framework,
   and then show how to reduce the aforementioned two problems to the auxiliary one.
   
\subsubsection{Auxiliary Multicut} 
Given a graph $G$ and a function $f : V(G) \to [p]$, the \emph{cost} of $f$ in $G$, denoted $\fcost_G(f)$ is defined as the number of edges $e = uv \in E(G)$ with $f(u) \neq f(v)$.
We use the following intermediate problem, which we call \textsc{Auxiliary Multicut}. 
The input consists of a graph $G$, integers $k,p$, terminal sets $(T_i)_{i=1}^\tau$ (not necessarily disjoint), and a set $I \subseteq [\tau] \times [p]$.
The goal is to find a function $f \colon V(G) \to [p]$ of cost at most $k$ such that for every $(i,j) \in I$, there exists $v \in T_i$ with $f(v) = j$.

\begin{theorem}\label{thm:auxcut}
\textsc{Auxiliary Multicut} on connected graphs $G$ can be solved in time $2^{\Oh((k+|I|) \log (k+|I|))} n^{\Oh(1)}$.
\end{theorem}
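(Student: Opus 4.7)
The plan is to transplant the bottom-up dynamic programming strategy of the \textsc{Minimum Bisection} algorithm of Section~\ref{ss:bisection}. First invoke Theorem~\ref{thm:decomp} on $G$ and $k$ to obtain, in time $2^{\Oh(k\log k)}n^{\Oh(1)}$, a compact rooted tree decomposition $(T,\beta)$ of $G$ whose every bag is $(k,k)$-edge-unbreakable and whose every adhesion has size at most $k$. Then run a bottom-up dynamic programming on $(T,\beta)$.

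Since $G$ is connected and any feasible $f$ has cost at most $k$, the image of $f$ uses at most $k+1$ colors; further, only the \emph{named} colors $J := \{j\in [p] \colon \exists i,\ (i,j)\in I\}$, with $|J|\leq |I|$, carry meaning beyond their induced partition. An \emph{abstract coloring} $\pi$ of $\adh(t)$ assigns each vertex either a color from $J$ or one of at most $|\adh(t)|$ anonymous labels (ordered canonically so that permutations are quotiented out). The DP entry $M[t,\pi,I^{\mathrm{sat}}]$ stores the minimum cost of a function on $\gamma(t)$ that realizes $\pi$ on $\adh(t)$ and fulfills exactly the demands $I^{\mathrm{sat}}\subseteq I$ via terminals in $\gamma(t)$. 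Since $|\adh(t)|\leq k$, the number of pairs $(\pi, I^{\mathrm{sat}})$ is at most $(|J|+k+1)^{k} \cdot 2^{|I|} = 2^{\Oh((k+|I|)\log(k+|I|))}$, and the final answer is $M[r,\emptyset, I]$ at the root $r$.

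A single entry is computed as in Section~\ref{ss:bisection}: each child $s$ becomes a constraint attached to $\adh(s)\subseteq \beta(t)$ with cost function taken from $M[s,\cdot,\cdot]$, each edge of $G_t[\beta(t)]$ contributes a unit penalty whenever its endpoints receive different abstract colors, and each demand $(i,j)\in I$ such that $T_i \cap \beta(t) \neq \emptyset$ influences $I^{\mathrm{sat}}$ via the colors assigned to $T_i\cap\beta(t)$. The twist compared to \textsc{Minimum Bisection} is that the knapsack ``balance'' variable is a subset $I^{\mathrm{sat}}\subseteq I$ instead of a cardinality, so aggregation across children becomes a subset convolution over $2^{|I|}$ states. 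Enumeration over colorings of the possibly large bag $\beta(t)$ uses unbreakability exactly as before: applying $(k,k)$-edge-unbreakability to the cut that isolates a single color class of $f$ forces all but at most one color class to have at most $k$ vertices in $\beta(t)$, so the non-majority vertices total at most $k(k+1)$. We then invoke Lemma~\ref{lem:random2} on $\beta(t)$ with parts indexed by $J$, the anonymous labels of $\pi$, and the broken-adhesion vertices of children, obtaining a family of $2^{\Oh((k+|I|)\log(k+|I|))}\mathrm{polylog}(n)$ candidate highlights, at least one of which matches the optimal solution.

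The main obstacle is the consistent handling of anonymous colors across bags: two anonymous labels appearing in the adhesions of different children of $t$ must be identified with the same output color precisely when they fall into the same anonymous part of the abstract coloring guessed on $\beta(t)$. This is resolved by making the color-coding step commit to how each non-majority vertex of $\beta(t)$ is assigned to an abstract label of $\pi$, and by letting the inner knapsack DP merge a child's anonymous adhesion label with the abstract label carried by the corresponding vertex in $\beta(t)$; an extra factor of $2^{\Oh(k\log(k+|I|))}$ for iterating over such merges already fits the target running time. With all of these ingredients in place, each entry is produced in $2^{\Oh((k+|I|)\log(k+|I|))}n^{\Oh(1)}$ time, yielding the claimed overall running time.
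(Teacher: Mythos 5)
Your overall strategy---the decomposition of Theorem~\ref{thm:decomp}, a bottom-up DP whose single step treats children as constraints attached to adhesions inside a $(k,k)$-edge-unbreakable bag, color coding via Lemma~\ref{lem:random2}, and a knapsack over the components of an auxiliary graph on $\beta(t)$---is the same as the paper's. There is, however, a genuine quantitative gap in your unbreakability step. You apply $(k,k)$-edge-unbreakability only to cuts isolating a \emph{single} color class, which yields that all but one class have at most $k$ vertices in $\beta(t)$, hence your bound of $k(k+1)$ on the number of non-majority vertices. But each non-majority vertex must be highlighted with its \emph{exact} color, so these vertices all lie outside the largest part handed to Lemma~\ref{lem:random2}; under your bound the quantity $s-c$ of that lemma can be $\Theta(k^2)$, and the family it produces has size $2^{\Oh(k^2 \log (k+|I|))}$, not the $2^{\Oh((k+|I|)\log(k+|I|))}$ you assert. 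To get the claimed running time one must apply unbreakability to cuts separating an arbitrary \emph{group} of color classes from the rest: then either the majority class has more than $k$ vertices in $\beta(t)$, forcing the non-majority part to have at most $k$ vertices, or every class has at most $k$ vertices in $\beta(t)$, in which case $|\beta(t)| \leq 3k$ (otherwise one could greedily split the classes into two groups each exceeding $k$ vertices in $\beta(t)$, contradicting unbreakability). Either way the non-majority part is $\Oh(k)$, which is what the color-coding budget actually requires.

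Two further points need attention. First, your DP entry is declared to ``store the minimum cost'' of a function on $\gamma(t)$ realizing the state; but the bags are unbreakable only in $G$, not in $G_t$, so the restriction to colorings with a single large class in $\beta(t)$ is valid only for restrictions of \emph{global} cost-at-most-$k$ solutions. The table must therefore carry asymmetric guarantees (an upper bound witnessed by some function on $G_t$, and a lower bound only against restrictions of global solutions), which changes the statements of the correctness claims throughout. Second, the anonymous-color machinery is both avoidable and, as described, incomplete: you never enforce that the total number of distinct colors used over the whole graph is at most $p$. Since merging an undemanded color class into any other class never increases the cost and never invalidates a demand, one may assume the image of $f$ consists only of demanded colors, so $p \leq \min(|I|, k+1)$ can be assumed outright and concrete functions $f^\adh \colon \adh(t) \to [p]$ can be enumerated directly; this eliminates the cross-child label-identification problem you spend your last paragraph repairing.
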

\begin{proof}
Let $(G,k,p,(T_i)_{i=1}^\tau,I^\circ)$ be an input to \textsc{Auxiliary Multicut}. Assume $G$ is connected and let $n = |V(G)|$.
Without loss of generality, we can assume that every set $T_i$ is nonempty and for every $j \in [p]$ there exists some $i \in [\tau]$ with $(i,j) \in I^\circ$.
Thus, the image of the sought function $f$ (henceforth called a \emph{solution}) needs to be equal the whole $[p]$.
Consequently, the connectivity of $G$ allows us to assume $p \leq k+1$, as otherwise the input instance is a no-instance: any such function $f$ would have cost larger than $k$.

We invoke the algorithm of Theorem~\ref{thm:decomp} to $G$ and $k$,
obtaining a rooted compact
tree decomposition $(T,\beta)$ of $G$ whose every bag is $(k,k)$-edge-unbreakable
and every adhesion is of size at most $k$. The running time of this step 
is $2^{\Oh(k \log k)} n^{\Oh(1)}$.

We perform a bottom-up dynamic programming algorithm on $(T,\beta)$. 
For every node $t \in V(T)$, every subset $I \subseteq I^\circ$, and every
function $f^\adh\colon \adh(t) \to [p]$ we compute a value $M[t, I, f^\adh] \in \{0, 1, 2, \ldots, k, +\infty\}$
with the following properties.

\begin{enumerate}[label={(\alph*)}]
\item If $M[t,I, f^\adh] \neq +\infty$, then there exists a function $f : V(G_t) \to [p]$ such that:\label{i:ac:cmpl}
\begin{itemize}
\item $f|_{\adh(t)} = f^\adh$,
\item for every $(i,j) \in I$ there exists $w\in V(G_t) \cap T_i$ with $f(w) = j$,
\item the cost of $f$ in $G_t$ is at most $M[t,I,f^\adh]$.
\end{itemize}
\item For every function $f \colon V(G) \to [p]$ that satisfies:\label{i:ac:sound}
\begin{itemize}
\item $f|_{\adh(t)} = f^\adh$,
\item for every $(i,j) \in I^\circ$ there exists $w \in T_i$ with $f(w) = j$, and, 
  furthermore, if $(i,j) \in I$, then there exists such $w$ in $T_i \cap V(G_t)$,
\item the cost of $f$ in $G$ is at most $k$,
\end{itemize}
the cost of $f|_{V(G_t)}$ in $G_t$ is at least $M[t,I,f^\adh]$.
\end{enumerate}

We first observe that the table $M[\cdot]$ is sufficient for our purposes.
\begin{claim}\label{cl:ac:M}
There exists a solution if and only if
$M[r,I^\circ,\emptyset] \neq +\infty$ for the root $r$ of $T$.
\end{claim}
\begin{proof}
In one direction, if $f$ is a solution, then note that
$f$ satisfies all conditions of Point~\ref{i:ac:sound} for the cell
$M[r,I^\circ,\emptyset]$. Hence, by Point~\ref{i:ac:sound}, $M[r,I^\circ,\emptyset] < +\infty$.

In the other direction, let $M[r,I^\circ,\emptyset] < +\infty$ and 
let $f$ be the function whose existence is asserted by Point~\ref{i:ac:cmpl}.
Then for every $(i,j) \in I^\circ$ there exists $w \in T_i$ with $f(w) = j$ and its cost in $G_r = G$ is at most $k$.
Consequently, $f$ is a solution.
\cqed\end{proof}

Thus, to prove Theorem~\ref{thm:auxcut}, it suffices to show how to compute in time
$2^{\Oh((|I^\circ| + k) \log (k + |I^\circ|))} n^{\Oh(1)}$ entries $M[t,\cdot,\cdot]$ for a fixed node $t \in V(T)$, given
entries $M[s,\cdot,\cdot]$ for all children $s$ of $t$ in $T$. Since $|\adh(t)| \leq k$ and $p \leq k+1$,
it suffices to focus on a computation of a single cell $M[t,I,f^\adh]$.
Let $Z$ be the set of children of $t$ in~$T$.

In a single step of a dynamic programming algorithm, we would like to focus on finding
a function $f^\beta : \beta(t) \to [p]$ that extends $f^\adh$
and read the best way to extend $f^\beta$ to subgraphs $G[\alpha(s)]$
for $s \in Z$ from the tables $M[s,\cdot,\cdot]$.
To this end, every adhesion $\adh(s)$ for $s \in Z$ serves as a ``black-box'' that, given 
$f_s^\adh : \adh(s) \to [p]$ and a request $I_s \subseteq I$, returns the minimum possible cost in $G_s$
of an extension $f_s$ of $f_s^\adh$ that satisfies the request $I_s$ (i.e., for every $(i,j) \in I_s$ there exists $w \in T_i \cap V(G_s)$ with $f_s(w) = j$).
Within the same framework, one can think of edges $e \in E(G_t[\beta(t)])$ as ``mini-black-boxes''
that force us to pay $1$ if $f^\beta$ assigns different values to the endpoints of $e$
and vertices $w \in T_i$ as ``mini-black-boxes'' that allow us to ``score'' a pair $(i,j) \in I$
if we assign $f^\beta(w) = j$.

This motivates the following definition of a family $\cnstr$ of \emph{constraints};
every child $s \in Z$, every edge $e \in E(G_t[\beta(t)])$, and every pair $((i,j),w)$ for $(i,j) \in I$, $w \in T_i \cap \beta(t)$ gives raise to a single constraint.
A constraint $\Gamma \in \cnstr$ consists of:
\begin{itemize}
\item a set $X_\Gamma \subseteq \beta(t)$ of size at most $k$;
\item a function $M_\Gamma: 2^{I} \times [p]^{X_\Gamma} \to \{0,1,2,\ldots,k,+\infty\}$.
\end{itemize}
For a child $s \in Z$, we define a \emph{child constraint} $\Gamma(s)$ as:
\begin{itemize}
\item $X_{\Gamma(s)} = \adh(s)$,
\item $M_{\Gamma(s)}(I_s,f^\adh_s) = M[s,I_s,f^\adh_s]$ for every $I_s \subseteq I$ and $f^\adh_s : \adh(s) \to [p]$.
\end{itemize}
For an edge $uv = e \in E(G_t[\beta(t)])$, we define an \emph{edge constraint} $\Gamma(e)$ as:
\begin{itemize}
\item $X_{\Gamma(e)} = e$,
\item $M_{\Gamma(e)}(\emptyset, f^\adh_e) = 0$ 
if $f^\adh_e(u) = f^\adh_e(v)$,
 $M_{\Gamma(e)}(\emptyset, f^\adh_e) = 1$ 
if $f^\adh_e(u) \neq f^\adh_e(v)$,
and $M_{\Gamma(e)}(I_e, f^\adh_e) = +\infty$ for every $I_e \neq \emptyset$ and $f^\adh_e : e \to [p]$.
\end{itemize}
For a pair $((i,j),w)$ with $(i,j) \in I$ and $w \in \beta(t) \cap T_i$ we define a \emph{terminal constraint} $\Gamma(i,j,w)$ as:
\begin{itemize}
\item $X_{\Gamma(i,j,w)} = \{w\}$,
\item $M_{\Gamma(i,j,w)}(\emptyset, \cdot) = 0$, $M_{\Gamma(i,j,w)}(\{(i,j)\},f^\adh_{i,j,w}) = 0$ if
$f^\adh_{i,j,w}(w) = j$, and $M_{\Gamma(i,j,w)}(I_{i,j,w}, f^\adh_{i,j,w}) = +\infty$ for every other $I_{i,j,w} \subseteq I$ and $f^\adh_{i,j,w} : \{w\} \to [p]$.
\end{itemize}

A \emph{responsibility assignment} is a function $\rho$ that assigns to every constraint $\Gamma \in \cnstr$ a subset $\rho(\Gamma) \subseteq I$ such that
the values $\rho(\Gamma)$ are pairwise disjoint.
For a function $f^\beta : \beta(t) \to [p]$ we define the \emph{majority value} $\jmaj(f^\beta)$ as the minimum $j \in [p]$ among values $j$
maximizing $|(f^\beta)^{-1}(j)|$ (that is, we take the smallest among the most common values of $f^\beta$).
A function $f^\beta : \beta(t) \to [p]$ is \emph{unbreakable-consistent} if either $|\beta(t)| \leq 3k$ or at most $k$ vertices of $\beta(t)$ are assigned values different
than $\jmaj(f^\beta)$.
A \emph{feasible function} is an unbreakable-consistent function $f^\beta : \beta(t) \to [p]$ that extends $f^\adh$ on $\adh(t)$.

Given a responsibility assignment $\rho$ and a feasible function $f^\beta$, their \emph{cost} is defined as
\begin{equation}\label{eq:ac:cost}
\fcost(\rho,f^\beta) = \sum_{\Gamma \in \cnstr} M_\Gamma(\rho(\Gamma), f^\beta|_{X_\Gamma}).
\end{equation}
We claim the following.
\begin{claim}\label{cl:ac:feas}
Assume that the values $M[s,\cdot,\cdot]$ satisfy Points~\ref{i:ac:cmpl} and~\ref{i:ac:sound} for every $s \in Z$. 
Then the following assignment satisfies Points~\ref{i:ac:cmpl} and~\ref{i:ac:sound} for $t$:
\begin{equation}\label{cl:ac:Mdef}
M[t,I,f^\adh] = \min \{ \fcost(\rho, f^\beta)~|~\rho\mathrm{\ is\ a\ responsibility\ assignment,\ }f^\beta\mathrm{\ is\ a\ feasible\ function,\ }\bigcup_{\Gamma \in \cnstr} \rho(\Gamma) = I \}.
\end{equation}
In the above, $M[t, I, f^\adh] := +\infty$ if the right hand side exceeds $k$.
\end{claim}
\begin{proof}
Let $\rho^\ast$ and $f^{\beta,\ast}$ be values for which the minimum of the right hand side of~\eqref{cl:ac:Mdef} is attained. 

Consider first Point~\ref{i:ac:cmpl} and assume $\rho^\ast$ and $f^{\beta,\ast}$ exist and $\fcost(\rho^\ast, f^{\beta,\ast}) \leq k$. 
In particular, from~\eqref{eq:ac:cost} we infer that for every $\Gamma \in \cnstr$ we have $M_\Gamma(\rho^\ast(\Gamma), f^{\beta,\ast}|_{X_\Gamma}) \leq k$.
For every $s \in Z$ let $f_s$ be the function extending $f^{\beta,\ast}|_{\adh(s)}$ whose existence is promised by Point~\ref{i:ac:cmpl} for the cell
$M[s, \rho^\ast(\Gamma(s)), f^{\beta,\ast}|_{\adh(s)}] \leq k$.
We claim that
$$f := f^{\beta,\ast} \cup \bigcup_{s \in Z} f_s$$
satisfies the requirements for Point~\ref{i:ac:cmpl} for the cell $M[t, I, f^\adh]$.

First, note that $f$ is well-defined as every $f_s$ agrees with $f^{\beta,\ast}$ on $\adh(s)$.
Clearly, $f$ extends $f^\adh$ as $f^{\beta,\ast}$ extends $f^\adh$.

Second, fix $(i,j) \in I$; our goal is to show a vertex $w \in V(G_t) \cap T_i$ for which $f(w) = j$.
By~\eqref{cl:ac:Mdef}, we have $\bigcup_{\Gamma \in \cnstr} \rho^\ast(\Gamma) = I$. 
Hence, there exists $\Gamma \in \cnstr$ with $(i,j) \in \rho(\Gamma)$. 
Since $M_{\Gamma}(\rho^\ast(\Gamma), f^{\beta,\ast}|_{X_{\Gamma}}) \leq k$, $\Gamma$ is not an edge constraint $\Gamma(e)$.
If $\Gamma$ is a terminal constraint, $\Gamma = \Gamma((i,j),w)$, then from the definition of a terminal constraint we obtain that $f(w) = j$ and $w \in T_i$.
Finally, if $\Gamma$ is a child constraint, $\Gamma = \Gamma(s)$ for some $s \in Z$, then since $f_s$ is a function promised by Point~\ref{i:ac:cmpl} for the cell
$M[s, \rho^\ast(\Gamma(s)), f^{\beta,\ast}|_{\adh(s)}]$ there exists $w \in V(G_s) \cap T_i$ with $f_s(w) = j$.
This concludes the proof that for every $(i,j) \in I$ there exists $w \in V(G_t) \cap T_i$ with $f(w) = j$.

Finally, let us compute the cost of $f$. Let $e = uv \in E(G_t)$ with $f(u) \neq f(v)$. 
By the definition of the graphs $G_t$ and $G_s$, $s \in Z$, $e$ either belongs to $E(G_t[\beta(t)])$ or to exactly one of the subgraphs $G_s$, $s \in Z$.
In the first case, the constraint $\Gamma(e)$ contributes $1$ to $\fcost(\rho^\ast, f^{\beta,\ast})$. 
In the second case, for every $s \in Z$, the number of edges $e = uv \in E(G_s)$ with $f(u) \neq f(v)$ equals exactly the cost of $f_s$, which 
is not larger than $M_{\Gamma(s)}(\rho^\ast(\Gamma(s)), f^{\beta,\ast}_{\adh(s)})$. 
Point~\ref{i:ac:cmpl} for the cell $M[t, I, f^\adh]$ follows.

Let $f$ be a function as in Point~\ref{i:ac:sound} for the cell $M[t, I, f^\adh]$. 
Define responsibility assignment $\rho$ as follows. Start with $\rho(\Gamma) = \emptyset$ for every $\Gamma \in \cnstr$.
For every $(i,j) \in I$, proceed as follows. By the properties of Point~\ref{i:ac:sound}, there exists $w \in V(G_t) \cap T_i$ with $f(w) = j$.
If $w \in \beta(t)$, then we insert $(i,j)$ into $\rho(\Gamma(i, j, w))$. Otherwise, $w \in V(G_s) \setminus \adh(s)$ for some $s \in Z$; we insert then
$(i,j)$ into $\rho(\Gamma(s))$. Clearly, $\rho$ is a responsibility assignment and $\bigcup_{\Gamma \in \cnstr} \rho(\Gamma) = I$.

We define $f^\beta = f|_{\beta(t)}$ and we claim that $f^\beta$ is a feasible function. Clearly, $f^\beta$ extends $f^\adh$.
To show that $f^\beta$ is unbreakable-consistent, observe that the fact that $\beta(t)$ is $(k,k)$-edge-unbreakable in $G$ with conjunction with the assumption
that the cost of $f$ is at most $k$ implies that for every partition $[p] = J_1 \uplus J_2$ either $(f^\beta)^{-1}(J_1)$ or $(f^\beta)^{-1}(J_2)$ is of size at most $k$.
If $|(f^\beta)^{-1}(\jmaj(f^\beta))| > k$, then this implies that $|(f^\beta)^{-1}([p] \setminus \{\jmaj(f^\beta)\})| \leq k$, as desired.
Otherwise, we have $|(f^\beta)^{-1}(i)| \leq k$ for every $i \in [p]$, and unless $|\beta(t)| \leq 3k$ there exist a partition $[p] = J_1 \uplus J_2$ such that
$|(f^\beta)^{-1}(J_j)| > k$ for $j=1,2$, a contradiction. This proves that $f^\beta$ is unbreakable-consistent, and thus a feasible function.

To finish the proof of the claim, it suffices to show that for the above defined $\rho$ and $f^\beta$ the cost $\fcost(\rho, f^\beta)$ is at most the cost of $f|_{V(G_t)}$ in $G_t$.
Note that the cost of $f|_{V(G_t)}$ in $G_t$ equals the cost of $f|_{\beta(t)}$ in $G_t[\beta(t)]$ plus the sum over all $s \in Z$ of the cost of $f|_{V(G_s)}$ in $G_s$.
Consider the types of constraints one by one.

First, consider a child constraint $\Gamma(s)$ for some $s \in Z$. By the definition of $\rho(\Gamma(s))$, $f$ satisfies the requirements for Point~\ref{i:ac:sound}
for the cell $M[s, \rho(\Gamma(s)), f|_{\adh(s)}]$. Consequently, the cost of $f|_{V(G_s)}$ in $G_s$ is not smaller than 
$M[s, \rho(\Gamma(s)), f|_{\adh(s)}] = M_{\Gamma(s)}(\rho(\Gamma(s)), f|_{\adh(s)})$. 
The latter term is exactly the contribution of the constraint $\Gamma(s)$ to the cost $\fcost(\rho,f^\beta)$.

Second, consider an edge $e = uv \in E(G_t[\beta(t)])$. Note that $\rho(\Gamma(e)) = \emptyset$ by definition while $M_{\Gamma(e)}[ \emptyset, f|_e]$ equals $1$
if $f(u) \neq f(v)$ and $0$ otherwise. Thus, the contribution of the edge $e$ towards the cost of $f|_{\beta(t)}$ in $G_t[\beta(t)]$ is the same as the contribution
of $\Gamma(e)$ towards $\fcost(\rho,f^\beta)$.

Third, consider a terminal constraint $\Gamma(i,j, w)$. If $\rho(\Gamma(i,j, w)) = \emptyset$, then the contribution of this constraint to $\fcost(\rho,f^\beta)$ is $0$.
Otherwise, we have $\rho(\Gamma(i,j, w)) = \{(i,j)\}$ and this can only happen if $f(w) = j$ and $w \in T_i$. 
By definition, this implies $M_{\Gamma(i,j,w)}(\{(i,j)\}, f|_{\{w\}}) = 0$, and again the contribution of this constraint to $\fcost(\rho,f^\beta)$ is $0$.

This concludes the proof that the cost of $f|_{V(G_t)}$ in $G_t$ is not smaller than $\fcost(\rho, f^\beta)$ and concludes the proof of the claim.
\cqed\end{proof}

By Claim~\ref{cl:ac:feas}, it suffices to minimize $\fcost(\rho, f^\beta)$ over responsibility assignments $\rho$ and feasible functions $f^\beta$ such that $\bigcup_{\Gamma \in \cnstr} \rho(\Gamma) = I$.
Assume that this minimum is at most $k$ and fix some minimizing arguments $(\rho^\ast, f^{\beta,\ast})$.
We say that a constraint $\Gamma$ is \emph{touched} if either $\rho^\ast(\Gamma) \neq \emptyset$ or $f^{\beta,\ast}|_{X_\Gamma}$ is not a constant function. We claim the following
\begin{claim}\label{cl:ac:touched}
There are at most $|I|+k$ touched constraints.
\end{claim}
\begin{proof}
By the assumption that the values $\rho^\ast(\Gamma)$ are pairwise disjoint, there are at most $|I|$ constraints
with $\rho^\ast(\Gamma) \neq \emptyset$. 
Fix a constraint $\Gamma$ such that $f^{\beta,\ast}|_{X_\Gamma}$ is not a constant function. To finish the proof of the claim
it suffices to show that $\Gamma$ contributes at least $1$ to the sum in~\eqref{eq:ac:cost}. 

Clearly, $\Gamma$ is not a terminal constraint. If $\Gamma$ is an edge constraint, $\Gamma = \Gamma(e)$
for some $e = uv \in E(G_t[\beta(t)])$, then as $X_{\Gamma(e)} = e$ we have $f^{\beta,\ast}(u) \neq f^{\beta,\ast}(v)$.
Hence, $\Gamma$ contributes $1$ to the sum in~\eqref{eq:ac:cost}. 

Finally, if $\Gamma = \Gamma(s)$ for some $s \in Z$, then $\Gamma$ contributes $M[s, \rho^\ast(\Gamma(s)), f^{\beta,\ast}|_{\adh(s)}]$
to the sum in~\eqref{eq:ac:cost}. By Point~\ref{i:ac:cmpl}, there exists an extension $f_s$ of $f^{\beta,\ast}|_{\adh(s)}$ to $V(G_s)$ of cost at most $M[s,\rho^\ast(\Gamma(s)),f^{\beta,\ast}|_{\adh(s)}]$. By compactness and the fact that $f^{\beta,\ast}|_{\adh(s)}$ is not a constant function, any such extension has a positive cost in $G_s$. This finishes the proof of the claim.
\cqed\end{proof}

Let $A^\ast = (f^{\beta,\ast})^{-1}([p] \setminus \{\jmaj(f^{\beta,\ast})\})$. Since $f^{\beta,\ast}$ is unbreakable-consistent, we have $|A^\ast| \leq 3k$. 
Let $B^\ast$ be the set of all vertices $v \in \beta(t) \setminus A^\ast$ for which there exists a touched constraint
$\Gamma \in \cnstr$ with $v \in X_\Gamma$. 
By Claim~\ref{cl:ac:touched}, we have that $|B^\ast| \leq k(k+|I|)$.

Our application of Lemma~\ref{lem:random2} is encapsulated in the following claim.
\begin{claim}\label{cl:ac:random}
In time $2^{\Oh((k+|I|) \log (k+|I|))} n^{\Oh(1)}$ one can generate a family $\randfamily$
of pairs $(j,g)$ where $j \in [p]$ and $g \colon \beta(t) \to [p]$.
The family is of size $2^{\Oh((k+|I|) \log (k+|I|))} n$ and
there exists $(j,g) \in \randfamily$ such that $j = \jmaj(f^{\beta,\ast})$ and $g$ agrees with $f^{\beta,\ast}$ on
$A^\ast \cup B^\ast$. Furthermore, for every element $(j,g) \in \randfamily$, $g$ extends $f^\adh$.
\end{claim}
\begin{proof}
First, we iterate over all choices of nonnegative integers $j \in [p]$ and $a_1,a_2, \ldots, a_{p}$
such that $\sum_{i=1}^{p} a_i \leq 3k+k(k+|I|)$ and $\sum_{i \in [p] \setminus \{j\}} a_i \leq 3k$. 
Clearly, there are $2^{\Oh(k)}(|I|+k)$ options as $p \leq k+1$.
For a fixed choice of $j$ and $(a_i)_{i=1}^{p}$ we invoke Lemma~\ref{lem:random2}
for $U = \beta(t) \setminus \adh(t)$ and integers $r = p$, $(a_i)_{i=1}^{p}$, obtaining a family
$\randfamily'$. For every $g' \in \randfamily'$, we insert $(j, g' \cup f^\adh)$
  into $\randfamily$.

The bound on the size of the output family $\randfamily$ follows from bound of Lemma~\ref{lem:random2}
and the inequality $\log^k n \leq 2^{\Oh(k \log k)} n$. 
Finally, note that the promised pair $(j,g)$ will be generated for the choice of
$j = \jmaj(f^{\beta,\ast})$ and $a_i = |(f^{\beta,\ast})^{-1}(i) \cap (A^\ast \cup B^\ast)\setminus \adh(t)|$ for every $i \in [p]$.
\cqed\end{proof}

Invoke the algorithm of Claim~\ref{cl:ac:random}, obtaining a family $\randfamily$. We say
that $(j,g) \in \randfamily$ is \emph{lucky} if $f^{\beta,\ast}$ exists, $j = \jmaj(f^{\beta,\ast})$, and $f^{\beta,\ast}$
agrees with $g$ on $A^\ast \cup B^\ast$. 

Consider now an auxiliary graph $H$ with $V(H) = \beta(t)$ and $uv \in E(H)$ if and only if
$u \neq v$ and there exists a constraint $\Gamma \in \cnstr$ with $u,v\in X_\Gamma$.
By the definition of constraints $\Gamma(e)$, we have that $G_t[\beta(t)]$ is a subgraph of
$H$, but in $H$ we also turn all adhesions $\adh(s)$ for $s \in Z$ into cliques.

For a pair $(j,g) \in \randfamily$, we define $\Sigma(j,g) := g^{-1}([p] \setminus \{j\})$.
Observe the following.

\begin{claim}\label{cl:ac:stains}
For a lucky pair $(j,g)$,
every connected component of $H[\Sigma(j,g)]$ is either completely contained
in $(f^{\beta,\ast})^{-1}(j)$ or completely contained in $A^\ast$.
\end{claim}
\begin{proof}
Assume the contrary, and let $uv \in E(H)$ be an edge violating the condition. By symmetry, assume $f^{\beta,\ast}(u) \neq j$.
Then, since $(j,g)$ is lucky, we have that $u \in A^\ast$. Hence $v \notin A^\ast$, that is, $f^{\beta,\ast}(v) = j$. 
By the definition of $H$, there exists a constraint $\Gamma \in \cnstr$ with $u,v \in X_{\Gamma}$ (either $\Gamma = \Gamma(uv)$ or
$\Gamma = \Gamma(s)$ for some $s \in Z$ with $u,v \in \adh(s)$). Consequently, $\Gamma$ is touched, and $v \in B^\ast$.
However, then from the fact that $(j,g)$ is lucky it follows that $g(v) = j$, a contradiction.
\cqed\end{proof}
Claim~\ref{cl:ac:stains} motivates the following approach. We try every pair $(j,g) \in \randfamily$,
and proceed under the assumption that $(j,g)$ is lucky. 
We inspect connected components of $H[\Sigma(j,g)]$ one-by-one and either try to 
set a candidate for function $f^{\beta,\ast}$ to be equal to $g$ or constantly equal $j$ on the component.
Claim~\ref{cl:ac:stains} asserts that one could construct $f^{\beta,\ast}$ in this manner.
The definition of $H$ implies that for every constraint $\Gamma \in \cnstr$, the set $X_\Gamma$ intersects at most one component
of $H[\Sigma(j,g)]$. This gives significant
independence of the decisions, allowing us to execute a multidimensional knapsack-type dynamic
programming, as between different connected components of $H[\Sigma(j,g)]$ we need only to keep intermediate values of the
cost and the union of values of the constructed responsibility assignment. 

We proceed with formal arguments. 
For every $(j,g) \in \randfamily$, we proceed as follows.
Let $C_1,C_2,\ldots,C_\ell$ be the connected components of $H[\Sigma(j,g)]$ and for $1 \leq i \leq \ell$, 
let $\cnstr_i$ be the set of constraints $\Gamma \in \cnstr_i$ with
$X_\Gamma \cap C_i \neq \emptyset$.
By the definition of $H$,
  the sets $\cnstr_i$ are pairwise disjoint.
Let $\cnstr_0 = \cnstr \setminus \bigcup_{i=1}^\ell \cnstr_i$ be the remaining constraints, i.e., the constraints $\Gamma \in \cnstr$ with $X_\Gamma \cap \Sigma(j,g) = \emptyset$.
It will be convenient for us to denote $C_0 = \emptyset$ to be a component accompanying
$\cnstr_0$.
For $J \subseteq \{0,1,2,\ldots,\ell\}$, denote $g_J$ to be the function $g$ modified as follows: 
for every $i \in \{0,1,\ldots,\ell\} \setminus J$ and every $v \in C_i$ we set $g_J(v) = j$.
Furthermore, let $J^\adh \subseteq \{1,2,\ldots,\ell\}$ be the set of these indices $j$
for which $C_j \cap \adh(t) \neq \emptyset$.
For $0 \leq i \leq \ell$, denote $\cnstr_{\leq i} = \bigcup_{j \leq i} \cnstr_j$.

Let $m = |\cnstr|$. We order constraints in $\cnstr$ as $\Gamma_1,\Gamma_2,\ldots,\Gamma_m$
according to which set $\cnstr_i$ they belong to. That is, if $\Gamma_a \in \cnstr_i$,
$\Gamma_b \in \cnstr_j$ and $i < j$, then $a < b$.
For $0 \leq i \leq \ell$,
let $\lasta(i) = |\cnstr_{\leq i}|$, that is, $\Gamma_{\lasta(i)} \in \cnstr_{\leq i}$
but $\Gamma_{\lasta(i)+1}$ does not belong to $\cnstr_{\leq i}$ (if exists).
Furthermore, let $\lasta(-1) = 0$.

Let $0 \leq i \leq \ell$ and let $\lasta(i-1) \leq a \leq \lasta(i)$.
An \emph{$a$-partial responsibility assignment} $\rho$ is a responsibility assignment
defined on constraints $\Gamma_1, \ldots, \Gamma_a$ such that $\bigcup_{b=1}^a \rho(\Gamma_b) \subseteq I$.
For a set $J \subseteq \{0,1,\ldots,i\}$ containing $J^\adh \cap \{0,1,\ldots,i\}$
and an $a$-partial responsibility assignment $\rho$, we define the cost of $J$ and $\rho$ as
$$\fcost_{i,a}(J, \rho) = \sum_{b=1}^a M_{\Gamma_b}(\rho(\Gamma_b), g_J|_{X_{\Gamma_b}}).$$
Furthermore, let $k^\beta = k$ if $|\beta(t)| > 3k$ and $k^\beta = |\beta(t)|$ otherwise.
The goal of our dynamic programming algorithm is to compute, for every $-1 \leq i \leq \ell$, $0 \leq k^\bullet \leq k^\beta$,
 and $I^\bullet \subseteq I$
a value $Q_i[k^\bullet, I^\bullet]$ that equals a minimum possible cost $\fcost_{i,\lasta(i)}(J,\rho)$ over all $J \subseteq \{0,1,\ldots,i\}$ containing $J^\adh \cap \{0,1,\ldots,i\}$ with $|\bigcup_{i' \in J} C_{i'}| \leq k^\bullet$,
and $\lasta(i)$-partial responsibility assignment $\rho$
with $\bigcup_{b=1}^{\lasta(i)} \rho(\Gamma_b) = I^\bullet$.
The next claim shows that it suffices.
\begin{claim}\label{cl:ac:dp}
We have $Q_\ell[k^\beta, I] \geq \fcost(\rho^\ast, f^{\beta,\ast})$. 
Furthermore, if $\fcost(\rho^\ast,f^{\beta,\ast}) \leq k$ and $(j,g)$ is lucky, then 
$Q_\ell[k^\beta,I] = \fcost(\rho^\ast, f^{\beta,\ast})$.
\end{claim}
\begin{proof}
For the first claim, let $(J, \rho)$ be the witnessing arguments for the value $Q_\ell[k^\beta, I]$.
Note that $\rho$ is a responsibility assignment. Furthermore, $g_J$ is a feasible function: it extends $f^\adh$
due to the requirement $J^\adh \subseteq J$ and it is unbreakable-consistent
due to the requirement $|\bigcup_{i' \in J} C_{i'}| \leq k^\beta$. 
The first claim follows from the minimality of $(\rho^\ast, f^{\beta,\ast})$.

For the second claim, by Claim~\ref{cl:ac:stains} we know that on every connected component $C_i$,
the function $f^{\beta,\ast}$ either equals $g$ or is constant at $j$. 
Let $J^\ast$ be the set where the first option happens; note that $J^\adh \subseteq J^\ast$
and $g_{J^\ast} = f^{\beta,\ast}$. Consequently, $\fcost_{\ell,m}(J^\ast, \rho^\ast) = \fcost(\rho^\ast, f^{\beta,\ast})$. 
Finally, as $f^{\beta,\ast}$ is unbreakable-consistent, we have $|\bigcup_{i' \in J^\ast} C_{i'}| \leq k^\beta$.
The claim follows from the minimality of $Q_\ell[k^\beta, I]$.
\cqed\end{proof}

For the initialization step of our dynamic programming algorithm, note that for $i=-1$
we have $\lasta(i)=0$ and the minimization for $Q_{-1}[\cdot,\cdot]$
takes into account only $J = \emptyset$ and $\rho=\emptyset$.

Fix now $0 \leq i \leq \ell$, we are to compute values $Q_i[\cdot,\cdot]$. 
To this end, we use another layer of dynamic programming.
For $\lasta(i-1) \leq a \leq \lasta(i)$, $0 \leq k^\bullet \leq k^\beta$, and $I^\bullet \subseteq I$
we define $Q_{i,a}^\in[k^\bullet, I^\bullet]$ to be a minimum possible cost
$\fcost_{i,a}(\rho,J)$ over all $J \subseteq \{0,1,2,\ldots,i\}$ containing
$J^\adh \cap \{0,1,2,\ldots,i\}$ satisfying that $i\in J$ and
$|\bigcup_{i' \in J} C_{i'}| \leq k^\bullet$, and all $a$-partial responsibility assignments
$\rho$ with $\bigcup_{b=1}^a \rho(\Gamma_b) = I^\bullet$.

Similarly we define $Q_{i,a}^{\notin}[k^\bullet, I^\bullet]$ with the requirement that $i \notin J$.
This definition is applicable only if $i \notin J^\adh$; for indices $i \in J^\adh$ we do not compute the
values $Q_{i,a}^{\notin}$.

Since $Q_{i,\lasta(i-1)}^\in$, $Q_{i,\lasta(i-1)}^{\notin}$, and $Q_{i-1}$ use all constraints up to $\lasta(i-1)$,
we have
$$Q_{i,\lasta(i-1)}^\in[k^\beta + |C_i|, I^\bullet] = Q_{i,\lasta(i-1)}^{\notin}[k^\beta, I^\bullet] = Q_{i-1}[k^\beta, I^\bullet].$$
Here, and in subsequent formulae, we assume that a value of a cell $Q_{i,a}^\in[\cdot,\cdot]$
or $Q_i[\cdot,\cdot]$ equals $0$ if the first argument is negative.
The above formula serves as the initialization step for computing values $Q_{i,a}^\in[\cdot,\cdot]$
and $Q_{i,a}^{\notin}[\cdot,\cdot]$.

For a single computation step, fix $\lasta(i-1) < a \leq \lasta(i)$.
The definitions of $Q_{i,a}^\in[\cdot]$ and $Q_{i,a-1}^\in[\cdot,\cdot]$ differ only in the requirement
to define $f(\Gamma_a)$. Furthermore, $X_{\Gamma_a}$ intersects only the component $C_i$
(if $i > 0$), while $i$ is required to be contained in $J$ in the definition
of both $Q_{i,a}^\in[\cdot,\cdot]$ and $Q_{i,a-1}^\in[\cdot,\cdot]$. Consequently,
$$Q_{i,a}^\in[k^\bullet, I^\bullet] = \min_{I' \subseteq I^\bullet} \left(Q_{i,a-1}^\in[k^\bullet, I^\bullet \setminus I'] + M_{\Gamma_a}(I', g|_{X_{\Gamma_a}})\right).$$
  Similarly, for $i \notin J^\adh$ we have  
$$Q_{i,a}^{\notin}[k^\bullet, I^\bullet] = \min_{I' \subseteq I^\bullet} \left(Q_{i,a-1}^{\notin}[k^\bullet, I^\bullet \setminus I'] + M_{\Gamma_a}(I', X_{\Gamma_a} \times \{j\})\right).$$
  Finally, we set for $i \in J^\adh$:
$$Q_i[k^\bullet, I^\bullet] = Q_{i,\lasta(i)}[k^\bullet, I^\bullet],$$
and for $i \notin J^\adh$:
$$Q_i[k^\bullet, I^\bullet] = \min \left( Q_{i,\lasta(i)}^\in[k^\bullet, I^\bullet], Q_{i,\lasta(i)}^{\notin}[k^\bullet, I^\bullet]\right)$$

The above dynamic programming algorithm computes the values $Q_\ell[\cdot,\cdot]$ in time $3^{|I|} \cdot n^{\Oh(1)}$.
By Claims~\ref{cl:ac:feas} and~\ref{cl:ac:dp}, we can take $M[t, I, f^\adh]$
to be the minimum value of $Q_\ell[k^\beta, I]$ encountered over all choices of $(j,g) \in \randfamily$.
Claim~\ref{cl:ac:M} shows that the above suffices to conclude the proof of Theorem~\ref{thm:auxcut}.
\end{proof}

\subsubsection{Steiner Cut}

\begin{proof}[Proof of Theorem~\ref{thm:steinercut}.]
Let $(G,T,p,k)$ be an input to \textsc{Steiner Cut} and let $n = |V(G)|$. 

First, we observe that it is sufficient to solve \textsc{Steiner Cut} with the additional assumption that $G$ is connected.
Indeed, in the general case we can proceed as follows. First, we discard all connected components of $G$
that are disjoint with $T$. 
Second, for every connected component $G'$ of $G$ and every $0 \leq k' \leq k$, $0 \leq p' \leq p$, we solve \textsc{Steiner Cut}
on the instance $(G', T \cap V(G'), p', k')$.
Finally, the results of these computations allow us to solve the input instance by a straightforward knapsack-type
dynamic programming algorithm.

Thus, we assume that $G$ is connected. In particular, we can assume that $p \leq k+1$ as otherwise the input
instance is a trivial no-instance.
With these assumptions, it is straightforward to observe that the input \textsc{Steiner Cut} instance $(G,T,p,k)$
is equivalent to \textsc{Auxiliary Cut} instance $(G,p,k,(T),\{1\} \times [p])$, that is, we set $\tau = 1$,
   $T_1 = T$, and $I = \{1\} \times [p]$. Theorem~\ref{thm:steinercut} follows.
\end{proof}

\subsubsection{Steiner Multicut}

\begin{proof}[Proof of Theorem~\ref{thm:multicut}.]
Let $(G,(T_i)_{i=1}^t,k)$ be an input to \textsc{Steiner Multicut} and let $n = |V(G)|$.

First, we observe that it is sufficient to solve \textsc{Steiner Multicut} with the additional assumption that $G$ is connected.
Indeed, in the general case we can proceed as follows. First, we discard all sets $T_i$ that intersect more than one connected component of $G$. Second, for every connected component $G'$ of $G$ and every $0 \leq k' \leq k$, we solve \textsc{Steiner Multicut}
on graph $G$, terminal sets $\{T_i~|~T_i \subseteq V(G')\}$, and budget $k'$.
Finally, the results of these computations allow us to solve the input instance by a straightforward knapsack-type
dynamic programming algorithm.
Thus, henceforth we assume that $G$ is connected.
In particular, any function $f : V(G) \to \mathbb{Z}$ of cost at most $k$ can attain at most $k+1$ distinct values. Let $p = k+1$.

If $G$ is connected, then the input \textsc{Steiner Multicut} instance $(G,(T_i)_{i=1}^t,k)$ is a yes-instance
if and only if there exists $f \colon V(G) \to [p]$ of cost at most $k$ such that for every $i \in [t]$, $f$ is not constant on $T_i$.
Thus, to reduce \textsc{Steiner Multicut} to \textsc{Auxiliary Multicut}, it suffices to guess, for every $i \in [t]$, two distinct values from $[p]$ than $f$ attains on $T_i$.

More formally, we iterate over all sequences $(M_i)_{i=1}^t$ such that for every $i \in [t]$ we have $M_i \subseteq [p]$
and $|M_i| = 2$. There are $2^{\Oh(t \log t)}$ such sequences. 
For every sequence $(M_i)_{i=1}^t$ we define $I = \bigcup_{i=1}^t \{i\} \times M_i$ and invoke
the algorithm of Theorem~\ref{thm:auxcut} for \textsc{Auxiliary Cut} instance $(G,p,k,(T_i)_{i=1}^t,I)$. 
By the discussion above, the input \textsc{Steiner Multicut} instance is a yes-instance if and only
if at least one of the generated \textsc{Auxiliary Cut} instances is a yes-instance. 
Since in every call we have $|I| = 2t$, Theorem~\ref{thm:multicut} follows.
\end{proof}

\section{Conclusions}

In this paper we presented an algorithm that constructs a tree decomposition as in~\cite{minbisection-STOC}, but faster, with better parameter bounds, and arguably simpler. 
This allowed us to improve the parametric factor in the running time bounds for a number of parameterized algorithms for cut problems to $2^{\Oh(k \log k)}$.
We conclude with a few open questions.

First, can we construct the decomposition as in this paper in time $2^{\Oh(k)} n^{\Oh(1)}$? The current techniques seem to fall short of this goal.


The problem of finding a lean witness for a bag that is not $(k,k)$-unbreakable can be generalized to the following {\sc{Densest Subhypergraph}} problem:
given a hypergraph $H=(V,E)$, where the same hyperedge may appear multiple times (i.e. it is a multihypergraph), and an integer $k$,
one is asked to find a set $X\subseteq V$ consisting of $k$ vertices that maximizes the number of hyperedges contained in $X$.
A trivial approximation algorithm 
is to just find the hyperedge of size at most $k$ that has the largest multiplicity and cover it; this achieves the approximation factor of $2^k$.
We conjecture that it is not possible to obtain a $2^{o(k)}$-approximation in FPT time for this problem.

Third, we are not aware of any conditional lower bounds for \textsc{Minimum Bisection} that are even close to our $2^{\Oh(k \log k)} n^{\Oh(1)}$ upper bound.
To the best of our knowledge, even an algorithm with running time $2^{o(k)}\cdot n^{\Oh(1)}$ is not excluded under the Exponential Time Hypothesis using the known {\sc{NP}}-hardness reductions.
It would be interesting to understand to what extent the running time of our algorithm is optimal.


\bibliography{lean}

\appendix

\section{Making a tree decomposition compact: Proof of Lemma~\ref{lem:compactification}}\label{app:compactification}
\begin{proof}[Proof of Lemma~\ref{lem:compactification}.]
We assume that $(T,\beta)$ is rooted by rooting it at any vertex.
By performing first the cleanup operation, we ensure that $T$ has at most $n=|V(G)|$ edges.


Having cleaned $(T,\beta)$, we proceed to the construction of $(\wh{T},\wh{\beta})$.
We gradually modify the decomposition $(T,\beta)$ as long as it is not compact, and output it as $(\wh{T},\wh{\beta})$ once the compactness is achieved.
Suppose then that $(T,\beta)$ is still not compact. Then there exists a node $t$ that violates
the definition of compactness; we proceed as follows. 

First, observe that the properties of a tree decomposition imply that $N_G(\alpha(t)) \subseteq \adh(t)$. 
If there exists $v \in \adh(t) \setminus N_G(\alpha(t))$, then we can delete $v$ from the bag of $t$ and all its descendants.
This operation strictly decreases the sum of sizes of all bags, while every bag and adhesion can only be replaced by its subset.
Hence, we may apply it exhaustively, to all nodes $t$ violating compactness in this manner,
thus arriving in polynomial time at the situation where we can assume that $N_G(\alpha(t)) = \adh(t)$ for every $t\in V(T)$.

Therefore, if now a node $t$ violates the compactness, then we have that $G[\alpha(t)]$ is disconnected.
Let $(A,B)$ be an edge cut of $G[\alpha(t)]$ of zero order with $A,B \neq \emptyset$.
Since $\adh(t) \neq \emptyset$, $t$ is not the root of $T$; let $s$ be the parent of $t$.
Let $T_t$ be the subtree of $T$ rooted in $t$. We make two copies of $T_t$, $T_t^A$ and $T_t^B$,
and define the tree $T'$ as $T$ with $T_t$ replaced with $T_t^A$ and $T_t^B$, both with roots
being children of $s$. Furthermore, we define $\beta'\colon V(T') \to 2^{V(G)}$ as:
$\beta'(p) = \beta(p) \cap (A \cup \adh(t))$ for $p \in V(T_t^A)$,
$\beta'(p) = \beta(p) \cap (B \cup \adh(t))$ for $p \in V(T_t^B)$ and
$\beta'(p) = \beta(p)$ otherwise.
It is straightforward to verify that $(T',\beta')$ is a tree decomposition of $G$.
Moreover, every bag of $(T',\beta')$ is a subset of a bag of $(T,\beta)$, and similarly for adhesions.
Finally, we clean up $(T',\beta')$ so that it has at most $n$ edges using the same operation as in the first paragraph.
We conclude by replacing $(T,\beta)$ with $(T',\beta')$ and applying the reasoning again.

To bound the number of iterations of the presented procedure, consider the potential $\sum_{t \in V(T)} |\alpha(t)|^2$.
Note that its value is integral and initially bounded by $\Oh(n^3)$.
It is straightforward to verify that the potential strictly decreases in every iteration, so the procedure stops after at most $\Oh(n^3)$ iterations.
Since every iteration can be performed in polynomial time, the whole algorithm works in polynomial time as well.
\end{proof}

\end{document}